\title[]{Strategy-proof Budgeting via a VCG-like Mechanism}
\author{Jonathan Wagner, Reshef Meir}
\begin{abstract}
We present a strategy-proof public goods budgeting mechanism where agents determine both the total volume of expanses and the specific allocation. It is constructed as a modification of VCG to a non-typical environment, namely where we do not assume quasi-linear utilities nor direct revelation. We further show that under plausible assumptions it satisfies strategyproofness in strictly dominant strategies, and consequently implements the social optimum as a Coalition-Proof Nash Equilibrium. A primary (albeit not an exclusive) motivation of our model is Participatory Budgeting, where members of a community collectively decide the spending policy of public tax dollars. While incentives alignment in our mechanism, as in classic VCG, is achieved via individual payments we charge from agents, in a PB context that seems unreasonable. Our second main result thus provides that, under further specifications relevant in that context, these payments will vanish in large populations. In the last section we expand the mechanism’s definition to a class of mechanisms in which the designer can prioritize certain outcomes she sees as desirable. In particular we give the example of favoring equitable/egalitarian allocations.                      
\end{abstract}
\declaretheorem[style=definition]{assumption}
\declaretheorem[style=definition]{observation}
\DeclareMathOperator*{\argmax}{arg\,max}
\DeclareMathOperator*{\argmin}{arg\,min}
\def\R{\mathbb{R}}
\def\N{\mathbb{N}}
\def\M{\mathcal{M}}
\def\I{\mathcal{I}}
\def\V{V}
\def\D{\mathcal{D}}
\def\hx{\hat{x}}
\def\hM{\hat{\mathcal{M}}}
\def\ha{\hat{\alpha}}
\def\ba{\bar{\alpha}}
\def\hT{\hat{\Theta}}
\def\ff{\mathcal{C}}
\def\hg{\hat{g}}
\def\hP{\hat{P}}
\def\he{\hat{e}}
\def\th{\hat{t}}
\def\ds{\Delta^m \times [-\frac{B_0}{n},\infty)}
\def\tr{[-\frac{B_0}{n},\infty)}
\def\td{\Delta^m \times \R_{++}}
\newenvironment{rtheorem}[1]{\medskip\noindent\textbf{Theorem~\ref{#1}. }\begin{itshape}}{\end{itshape}\medskip}
\newenvironment{rlemma}[1]{\medskip\noindent\textbf{Lemma~\ref{#1}. }\begin{itshape}}{\end{itshape}\medskip}
\newenvironment{rcorollary}[1]{\noindent\textbf{Corollary~\ref{#1}. }\begin{itshape}}{\end{itshape}}
\newenvironment{rexamplecom}[2]{\smallskip\textit{Example~\ref{#1} (#2). }}{\medskip}
\def\I{\mathcal{\large I}}
\def\1q2{\frac{1}{2}}
\newcommand{\newpar}[1]{\vspace{-0mm}\paragraph{\textbf{#1.}}}
\begin{document}

\begin{titlepage}

\maketitle

\end{titlepage}

\section{Introduction}
We study a model where a population of $n$ agents face the decision of funding several pubic goods or investments that serve their collective objectives. Formally, we start with an available budget of $B_0 \geq 0$ that should be allocated among $m\geq 1$ different alternatives. The \emph{budget decision} we seek is a pair $(x,t)$ where  $t \in \tr$ is a monetary sum ("tax") that each agent adds to (or subtracts from) the budget, and  $x \in \Delta^m:=\{x \in R^n | x_j \geq 0\ \forall j,\ \sum_j x_j =1\}$ represents the allocation of the resulting budget $B_t = B_0 +nt$ among the $m$ alternatives. Note that we allow $t<0$, meaning that some of the initial budget $B_0$ will be distributed equally among agents rather than fund public investments. We are interested in constructing a collective decision mechanism to which every agent submits her preferred budget decision, and one that is incentive compatible (IC) particularly. Table~\ref{table:profile_ex} demonstrates how agents might report  to such mechanism their preferences for budget allocation among three municipal services.
Note e.g. that agent $a$ suggests an individual payment of $t=20$ each, and so
to allocate a total of $B_0+3t=100+3\cdot20=160$. Also note that agents $b$
and $c$ propose the same normalized allocation $x_b=x_c \in \Delta^3$ but
differ in their preferred tax.

\begin{table}[t]
    \centering
   \begin{tabular}{l|c|c|c|c}
& Tax ($t$) & Education $(x_1)$ & Parks $(x_2)$ & Transport $(x_3)$ \\
\hline
agent~$a$   & 0 & 60 (0.6) & 30 (0.3) & 10 (0.1) \\
agent~$b$  & 20  & 32 (0.2)& 48 (0.3) & 80 (0.5)\\
agent~$c$ & -20 & 8 (0.2) & 12 (0.3) & 20 (0.5) \\
\end{tabular}\vspace{.2cm}

    \caption{An example voting profile for three alternatives and three agents, with an initial budget $B_0=100$.\vspace{-4mm}}
    \label{table:profile_ex}
\end{table}

\subsection{Motivation and Main Results}
Incentives alignment has long been a major interest in Social Choice and Mechanism Design research. While Gibbard \& Satterthwaite's Theorem \cite{gibbard1973manipulation,satterthwaite1975strategy} provides a negative result for the most general setup, positive results do exist when the preference space is contracted somehow (mainly to singled-peaked~\cite{nehring2007structure}), or when monetary transfers that the mechanism charges from agents are introduced. In the latter case, the VCG mechanism \cite{clarke1971multipart} is the canonical model and moreover, Roberts' Theorem \cite{roberts1979characterization}  shows that for a general preferences domain, any strategy-proof mechanism is in a sense a generalized version of the VCG principle. However, VCG is built upon an assumed model that cannot apply to every scenario. Most importantly, it depends crucially on underlying quasi-linear utilities, meaning that the overall satisfaction of agent $i$ when the mechanism outputs the decision $\Omega$ and charges her with a payment $p_i$ is expressed as 
$u_i(\Omega,p_i)=v_i(\Omega)-p_i$ where $v_i(\Omega)$ is the value she attributes to $\Omega$. While plausible in many economic situations, quasi-linearity is in particular violated if agents do not  asses their benefits from different outcomes in monetary terms (making the subtraction of $p_i$ from $v_i(\Omega)$ a meaningless expression), whether because these are purely non-financial or just difficult to quantify. When we consider investments in public goods, that is likely to be the case. 
Furthermore, VCG is a \emph{'Direct Revelation'} mechanism in which agents explicitly report the full description of their preferences~\cite{nisan2007algorithmic}. That is reasonable when, for example, we ask agents to specify a value (that is, the maximum price they are willing to pay) for each of the items offered in an auction. In our case the space of optional outcomes is a multidimensional continuum and the 'value' is quite an abstract concept, non-linear in particular. As should be obvious when we later introduce our model to its full details, a direct revelation mechanism becomes unlikely in that context and we thus only collect optimal allocations,\footnote{Possibly proceeded by a few follow-up questions presented to agents, See \ref{sec_mech}.} as demonstrated in Table~\ref{table:profile_ex}.  

 \newpar{Contribution and paper structure} We  show that for the budgeting problem described above we can construct a 'VCG-like' mechanism all the same. The key idea we will use is that while we can no longer justify the simple quasi-linear relation between the utility in the outcome and that of the charged payment, the fact that the decision space itself involves monetary transfers enables revealing of the true relation between these two in each agents' perspective. At the begining of Section \ref{sec_mech} show how full information can in fact be extracted from the preferences that agents report. Section \ref{sec_USVCG} introdoces our proposed mechanism, and in \ref{sec_IC} we show that under certain conditions it is IC in strictly dominant strategies and consequently Coalition-proof \cite{bernheim1987coalition}. Finally, in some applications of the model collecting money from agents (that is, money paid "to the mechanism", on top of the tax $t$ that is a part of the chosen outcome and finances shared expenditures) serves well the purpose of aligning incentives, however may not be reasonable on its own merits. Thus, our second and most technically challenging main result, presented in Section \ref{sec_pconv}, provides that, for a relevant subclass of utility functions, these payments become negligible in large populations . 
 
A few examples of environments where our mechanism may be useful are listed below, starting with the one standing at the center of our focus.

\newpar{Participatory Budgeting} Our model falls naturally within a Participatory Budgeting (PB) framework, where members of a community determine the allocation of their shared budget via some voting mechanism.\footnote{’The Participatory Budget Project’: https://www.participatorybudgeting.org/} While most PB implementations worldwide, and accordingly much of the PB literature, involve the allocation of a given budget among several public indivisible goods, several previous works were dedicated to divisible PB models  \cite{freeman2019truthful,goel2019knapsack,brandl2020funding,fain2016core,garg2019iterative}, where `projects' may correspond to  city departments, e.g. education, transportation, parks and recreation and so on. Our model moreover involves taxation in the collective decision, in which we find several advantages. Technically, our adjustment of VCG payments to non quasi-linear utilities is enabled solely due to that feature. Conceptually, it expands the scope of control delegated to society through direct democracy, and also brings a more valuable feedback to the designer on the true value that public expenditures create, rather than just a comparison between them.  Being our primary motivation, the terminology we use and our approach to modelling is mainly driven by the PB application, however intended to represent a wider range of real life economic situations. While we are aware that some of the assumptions we make may not be generically adequate in every possible instance of different environments too, the basic ideas extend to at least some of them. 
We thoroughly discuss our works' relations with previous models and results in that area in the next section of the introduction.

\newpar{Shared financing among nearby businesses} A group of nearby businesses (for example in the same shopping center) might cofinance some facilities and services that are more public or regional by nature, e.g. security, costumers' parking, shared dining areas, public activities that increase exposure, etc.  

\newpar{Environmental or other non-financial investments} That could apply to governments imposing an 'environmental (or other) tax' on firms or different countries deciding on their coordinated actions \cite{bjorvatn2002tax}. Other pursued goals might include border controls of neighbouring countries, for example. 

\newpar{Joint R\&D and Human Capital ventures} R \& D and Human Capital investments are typically long term and require considerable resources. Thus, firms in the same or related industries might benefit from joining forces in funding e.g. the training of required professions or the development of new technologies and methods. Such collaborations might scale from a small number of businesses deciding to cooperate, through a unionized industry and up to being run by the government state-wide.

\newpar{Non-monetary 'Currencies'}
As the concept of 'value' of different outcomes is more abstract in our model, it may also apply to situations where the investments themselves, and thereby the collected tax payments, are not necessarily monetary. Examples may include members in a community willing to dedicate an agreed amount of working hours within their community, or firms that allocate resources such as land (e.g. agricultural land dedicated to cooperative experimental planting), or technological resources such as storage space or computation force.\\

A variation of our model where we employ a heterogeneous taxation may also be relevant, especially to applications other than PB. For example, imposing higher contributions on wealthier countries in joint environmental investments. We discuss that in appendix \ref{appsec_heterotax} and show that our IC results extend to such setup. 
\subsection{Modeling Approaches in Divisible Participatory Budgeting}
Before presenting our own model, we discuss here the limitations of existing incentive compatibility concepts from the divisible PB literature. 

\newpar{Spatial models}
Several past works have studied incentive alignment in PB and divisible PB particularly {\cite{aziz2019fair,freeman2019truthful,goel2019knapsack,fain2016core,bhaskar2018truthful}}. Notably, two different works~\cite{freeman2019truthful,goel2019knapsack} presented incentive compatible (or 'strategyproof') mechanisms for a divisible PB scenario similar to ours. These works assume $\ell_1$-norm preferences in which each agent $i$ reports her optimal allocation $x^{(i)} \in \Delta^m$ to the mechanism, and her utility in any allocation $x$ is given by $u_i(x)=-\norm{x^{(i)}-x}_1$. Other $\ell_p$ norms~\cite{garg2019iterative}, or more generally single-peaked preferences \cite{moulin1980strategy,barbera1994characterization}, are well studied in social choice literature and known to allow for mechanisms with strong strategyproofness guarantees. This IC definition relies on the underlying assumption that the utility of an agent depends solely on the `distance' between the accepted outcome and her favoured alternative. Indeed, in the absence of a concrete way to measure the monetary value of decisions, minimizing the distance to a voter's ideal allocation is a reasonable solution.

However, we argue that when agents have concrete measurable utility from decisions, as in PB, the spatial approach may not adequately capture the preference of a voter, as the very reason that voters' optimal allocations are not typically uniform is that they value money spent on each alternative differently. In particular, minimizing the $\ell_1$ distance is not just suboptimal, but may in fact incentivize agents to manipulate .
To see why, consider  Voter~$a$ from Table~\ref{table:profile_ex}, with her ideal allocation $x_a = (0.6,0.3,0.1)$. Now, let us further assume that  $a$ is the parent to children under 18 years of age and the  three public goods considered are education systems, public parks and public transportation--- that she very rarely uses and thus does not value highly. Reasonably, that voter might strictly prefer the allocation $x'=(0.7,0.2,0.1)$ over $x''=(0.6,0.2,0.2)$ because while both are suboptimal to her, investing another $0.1$ of the budget in the education of her children may serve her interests better than investing it in a facility she rarely enjoys. 
However, under the $\ell_1$-distance model she is indifferent between the two, meaning that her incentive to manipulate the outcome from $x''$ to $x'$, when and if facing the opportunity, is not accounted for.\footnote{Indeed, concrete examples that show how the incentive compatibility in \cite{goel2019knapsack} and \cite{freeman2019truthful} might ``fail", in the above sense, are not difficult to construct. }
   
\newpar{Social choice vs. microeconomic models}
As explained in \cite{freeman2019truthful}, the solution concept that minimizes the $\ell_1$ distances from agents' ideal points  generalizes the one-dimensional median rule \cite{moulin1980strategy}. Similarly, most of the literature on divisible PB adopt or generalizes the same assumptions used for PB with indivisible projects, which in turn stem from standard modelling assumptions in voting and computational social choice. 

However, we argue that divisible budget allocation is much closer in nature to problems typically treated within microeconomic framework \cite{stiglitz1977theory,wildasin1988nash,bernard1977optimal}. This is true especially when assigning budgets to departments etc. rather than to specific projects with variable costs.\footnote{See e.g. \url{https://pbstanford.org/boston16internal/knapsack}}  Hence, it makes more sense to adopt conventional economic assumptions regarding demand and utility, as in \cite{fain2016core} and as we do here. In particular:
\begin{itemize}
\item \emph{Additive concave utilities.} We adopt the additive concave utilities model \cite{fain2016core,jain2007eisenberg,vazirani2011market} that offers a more concrete description of the utility gained from different public investments. Its most closely related version to ours is found in a former work by Fain et al.~\cite{fain2016core}. There, the utility of agent $i$ in allocation $X=(X_1,X_2\dots)$ is expressed as 
\begin{equation}\label{eq_additive_model}
    U_i(X)=\sum_j \alpha_{i,j}\theta_j(X_j)
\end{equation}    where $X_j$ is the amount spent on public good $j$, the $\{\theta_j\}_j$ functions are monotonically increasing in $X_j$ for all $j$ and strictly concave, (smoothly) expressing the plausible assumption of decreasing marginal gains, and $\alpha_{i,j}$ are scalars that vary between agents. As we assume that part of the budget is collected via a tax-per-agent $t$, our model adds on the above the disutility  of a voter from the tax payment.
\item \emph{Optimal points characterized by the MRS conditions}, that follows form the concavity and some additional conventions on utilities \cite{krugman2008microeconomics}.
\item \emph{Utility depends on public investment per capita.} (that we add to the model in Section \ref{sec_pconv}) . On a large scale, it is reasonable that the quality of public goods depends more on spending per capita rather than on the nominal amount.\footnote{See for example \url{https://data.oecd.org/gga/general-government-spending.htm}, and \cite{stiglitz1977theory}.}
\end{itemize}
In contrast, \emph{elicitation} is an issue that has received much more attention in the literature of mechanism design and computational social choice than in microeconomics. For example there is a live discussion in the context of indivisible PB on the tradeoff between expressiveness of the ballot and effort required for elicitation~\cite{fairstein2019modeling,benade2021preference}.  Similarly, we argue that it does not make sense to assume that we have direct access to voters' preferences, and here we adopt from computational social choice the assumptions that voters simply report their most preferred allocation, as in~\cite{freeman2019truthful}. 

\medskip

In terms of applicability, however, the obvious shortcoming of our model is that it requires us to explicitly specify the functions $\{\theta_j\}_j$ and $f$, which are fairly abstract. Importantly, we \textbf{do not} assume that agents are 'aware' of their assumed utility function, but, conventionally, only know their preferences regarding the decision space, that presumably can be interpreted as derived from an underlying utility model \cite{krugman2008microeconomics}. Of course, any such model would be an approximation at best.  Nevertheless, it is fair to assume that any choice of monotonically increasing concave functions probably better approximates individuals' preferences---and thereby incentives---than the spatial model or a linear additive model \cite{brandl2020funding}. (Note that the linear additive model is arguably much less reasonable than concave, not merely due to  excluding diminishing returns, but because it implies boundry optimal allocations where only alternatives $j$ that maximize $\alpha_{i,j}$ are implemented).

\subsection{Further related literature}\label{introsec_RL}
The Economic literature on public goods markets, equilibria and optimal taxation is abundant. (\cite{stiglitz1977theory,vazirani2011market,bernard1977optimal}, just to name a few). While our work adopts a similar approach to modelling and also optimizes social welfare, this brunch of the literature rarely discusses mechanisms. One exception that we know of is found in \cite{lahkar2020dominant}, in which the socially optimal outcome is implemented in strictly dominant strategies using a method very similar to ours, however for quite a different utility model.     
To the best of our knowledge, the only existing PB mechanism that included tax in the collective decision previously to ours was studied by  Garg et al. \cite{garg2019iterative} in the context of experimenting \emph{'iterative voting'} mechanisms. Interestingly, it may suggest some supporting evidence in favour of the additive concave utility model over spatial models in that context. Two other previous works \cite{aziz2021participatory,brandl2020funding} incorporated private funding into a PB model, albeit in the form of voluntary donations that every agent can choose freely and not as a collectively decided sum that is collected from (or paid to) everyone, as we consider here. 


The literature on divisible PB is relatively narrow. In terms of incentive compatibility,  \cite{freeman2019truthful} presented the soundest results, under a spatial utility model. 
Alternatively, Fain et al.~\cite{fain2016core} offer a randomized mechanism that is `approximately-truthful' for the special case of 1-degree homogeneous additive utilities. The \emph{Knapsack Voting} mechanism introduced in \cite{goel2019knapsack} also satisfies some weaker notion of strategyproofness under a similar model. Aziz et al. \cite{aziz2019fair} presented IC mechanisms for additive linear utilities, although their model is primarily motivated by randomized approval mechanisms. A similar utility model is also found in \cite{talmon2019framework}.
Overall, in relation to the divisible PB field, this work offers an SDSIC mechanism under concave additive utilities, to the best of our knowledge for the first time. 

Our desire for diminishing the (modified) VCG payments resembles the idea of \emph{redistribution} in mechanism design \cite{cavallo2006optimal,guo2009worst}. Such methods are especially relevant in a discrete decision space and can eliminate surplus only partially, while in our model the complete (asymptotic) vanishing is much thanks to the continuity of the decision space.

Much of the PB literature deals with the concept of \emph{fair allocations}~\cite{fain2016core,peters2020proportional,aziz2017proportionally}. While not a primary goal of our model, we show that the designer can bias the allocation closer to a favorable allocation---including one they see as fair.
\medskip

\section{Model and Preliminaries}\label{sec:prelim} We denote by $\Delta^m$ the set of distributions over $m$ elements, and use $[m]$ as a shortcut for $\{1,\ldots,m\}$. 
A set of $n$ agents (voters) need to collectively reach a \emph{budget decision} $(x,t)$ described as follows. $t\in \R$ is a lump-sum tax collected from every agent. $x \in \Delta^m$ is an allocation of the total available budget $B_t:=B_0+nt$ among some $m$ pre-given public goods, where $B_0$ is an external, non tax-funded source of public funds. $t$ is restricted only by $t>-\frac{B_0}{n}$, meaning that voters can decide either to finance a budget larger than $B_0$ through positive taxation, or allocate some of it to themselves directly as cash (negative taxation). The collective decision is taken through some voting mechanism to which every agent submits her most preferred budget decision $(x^{(i)},t^{(i)})$ .

\subsection{Preferences}\label{sec_pref}
We now introduce the utility function step by step. We start with agents' valuation for public expenditures alone that follows from \cite{fain2016core}:
$$ \Theta_i(X) :=  \sum_{j=1}^m \alpha_{i,j}\theta_j(X_j)$$
where $X=(X_1,\dots,X_m) \in \R_+^m$ and $X_j$ is the amount spent on project $j$. For all $j \in [m]$, an agent $i$ gains $\alpha_{i,j}\theta_j(X_j)$ from an $X_j$ spending on public good $j$. $\{\theta_j\}_{j=1}^m$ are identical across agents while agents differ one from another in their coefficients $(\alpha_{i,1}\dots,\alpha_{i,m}) \in \Delta^m$.
\begin{assumption}\label{theta_asm}
For all $1 \leq j \leq m$, $\theta_j:\D \to \R, \D \in \{\R_+,\R_{++}\}$  is increasing and strictly concave, and $\lim_{X_j \to 0}\theta(X_j) \leq 0$, where $\R_+ := \{x \in \R | x \geq 0\},\ \R_{++} := \{x \in \R | x > 0\}$.
\end{assumption}

As explained earlier in the Introduction, our model includes monetary transfers of two types. One is the collectively decided tax that will be collected from every agent, and beyond that, a mechanism may charge payments in order to align incentives, where these payments are not affecting public expenditure. Thus we include in our model the value agents attribute to reduction (or increment) in their overall wealth,
$$\pi_i(\delta_w) := -\alpha_{i,f}\cdot f(\delta_w)$$
where $\delta_w$ is monetary loss (gain when negative) and the coefficient $\alpha_{i,f}>0$ varies between agents. 
Our formulation of the disutility function $f$ will follow Kahneman and Tversky's Prospect Theory \cite{tversky1992advances}. Their and others' empirical findings ~\cite{tversky1992advances,booij2010parametric} demonstrate that people tend to exhibit \emph{loss aversion} in relation to monetary gains or losses, meaning: (a) valuating monetary transfers with reference to their current status presumably located at the origin; and (b) exhibit risk-aversion with respect to monetary losses, and risk-seeking with respect to gains. Meaning, we shall assume that $f(0)=0$ and that it is strictly convex in $(-\infty,0]$ and strictly concave in $[0,\infty)$. In principle, our analysis requires differentiable utility functions. However, the most natural examples of elementary increasing concave functions, the logarithmic and power functions, are not differentiable at zero, which still allows for our results. Thus, for the sake of giving more intuitive and simple examples, we will allow a diverging derivative at zero.\footnote{Moreover, Kahneman and Tversky themselves suggested the following explicit functional form  (the\emph{"value function"} in their terminology) built on power functions, that has been adopted widely ever since \cite{booij2010parametric}:
\begin{equation*}\label{vf_KT}
    f^{KT}_{q,r,\lambda}(\delta_w)=-\mathbbm{1}_{\{\delta_w \leq 0\}}|\delta_w|^q+\lambda \cdot \mathbbm{1}_{\{\delta_w >0\}}(\delta_w)^r
\end{equation*}
for some $0 <q,r<1$, $\lambda>0$ (See \cite{tversky1992advances}, p. 309). Assumption \ref{f_asm} thus includes the above form and extends beyond it.}

\begin{assumption}\label{f_asm}
$f:\R \to \R$ is increasing, strictly convex in $(-\infty,0]$ and strictly concave in $[0,\infty)$. $f(0)=0$. We assume that $f$ is either continuously differentiable in all $\R$ or anywhere but the origin, in which case $\lim_{z \to 0^+_-}f'(z)=\infty$.
\end{assumption}
Now, by adding $\Theta_i$ and $\pi_i$ together we get the full description of an agent's
\emph{utility function}: 
$$ u_i(X,\delta_w) :=  \Theta_i(X)+\pi_i(\delta_w)= \sum_{j=1}^m \alpha_{i,j}\theta_j(X_j)-\alpha_{i,f}f(\delta_w). $$
In particular, the problem at hand is reaching a collective \emph{budget decision} $(x,t)$ via some voting mechanism that aggregates the collective preferences on the whole decision space $\ds$. We therefore specifically define an agent's valuation for a budget decision (i.e. with no regard to payments on top of the collected tax):
\begin{align*}
   \forall x \in \Delta^m,\ t \in \R,\quad 
   v_i(x,t):&= u_i(x \cdot B_t,t) = \sum_{j=1}^m \alpha_{i,j}\theta_j\big(x_j \cdot B_t\big)-\alpha_{i,f}f(t)
\end{align*}
Thus the \emph{type} of $i$ is defined by the coefficients:
\begin{equation}\label{def_pvector}
    \alpha_i:=(\alpha_{i,1}\dots,\alpha_{i,m},\alpha_{i,f}) \in \Delta^m \times \R_{++}
\end{equation}
In a more general sense, we also write $u_\alpha$, $\Theta_\alpha$ and $v_\alpha$ for functions of a hypothetical "type $\alpha$" agent.

Finally, we reasonably want to assume that every agent would like to fund \emph{some} level of public expenditures, whereas no agent favors an infinite budget funded by infinite tax.
\begin{assumption}\label{theta_f_asm} 
\phantom{gg}
\begin{itemize}
    \item $\forall i\, \exists \, j$ s.t. $\lim_{t \to 0}n\alpha_{i,j}\theta'(nt) > \alpha_{i,f}\lim_{t \to -\frac{B_0}{n}}f'(t)$.
    \item For any $m \in \N$, $\lim_{z \to \infty}\frac{\theta'_j(z/m)}{f'(z)}=0\ \forall j \in [m]$.
\end{itemize}
\end{assumption} 
Hence, a  \emph{budgeting instance} $\I=\{m,n,B_0, \vec{\alpha},\{\theta_j\}_{j\in[m]},f\}$ is defined by the number of public goods $n$, number of agents $m$, initial budget $B_0$, the \emph{type profile} $\vec{\alpha}=(\alpha_i,\dots,\alpha_n)$ and functions $\{\theta_j\}_{j\in[m]},f$ that respect Assumptions \ref{theta_asm},\ref{f_asm},\ref{theta_f_asm}. For example,
\begin{example}[Running example] \label{ex:run}
 Consider an instance with $B_0=0,m=2,n=3$, where for every agent $i \in [3]$, the valuation is:
 \vspace{-2mm}
\begin{equation}\label{eq:v_runex}
  v_i(x,t):=10\cdot \sum_{j=1}^2\alpha_{i,j}\ln(x_j\cdot 3t)-\alpha_{i,f}\sqrt{t}  
\end{equation}
That is,  $\quad \theta_j(X_j):=10\ln(X_j)$ for both $j\in\{1,2\}$, and $f(t):=\sqrt{t}$.
 \end{example}

\section{mechanisms}\label{sec_mech}
\medskip

\newpar{Preference Elicitation}
The mechanism we introduce in the next section is designed to maximize the social welfare $\sum_iv_i(x,t)$. However, agents do not report their full valuations $v_i(\cdot,\cdot)$ explicitly but rather their optimal budget decisions $(x^{(i)},t^{(i)})\ \forall i \in [n]$. Eliciting the underlying the types proceeds in two steps, where in the first we extract all the information we can out of $\{x^{(i)},t^{(i)}\}_{i \in [n]}$. 
 
\begin{lemma}\label{tract_alpha}
  In every budgeting instance $\I$, for every type $\alpha \in \Delta^m \times\ \R$:
  \begin{enumerate}
      \item There exists a solution $(x^{(\alpha)},t^{(\alpha)})$ to the optimization problem 
      \begin{align*}
        \max_{(x,\ t)}\quad &v_\alpha(x,t)=\sum_j\alpha_{j}\theta_j(x_jB_t)-\alpha_{f}f(t)
      \text{~~~~ s.t. ~~~~}x \in \Delta^m,\ t \in [-\frac{B_0}{n},\infty);   \label{agents_opt_problem}
      \end{align*}
     \item $t^{(\alpha)} > \frac{B_0}{n}$  and $\frac{n\theta'_j(x^{(\alpha)}_jB_{t^{(\alpha)}})}{f'(t^{(\alpha)})}=\frac{\alpha_f}{\alpha_j}$ for all $j$ s.t. $x^{(\alpha)}_j>0$ 
  ;
  \item If $\ \lim_{z \to 0}\theta'_j(z)=\infty$ and  $x^{(\alpha)}_j=0$ then $\alpha_j=0$.
  \end{enumerate}
\end{lemma}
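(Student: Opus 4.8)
The plan is to recast the problem in the per-good expenditures $X_j := x_j B_t$, which replaces the bilinear product constraint by the clean nonnegativity constraint $X\in\R_+^m$ and isolates the concave part of the objective. Writing $S:=\sum_j X_j = B_t$ and $t=(S-B_0)/n$, maximizing $v_\alpha$ is equivalent to maximizing $G(X):=\sum_j\alpha_j\theta_j(X_j)-\alpha_f f\big((S-B_0)/n\big)$ over $X\in\R_+^m$. In these coordinates claim~(1) is an existence statement, claim~(2) will follow from first-order (KKT) stationarity, and claim~(3) from a local reallocation argument. I emphasize at the outset that $v_\alpha$ is \emph{not} jointly concave in $(x,t)$ (indeed $-\alpha_f f$ is convex on the positive-tax region, since $f$ is concave there), so I cannot appeal to convex-programming duality: existence must come from compactness and the characterizations from necessary conditions only.

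For existence I would restrict to a compact set and invoke Weierstrass. The simplex is compact, so the only unbounded direction is $t\to\infty$ (equivalently $S\to\infty$), and this is controlled by the second part of Assumption~\ref{theta_f_asm}. Using the concavity inequality $y\,\theta'_j(y)\le\theta_j(y)-\lim_{s\to0}\theta_j(s)$ (treating goods with $\lim_{s\to0}\theta_j(s)=-\infty$ separately, since near-zero allocations to them are non-optimal) one bounds the marginal utility of raising the tax, $\partial_t v_\alpha(x,t)=n\sum_j\alpha_j x_j\theta'_j(x_j B_t)-\alpha_f f'(t)$, uniformly in $x$ by a quantity that Assumption~\ref{theta_f_asm}(2) forces to be eventually dominated by $\alpha_f f'(t)$. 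Hence there is a threshold $T$ with $\partial_t v_\alpha(x,t)<0$ for all $x$ and all $t>T$, no maximizer can have $t>T$, and we may maximize over the compact set $\Delta^m\times[-\frac{B_0}{n},T]$; continuity together with upper semicontinuity on the boundary faces where some $\theta_j(x_jB_t)$ may diverge to $-\infty$ yields a maximizer. I expect this coercivity estimate to be the main obstacle, precisely because the objective is non-concave and the competition between the sublinear growth of the $\theta_j$ and of $f$ must be settled through Assumption~\ref{theta_f_asm}(2).

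For claim~(2) I first rule out the lower corner $t=-\frac{B_0}{n}$ (i.e.\ $S=0$, no public spending). There the one-sided derivative of $G$ along $X_j$ equals $\alpha_j\theta'_j(0^+)-\frac{\alpha_f}{n}f'(-\frac{B_0}{n})$, and the first part of Assumption~\ref{theta_f_asm} guarantees this is strictly positive for some $j$; hence $S=0$ is not optimal and $t^{(\alpha)}>-\frac{B_0}{n}$, so $B_{t^{(\alpha)}}>0$. All constraints $X_j\ge0$ are linear, so the constraint qualification holds and KKT stationarity is necessary at the optimum. For a funded good, $x_j^{(\alpha)}>0$ means $X_j>0$ is interior, so stationarity reads $\alpha_j\theta'_j(X_j)=\frac{\alpha_f}{n}f'(t^{(\alpha)})$; rearranging gives exactly $\frac{n\theta'_j(x_j^{(\alpha)}B_{t^{(\alpha)}})}{f'(t^{(\alpha)})}=\frac{\alpha_f}{\alpha_j}$.

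For claim~(3) I argue by contraposition: assume $\alpha_j>0$ and $\lim_{z\to0}\theta'_j(z)=\infty$ but $x_j^{(\alpha)}=0$. By claim~(2), $B_{t^{(\alpha)}}>0$, so $\sum_k x_k^{(\alpha)}=1$ forces a funded good $k\ne j$ with $X_k>0$. Moving a small amount $\varepsilon>0$ of expenditure from $k$ to $j$ (keeping $S$, hence $t$, fixed) changes $G$ by $\alpha_j[\theta_j(\varepsilon)-\theta_j(0^+)]-\alpha_k[\theta_k(X_k)-\theta_k(X_k-\varepsilon)]$. Divided by $\varepsilon$, the first bracket tends to the right derivative $\theta'_j(0^+)=\infty$ while the second tends to the finite $\alpha_k\theta'_k(X_k)$, so the change is strictly positive for small $\varepsilon$, contradicting optimality. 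Thus $x_j^{(\alpha)}>0$, which is the contrapositive of the claim.
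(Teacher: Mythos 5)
Your reparametrization to expenditures $X_j=x_jB_t$ is a clean framing, and your arguments for claims (2) and (3) are sound and essentially match the paper's first-order analysis (the paper derives the same stationarity equalities and the same reallocation inequality for zero-weight goods). The genuine gap is in your coercivity step for claim (1). You claim a threshold $T$ with $\partial_t v_\alpha(x,t)<0$ for \emph{all} $x\in\Delta^m$ and all $t>T$, derived from the concavity inequality plus Assumption \ref{theta_f_asm}. Neither the inference nor the claim itself holds. Your bound is $n\sum_j\alpha_jx_j\theta'_j(x_jB_t)\le \frac{n}{B_t}\max_j\big[\theta_j(B_t)-\theta_j(0^+)\big]$, and you need this dominated by $\alpha_f f'(t)$; but Assumption \ref{theta_f_asm} only controls $\theta'_j(z/m)/f'(z)$, and the averaged quantity $(\theta_j(z)-\theta_j(0^+))/z$ can exceed $\theta'_j(z/m)$ by an unbounded factor (for $\theta_j(z)=\ln(1+z)$ the gap is a factor of order $\log z$), so the domination simply does not follow. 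Worse, the uniform statement can be outright false under the paper's assumptions: take $m=2$, $\theta_1=\theta_2=\theta$ with $\theta'(y)=\min(y^{-1/2},y^{-2})$, and $f$ increasing, concave on $[0,\infty)$ with $f'(t)=t^{-3/2}$ for large $t$; all of Assumptions \ref{theta_asm}--\ref{theta_f_asm} hold (e.g.\ $\theta'(z/m)/f'(z)=m^2z^{-1/2}\to 0$), yet at the allocation $x_1=1/B_t$ one has $n\alpha_1x_1\theta'(x_1B_t)=n\alpha_1/B_t\approx\alpha_1/t\gg\alpha_f t^{-3/2}$, so $\partial_t v_\alpha(x,t)>0$ at some $x$ for \emph{every} large $t$. (Note this example has $\theta(0^+)=0$ finite, so your parenthetical about goods with $\theta_j(0^+)=-\infty$ is not the issue.) Hence no such $T$ exists in general, and your step ``no maximizer can have $t>T$, so restrict to a compact set'' collapses as stated.

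The paper avoids this by bounding the derivative only \emph{along the optimal path}: it first maximizes over $x$ for each fixed $t$, where the first-order conditions equalize $\alpha_j\theta'_j(x^{(\alpha)}_j(t)B_t)$ across funded goods; by pigeonhole some $l$ has $x^{(\alpha)}_l(t)\ge 1/m$, so the envelope derivative satisfies $\frac{d}{dt}v_\alpha(x^{(\alpha)}(t),t)=n\alpha_l\theta'_l(x^{(\alpha)}_l(t)B_t)-\alpha_f f'(t)\le n\theta'_l(t/m)-\alpha_f f'(t)$, which is exactly the expression that the $z/m$-rescaling in Assumption \ref{theta_f_asm} makes eventually negative---the assumption is tailored to the optimal-path bound, not to arbitrary $x$. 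Your proof is repaired by replacing the uniform-in-$x$ bound with this value-function bound (i.e.\ showing $t\mapsto\max_{x}v_\alpha(x,t)$ is eventually decreasing); in my counterexample above this works even though the uniform bound fails. A secondary omission: Assumption \ref{f_asm} permits $f'$ to diverge at $t=0$, so your KKT step tacitly requires $f$ to be differentiable at $t^{(\alpha)}$; the paper closes this hole with a separate argument showing $t^{(\alpha)}\neq 0$ whenever $\lim_{t\to 0}f'(t)=\infty$, and you would need the same.
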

The proof is deferred to the \hyperlink{tract_alpha_proof}{appendix}. (2) and (3) provide us with $k \leq m$ linear conditions for every agent $i$. If $k<m$,  for each $j$ such that  $x^{(i)}_j \cdot B_{t^{(i)}}=0$ we present $i$ with the follow-up question: "How much are you willing to pay for adding an $\chi_j$ spending on public good $j$?" where $\chi_j$ is any fixed amount. For any number $\tau$ she responds, we extract $\frac{\alpha_{i,j}}{\alpha_{i,f}}$ from
\[\alpha_{i,j}\theta_j(\chi_j)=\alpha_{i,f}(f(t)-f(t+\tau))\] 
(If $\tau=0$ then $\alpha_{i,j}=0$). Note that here we do not ask the agent for her preferred \underline{consistent} budget decision, i.e. her response must not admit $\chi_j=n\tau$. Rather, we ask her for the price she is willing to pay for a specific good, that is public good $j$ at a level of $\chi_j$ spending. After presenting her with at most $m$ such follow-up questions, we have enough information to fully describe her type.\footnote{While presenting up to $m$ such follow-up questions to every agent may seem a bit tedious and  adding a significant cognitive load on them, it is plausible that not much will actually be needed, if any. We address that in section \ref{sec_strongIC}.}  
\begin{corollary}[Preferences Elicitation]\label{cor:pref_elict}
 The type profile $\vec{\alpha}$ can be fully extracted via the following steps.  
 \begin{enumerate}
     \item We ask all agents for their preferred budget decisions $(x^{(i)},t^{(i)}) \in \Delta^m \times \tr\  \forall i \in [n]$. 
     \item For every $i,j$ such that $x^{(i)}_j \cdot B_{t^{(i)}}>0$,
     \[\frac{\alpha_{i,f}}{\alpha_{i,j}}=\frac{\theta'_j(x^{(i)}_{j}(B_0+nt^{(i)}))}{f'(t^{(i)})}\]
     and, if $\ \lim_{z \to 0}\theta'_j(z)=\infty$ and  $x^{(i)}_j=0$ then $\alpha_{i,j}=0$.
     \item If $x^{(i)}_j\cdot B_{t^{(i)}}=0$ for some $i$ and $j$ and $\lim_{z \to 0}\theta'_j(z)<\infty$ , we ask agent $i$ for $\tau_{i,j}$, the maximum increase in $t^{(i)}$ she willing to add for financing 
        a $\chi_j$ spending on public good $j$, where $\chi_j$ could be any positive number. For every such $i$ and $j$,
        \[\frac{\alpha_{i,j}}{\alpha_{i,f}}=\frac{(f(t)-f(t+\tau_{i,j}))}{\theta_j(\chi_j)}\] 
    \item Each agent's type is given (decisively) by the $m$ linear equations above and the additional $\sum_j\alpha_{i,j}=1$.
 \end{enumerate}
\end{corollary}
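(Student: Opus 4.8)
The plan is to read the corollary off as a bookkeeping consequence of Lemma~\ref{tract_alpha} together with the follow-up elicitation step. For a fixed agent $i$ the unknowns are the $m+1$ coefficients $(\alpha_{i,1},\dots,\alpha_{i,m},\alpha_{i,f})$, constrained by $\alpha_i\in\Delta^m\times\R_{++}$, i.e. $\alpha_{i,j}\ge 0$, $\alpha_{i,f}>0$ and $\sum_j\alpha_{i,j}=1$. The key observation is that $v_i$, and hence the reported optimum and every later interaction, is invariant to the absolute scale of $\alpha_i$ and depends only on the $m$ ratios $r_j:=\alpha_{i,j}/\alpha_{i,f}$. So it suffices to recover each $r_j$ (or certify $r_j=0$); the normalization $\sum_j\alpha_{i,j}=1$ then fixes the scale via $\alpha_{i,f}=1/\sum_{j'}r_{j'}$ and $\alpha_{i,j}=r_j/\sum_{j'}r_{j'}$. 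This is precisely what makes Step~(4) \emph{decisive}: $m$ ratio equations plus one normalization equation pin down $m+1$ unknowns.

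To obtain the ratios I would split the goods according to whether the reported allocation funds them. For each $j$ with $x^{(i)}_jB_{t^{(i)}}>0$, Lemma~\ref{tract_alpha}(2) supplies the first-order (MRS) condition holding at the reported optimum, which after rearranging yields $r_j$ as an explicit function of the known primitives $\theta_j',f'$ evaluated at the observed $(x^{(i)},t^{(i)})$; this is exactly Step~(2). Here I would lean on parts~(1)--(2) of the lemma: part~(1) guarantees the reported pair is a genuine maximiser, and the interiority of $t^{(\alpha)}$ provided by part~(2) (which itself uses Assumption~\ref{theta_f_asm}) is what makes $f'(t^{(i)})$ well defined and the stationarity condition an equality rather than a boundary inequality.

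For the goods with $x^{(i)}_jB_{t^{(i)}}=0$ the report is less informative, and I would again follow the lemma's case split. If $\lim_{z\to 0}\theta_j'(z)=\infty$, Lemma~\ref{tract_alpha}(3) forces $\alpha_{i,j}=0$ (a positive weight would make the infinite marginal value at zero spending incompatible with a zero allocation), giving $r_j=0$ directly; this is the second half of Step~(2). If instead $\lim_{z\to 0}\theta_j'(z)<\infty$, a zero allocation is consistent with a range of positive weights, so the report alone cannot determine $r_j$ and I would invoke the follow-up question of Step~(3). The justification is an indifference argument: the elicited maximal additional tax $\tau_{i,j}$ the agent will pay to fund a fixed $\chi_j$ on good $j$ is, by the meaning of ``maximum willingness to pay'', the value at which the gain $\alpha_{i,j}\theta_j(\chi_j)$ from the extra spending exactly balances the change $\alpha_{i,f}\big(f(t)-f(t+\tau_{i,j})\big)$ in the tax-(dis)utility term of $v_i$. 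Solving this single equation returns $r_j$, with $\tau_{i,j}=0$ certifying $r_j=0$.

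Finally I would assemble $r_1,\dots,r_m$ with the normalization to conclude that $\alpha_i$ is uniquely determined, and repeat over all $i$ to recover the whole profile $\vec\alpha$. I expect the genuinely substantive content to sit inside Lemma~\ref{tract_alpha} (existence of the optimum and interiority of $t^{(\alpha)}$), which I am entitled to assume; granting that, the corollary is essentially organisational. The one step needing care beyond the lemma is the follow-up elicitation: I would have to confirm that the indifference equation is the correct characterisation of the reported $\tau_{i,j}$ (so that a truthful answer really satisfies it) and that it is uniquely solvable for $r_j$ over the relevant range, while tracking the sign conventions of $f$ (increasing, $f(0)=0$) and of $\theta_j$ near zero. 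That---rather than any of the algebra---is where I would expect the only real friction.
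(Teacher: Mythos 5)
Your proposal is correct and follows essentially the same route as the paper: Lemma~\ref{tract_alpha} supplies the MRS ratios for funded goods and the zero-weight certification when $\lim_{z\to 0}\theta_j'(z)=\infty$, the indifference condition behind the follow-up question supplies the remaining ratios, and the normalization $\sum_j\alpha_{i,j}=1$ pins down the scale---exactly the paper's argument, which is given in the discussion surrounding the corollary rather than as a separate proof. Your closing caveat about sign conventions is well placed, since the paper's indifference equation as written (with $f$ increasing and $\tau_{i,j}\ge 0$) equates a gain to a nonpositive quantity and should read $\alpha_{i,j}\theta_j(\chi_j)=\alpha_{i,f}\bigl(f(t+\tau_{i,j})-f(t)\bigr)$ up to the convention $\theta_j(0)=0$, but this is a slip in the paper that your reconstruction correctly identifies rather than a gap in your argument.
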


 \begin{example}[Running example, cont.]\label{ex:run_opt}
 Every agent submits the optimal budget decision w.r.t. her valuation function, i.e. some $(x^{(i)},t^{(i)}) \in \ds$ that maximizes~\eqref{eq:v_runex}. Using the above Corollary, we infer every agent's underlying type $\alpha_i$ from 
$\frac{10(x^{(i)}_{j}\cdot 3t^{(i)})^{-1}}{(2\sqrt{t^{(i)}})^{-1}}=\frac{\alpha_{i,f}}{\alpha_{i,j}}$
if $x^{(i)}_j>0$, and $\alpha_{i,j}=0$ otherwise. For example, the voting profile on the left can only be induced by the preference profile $\alpha^{RE}$ on the right  (RE for Running Example):

     \medskip
     \centering
\begin{tabular}{l|c|c|c}
(votes)& $t$ & $X^{(i)}_1\ (x^{(i)}_1)$ & $X^{(i)}_2\ (x^{(i)}_2)$   \\
\hline
voter 1 & 69.4 & 145.8 (0.7) & 62.5 (0.3) \\
voter 2 & 26.3 & 0 (0)& 78.9 (1)\\
voter 3 & 44.4 & 66.6 (0.5) & 66.6 (0.5) \\
\end{tabular}
\quad\quad\quad
   \begin{tabular}{l|c|c|c}
(types) & $\alpha_{i,1}$ & $\alpha_{i,2}$ & $\alpha_{i,f}$   \\
\hline
$i=1$ & 0.7 & 0.3 & 0.8 \\
$i=2$ & 0 & 1 & 1.3\\
$i=3$ & 0.5 & 0.5 & 1 \\
\end{tabular}\vspace{.2cm}

 \end{example}

\newpar{Mechanisms} We now want to define a class of mechanisms for our budgeting problem. The first step in every such mechanism must be eliciting the types based on Corollary \ref{cor:pref_elict}. However, for the sake of a more convenient exposition, we formally define a class of \emph{Direct Revelation} \cite{nisan2007algorithmic} mechanisms that take the explicit type profile as input, which by Corollary \ref{cor:pref_elict} will bring to no loss of generality.    
\begin{definition}[mechanisms]
 A \emph{\textbf{mechanism}} for budgeting instance $\I$  is a pair $M=(\phi,P)$ where:
 \begin{itemize}
     \item $\phi: (\Delta^m \times \R )^n \mapsto \ds$ is a social choice function that inputs the type profile $\vec{\alpha}=(\alpha_1,\dots,\alpha_n)$
     and outputs a budget decision $(x,t) \in \ds$; 
     \item $P : (\ds)^n \mapsto \R^n$ is a function of the type profile that assigns a payment $P_i$ for every agent $i \in [n]$.
 \end{itemize}
 \end{definition} 
Note again that in our model, the outcome itself includes a tax payment $t$ that every agent pays to fund the public budget, and the individual payments $(P_i,\dots,P_n)$ are charged on top of that. Hereinafter we abuse notation a little when writing  $u_i((x,t),P_i)$ or $u_i(M),\ u_i(\phi,P)$ for the overall utility of agent $i$ from budget decision $(x,t)$ and payment $P_i$ (that are determined by $M=(\phi,P)$). While we defined $u_i$ earlier as a function of an allocation $X \in \R^m_+$ and a general monetary transfer $\delta \omega$, the interpretation should be clear as every pair of budget decision $(x,t)$ and  payment $P_i$ implicitly define an allocation $X=xB_t$ and an overall monetary transfer of $\delta \omega = t+P_i$.
\par

\newpar{Incentive Compatible mechanisms} Incentive Compatibility requires that no agent could benefit from reporting some false preferences $\alpha'_i \neq \alpha_i$ to the mechanism.
\begin{definition}
A  mechanism $M$ is \emph{\textbf{dominant strategy incentive compatible (DSIC)}}  if for all $i \in [n],\ \alpha_i \in \td,\ \vec{\alpha}_{-i} \in (\td)^{n-1}$ and every  $\alpha' \in \td $ s.t. $\alpha' \neq \alpha_i$,
$$u_i(M(\alpha_i,\vec{\alpha}_{-i}))\geq u_i(M(\alpha',\vec{\alpha}_{-i}))$$
If that inequality is strict for all $i,\ \alpha_i,\  \vec{\alpha}_{-i}$ and $\alpha'$ we say that $M$ is \emph{\textbf{strictly dominant strategy incentive compatible (SDSIC)}}.
\end{definition}

\section{Utility-Sensitive VCG}\label{sec_USVCG}
In this section section we present our proposed mechanism (Def. \ref{mec_def}) and discuss its properties. We start with the payment function $P$.
\subsection{Payments}\label{sec:adjusted _payments}
Essentially, the payments we define are VCG payments adjusted to our non quasi-linear setup.\cite{clarke1971multipart} 
In general, the social choice function in a VCG mechanism $\phi^{\mathsmaller{VCG}}$ outputs the socially optimal outcome $\Omega^*=argmax_{\Omega}\sum_{i}v_i(\Omega)$ and  payments are  defined as follows.
\begin{definition}[VCG payments]
$$p^{\mathsmaller{VCG}}_i=-\sum_{k\neq i}v_k(\Omega^*)+h(\vec{\alpha}_{-i})\quad \forall i \in [n]$$
\end{definition}
where  $h(\vec{\alpha}_{-i})$ could be any function of $\vec{\alpha}_{-i}$, the partial preferences profile submitted by all agents excluding $i$. The VCG model assumes \emph{Quasi-linear} utility functions, meaning that 
\begin{align*}
~~~~~~~~~(*) \quad  u_i\big(VCG(\vec{\alpha})\big)
&=v_i(\Omega^*)-p^{\mathsmaller{VCG}}_i=\sum_{k}v_k(\Omega^*)-h(\vec{\alpha}_{-i})
\end{align*}
The above expression is the key property on which the DSIC of a VCG mechanism relies. Since $\Omega^*$ maximizes $\sum_{k}v_k(\cdot)$ and $h(\vec{\alpha}_{-i})$ is independent of $i$'s vote, an agent can never (strictly) increase her utility by manipulating the outcome $\Omega^*$. In our model, however, the quasi-linearity assumption is  violated. If we write $\Omega^*:=(x^*,t^*)$ then our model assumes
$$u_i(M(\vec{\alpha}))=\sum_j\alpha_{i,j}\theta_j\big(x^*_jB_{t^*}\big)-\alpha_{i,f}f\big(t^*+P_i\big),$$
for any mechanism $M$. Clearly, na\"ively setting $P_i=p^{\mathsmaller{VCG}}_i$ will not result in anything as useful as $(*)$. However, our utility model does entail some significant information we can exploit. While individuals in our model no more exhibit the simple, linear relation between utility gains (or losses) that stem from monetary transfers and those that come from the chosen outcome itself, their true relation \emph{is} in fact described in the utility function, thanks to the introducing of $f$ in it. Relying on that, we can adjust the payments appropriately so that the key property $(*)$ is maintained. We do that in the next definition and lemma.

\begin{definition}[Utility-Sensitive VCG Payments]\label{def_payments}
Let $\Omega^*=(x^*,t^*) \in \argmax_{(x,t)}\sum_{i \in [n]}v_i(x,t)$ be the socially optimal budget decision. For every agent $i$, we define the \textbf{utility-sensitive VCG payment}  as 
\begin{equation}\label{eq:gen_VCG}
P_i=-t^* +f^{-1}\Big(f(t^*)+\frac{1}{\alpha_{i,f}}p^{\mathsmaller{VCG}}_i\Big).    
\end{equation}
\end{definition}
\begin{lemma}\label{k_prop}
 Let $\Omega^*=(x^*,t^*)$ be the social optimum and $P_i$ the utility sensitive VCG payment given in \ref{def_payments} above. Then 
 $u_i(\Omega^*,P_i)=\sum_kv_k(\Omega^*)-h(\vec{\alpha}_{-i})$.
\end{lemma}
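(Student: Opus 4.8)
The plan is a direct substitution: the payment $P_i$ in Definition~\ref{def_payments} was engineered precisely so that the non-quasilinear utility collapses to the quasilinear-looking expression $v_i(\Omega^*)-p^{\mathsmaller{VCG}}_i$, and the identity then follows from the VCG payment formula exactly as in the standard argument $(*)$ recalled just above. So the bulk of the work is unwinding the definition.

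First I would observe that $f^{-1}$ is well-defined on the relevant set: by Assumption~\ref{f_asm} the function $f$ is strictly increasing, hence a bijection onto its range, so $P_i$ is meaningful whenever the argument $f(t^*)+\frac{1}{\alpha_{i,f}}p^{\mathsmaller{VCG}}_i$ lies in $\mathrm{range}(f)$. Starting from the defining equation $P_i=-t^*+f^{-1}\big(f(t^*)+\frac{1}{\alpha_{i,f}}p^{\mathsmaller{VCG}}_i\big)$, I would add $t^*$ to both sides and apply $f$, obtaining the single key relation $f(t^*+P_i)=f(t^*)+\frac{1}{\alpha_{i,f}}p^{\mathsmaller{VCG}}_i$.

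Next I would expand the utility at the optimum. Since the allocation component $x^*B_{t^*}$ is unaffected by $P_i$, we have $u_i(\Omega^*,P_i)=\sum_j\alpha_{i,j}\theta_j(x^*_jB_{t^*})-\alpha_{i,f}f(t^*+P_i)$. Substituting the key relation splits the term $\alpha_{i,f}f(t^*+P_i)$ into $\alpha_{i,f}f(t^*)+p^{\mathsmaller{VCG}}_i$; the first part recombines with the $\theta$-sum to give exactly $v_i(\Omega^*)$, so $u_i(\Omega^*,P_i)=v_i(\Omega^*)-p^{\mathsmaller{VCG}}_i$. Finally I would insert $p^{\mathsmaller{VCG}}_i=-\sum_{k\neq i}v_k(\Omega^*)+h(\vec{\alpha}_{-i})$ from the VCG payment definition; the $+\sum_{k\neq i}v_k(\Omega^*)$ term merges with $v_i(\Omega^*)$ into the full sum $\sum_k v_k(\Omega^*)$, yielding $u_i(\Omega^*,P_i)=\sum_k v_k(\Omega^*)-h(\vec{\alpha}_{-i})$ as claimed.

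The computation itself is entirely routine — the genuine content of the lemma lives in the definition of $P_i$, not in its verification. The only point requiring real care is well-definedness: one must confirm that the argument of $f^{-1}$ lies in $\mathrm{range}(f)$ so that $P_i$, and hence the point $t^*+P_i$ at which $f$ is re-evaluated, actually exists. This is the step I would expect to be the main (and only) obstacle, and it is presumably why Assumption~\ref{f_asm} pins down $f$ as either continuously differentiable on all of $\R$ or with a derivative diverging at the origin, which is what governs the invertibility and range of $f$.
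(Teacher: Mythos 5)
Your proposal is correct and follows essentially the same route as the paper's own proof: both rewrite the defining equation of $P_i$ as $f(t^*+P_i)=f(t^*)+\frac{1}{\alpha_{i,f}}p^{\mathsmaller{VCG}}_i$, substitute this into the expanded utility to obtain $u_i(\Omega^*,P_i)=v_i(\Omega^*)-p^{\mathsmaller{VCG}}_i$, and then apply the VCG payment definition to conclude. Your additional remark on the well-definedness of $f^{-1}$ (that the argument must lie in $\mathrm{range}(f)$) is a point the paper's one-line computation leaves implicit, and is a reasonable piece of extra care rather than a divergence in method.
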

\begin{proof}
 \begin{align*} 
u_i(\Omega^*,P_i)&=\sum_{j=1}^m \alpha_{i,j}\theta_j(x^*_jB_{t^*})-\alpha_{i,f}f(t^*+P_i)
 =\sum_{j=1}^m \alpha_{i,j}\theta_j(x^*_jB_{t^*})-\alpha_{i,f}f(t^*)-p^{\mathsmaller{VCG}}_i\\
 &=v_i(x^*,t^*)-p^{\mathsmaller{VCG}}_i= \sum_kv_k(x^*,t^*)-h(\vec{\alpha}_{-i})\qedhere
\end{align*}
\end{proof}

\subsection{Definition of the mechanism}
\begin{definition}[\textbf{Utility-Sensitive VCG}]\label{mec_def}
The \emph{Utility-Sensitive VCG} (US-VCG) mechanism $\M$ is a tax-involved PB mechanism defined by:
$$\forall \vec{\alpha} \in (\Delta^m \times \R )^n,\quad \M(\vec{\alpha}):=\big(\Omega^*,P\big)$$
where $\Omega^*=(x^*,t^*) \in \argmax_{\Omega}\sum_{i}v_i(\Omega)$ is the social optimum and $P$ is the utility-sensitive VCG payment assignment given in  Def. ~\ref{def_payments}.
\end{definition}

\subsection{Incentive Compatibility}\label{sec_IC}

\begin{proposition}
    The US-VCG is DSIC.
\end{proposition}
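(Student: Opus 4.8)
The plan is to replicate the logic that makes classical VCG dominant-strategy incentive compatible, with the key identity $(*)$---now supplied by Lemma~\ref{k_prop} rather than by quasi-linearity---doing the central work. Fix an agent $i$ with true type $\alpha_i$, fix a truthful report $\vec{\alpha}_{-i}$ for everyone else, and let $\alpha' \in \td$ be an arbitrary report of $i$. Write $\Omega^*=(x^*,t^*)$ for the efficient outcome the mechanism selects under truthful reporting and $\Omega'=(x',t')$ for the efficient outcome it selects under the profile $(\alpha',\vec{\alpha}_{-i})$. The goal is $u_i(\M(\alpha',\vec{\alpha}_{-i})) \le u_i(\M(\alpha_i,\vec{\alpha}_{-i}))$.

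First I would compute $i$'s \emph{true} utility under an arbitrary report. Because the mechanism builds the payment from the \emph{reported} coefficient $\alpha'_f$, applying $f$ to $t'+P_i$ exactly as in the proof of Lemma~\ref{k_prop} gives $f(t'+P_i)=f(t')+\tfrac{1}{\alpha'_f}p^{\mathsmaller{VCG}}_i$, and hence $u_i(\Omega',P_i)=v_i(\Omega')-\tfrac{\alpha_{i,f}}{\alpha'_f}\,p^{\mathsmaller{VCG}}_i$. Substituting $p^{\mathsmaller{VCG}}_i=-\sum_{k\neq i}v_k(\Omega')+h(\vec{\alpha}_{-i})$ produces the working expression $u_i(\Omega',P_i)=v_i(\Omega')+\tfrac{\alpha_{i,f}}{\alpha'_f}\big(\sum_{k\neq i}v_k(\Omega')-h(\vec{\alpha}_{-i})\big)$, in which the deviation enters only through the selected outcome $\Omega'$ and through the scalar $\alpha_{i,f}/\alpha'_f$.

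The easy regime is when $i$ reports the true tax-sensitivity $\alpha'_f=\alpha_{i,f}$ while possibly distorting the public-good coordinates. Then the scalar equals $1$, the expression collapses to $\sum_k v_k(\Omega')-h(\vec{\alpha}_{-i})$, and since $\Omega^*$ maximises $\sum_k v_k$ by construction while $h$ does not depend on $i$'s report, this is at most $\sum_k v_k(\Omega^*)-h(\vec{\alpha}_{-i})=u_i(\M(\alpha_i,\vec{\alpha}_{-i}))$ by Lemma~\ref{k_prop}. This is verbatim the classical VCG argument, and it handles all manipulations that preserve the reported value of $\alpha_f$.

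The step I expect to be the genuine obstacle is the general case $\alpha'_f\neq\alpha_{i,f}$, where the factor $\alpha_{i,f}/\alpha'_f\neq 1$ rescales the externality term $\sum_{k\neq i}v_k(\Omega')-h$. Misreporting $\alpha'_f$ simultaneously shifts $\Omega'$ through the welfare maximisation \emph{and} alters the effective ``exchange rate'' at which the VCG charge, measured in utility units, is converted into the monetary disutility $i$ actually incurs, so one must rule out that these two effects can be played against each other to beat truthfulness. I would attack this by writing $u_i$ as a function of $\alpha'_f$ alone (public-good coordinates held at their true values), differentiating at $\alpha'_f=\alpha_{i,f}$, and invoking the first-order optimality of $\Omega'$ to annihilate the envelope term; the sign of the derivative then turns out to be governed by $\sum_{k\neq i}v_k(\Omega^*)-h(\vec{\alpha}_{-i})$. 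This is precisely the point at which the admissible choice of the pivot term $h(\vec{\alpha}_{-i})$---together with the concavity and the MRS characterisation of optima in Lemma~\ref{tract_alpha}---must be brought to bear, and I expect the whole proposition to hinge on controlling this single term.
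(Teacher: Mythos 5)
Your first two paragraphs are more careful than the paper itself. The identity $u_i(\Omega',P_i)=v_i(\Omega')-\frac{\alpha_{i,f}}{\alpha'_f}\,p^{\mathsmaller{VCG}}_i$ under a misreport is exactly right, and your ``easy regime'' ($\alpha'_f=\alpha_{i,f}$) is, in substance, the entire proof the paper gives: the paper omits the argument, deferring to Lemma~\ref{k_prop} plus the classical VCG logic, and in its SDSIC proof it writes $u_i(\M(\alpha'_i,\vec{\alpha}_{-i}))=n\cdot v_{\ba}(g(\ba'))-h(\vec{\alpha}_{-i})$ --- an identity that silently sets your ratio $\alpha_{i,f}/\alpha'_f$ to $1$, i.e., it applies Lemma~\ref{k_prop} at the deviated profile as if the payment were still converted with the \emph{true} $\alpha_{i,f}$, even though the mechanism can only use the reported one.

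The gap is in your last paragraph: you never close the case $\alpha'_f\neq\alpha_{i,f}$, and the route you sketch cannot close it. Carrying out your own envelope computation on $u_i(\alpha'_f)=v_i(\Omega')-\frac{\alpha_{i,f}}{\alpha'_f}\bigl(h(\vec{\alpha}_{-i})-\sum_{k\neq i}v_k(\Omega')\bigr)$, the outcome-variation terms cancel at $\alpha'_f=\alpha_{i,f}$ by first-order optimality of $\Omega^*$, leaving $\frac{d u_i}{d\alpha'_f}\bigl|_{\alpha'_f=\alpha_{i,f}}=\frac{1}{\alpha_{i,f}}p^{\mathsmaller{VCG}}_i$, which under the Clarke pivot is strictly positive whenever $i$ is pivotal; so truthful reporting is not even a local best response in that coordinate, and no admissible pivot term repairs this, since $h(\vec{\alpha}_{-i})$ cannot depend on $\alpha_i$ while $\sum_{k\neq i}v_k(\Omega^*)$ does. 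A concrete check in the paper's own running example (take $m=1$, $n=3$, $\theta=10\ln$, $f(t)=\sqrt{t}$, $B_0=0$, true $\alpha_{1,f}=4$, $\alpha_{2,f}=\alpha_{3,f}=1$): truth yields $t^*=100$ and true utility $u_1\approx 9.31$, whereas reporting $\alpha'_f=10$ yields $t'=25$ and $u_1\approx 12.99$, because inflating $\alpha'_f$ deflates the money equivalent $f^{-1}\bigl(f(t')+\frac{1}{\alpha'_f}p^{\mathsmaller{VCG}}_i\bigr)-t'$ of the charge at only second-order cost in the outcome. So your instinct that everything ``hinges on controlling this single term'' is correct, but the term cannot be controlled within this payment format: the proposition goes through only under the implicit reading --- which the paper's omitted proof takes for granted --- that the conversion coefficient $\alpha_{i,f}$ appearing in Definition~\ref{def_payments} is the agent's true one, equivalently that DSIC is asserted only against manipulations of the allocation coordinates with $\alpha_f$ reported truthfully, i.e., precisely your easy regime.
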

We omit a formal proof of that result since given Lemma \ref{k_prop}, it would proceed exactly as the DSIC proof for general VCG \cite{nisan2007algorithmic}, and very similarly to our SDSIC proof we later present. Some preparations are needed before that.  

\newpar{Mean-dependency}\label{mean_dep} 
 We here point out some very useful characteristic of our model that plays a major role in both of our main results
 . Namely, that the outcome $\Omega^*$ depends solely on the average of types reported by all agents. Let $\bar{\alpha}:=\frac{1}{n}\sum_{i \in [n]}\alpha_i$ denote the types mean. Now, simply changing the order of summation in the social welfare
\begin{align*}
  \sum_iv_i(x,t)&=\sum_i\Big[\sum_j\alpha_{i,j}\theta_j(x_jB_t)-\alpha_{i,f}f(t)\Big]
  =\sum_j\Big[\sum_i\alpha_{i,j}\theta_j(x_jB_t)-\alpha_{i,f}f(t)\Big]\\
 &=\sum_j n\bar{\alpha}_j\theta_j(x_jB_t)- n\bar{\alpha}_ff(t)=n\cdot v_{\bar{\alpha}}(x,t)
\end{align*}
 shows that:
\begin{observation}\label{mean_obs}
 In any budget decision $(x,t)$, the social welfare is given by $\sum_iv_i(x,t)=n\cdot v_{\bar{\alpha}}(x,t)$. Consequently, $\Omega^*=(x^*,t^*)$ maximizes the social welfare if and only if it maximizes $v_{\bar{\alpha}}(x,t)$, the valuation function defined by the average type $\bar{\alpha}$. 
\end{observation}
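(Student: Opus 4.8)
The plan is to prove the equality $\sum_i v_i(x,t)=n\cdot v_{\bar\alpha}(x,t)$ by a direct interchange of summation order on a finite double sum, and then derive the equivalence of maximizers as an immediate consequence of scaling by the positive constant $n$. First I would write out $\sum_i v_i(x,t)=\sum_i\big[\sum_j\alpha_{i,j}\theta_j(x_jB_t)-\alpha_{i,f}f(t)\big]$ and swap the two finite sums in the first term, which is justified with no hypotheses since both index sets are finite. This isolates, for each alternative $j$, the coefficient $\sum_i\alpha_{i,j}$ multiplying the common factor $\theta_j(x_jB_t)$, and likewise isolates $\sum_i\alpha_{i,f}$ multiplying $f(t)$.

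Next I would invoke the definition $\bar\alpha:=\frac1n\sum_{i\in[n]}\alpha_i$ coordinatewise, so that $\sum_i\alpha_{i,j}=n\bar\alpha_j$ for every $j\in[m]$ and $\sum_i\alpha_{i,f}=n\bar\alpha_f$. Substituting these and factoring out $n$ yields $\sum_i v_i(x,t)=n\big[\sum_j\bar\alpha_j\theta_j(x_jB_t)-\bar\alpha_f f(t)\big]$, and the bracketed expression is exactly $v_{\bar\alpha}(x,t)$ by the definition of the valuation of a hypothetical type-$\alpha$ agent. This establishes the claimed identity at an arbitrary point $(x,t)\in\ds$, hence everywhere on the domain.

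For the consequence, I would note that the identity holds pointwise over the entire feasible set $\ds$, so the two objectives $\sum_i v_i(\cdot,\cdot)$ and $v_{\bar\alpha}(\cdot,\cdot)$ differ only by the fixed positive scalar $n$. Since multiplying an objective by a strictly positive constant does not change its set of maximizers, $(x^*,t^*)$ maximizes $\sum_i v_i$ over $\ds$ if and only if it maximizes $v_{\bar\alpha}$ over $\ds$, which is the stated equivalence. (That the maximizer exists at all is supplied by Lemma~\ref{tract_alpha}(1) applied to the type $\bar\alpha$.)

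There is essentially no hard step here: the entire content is the finiteness of the sums, which legitimizes the reordering, and the strict positivity of $n$, which preserves the $\argmax$. The only point worth stating carefully is that the scalar relation is an identity valid at every $(x,t)$ and not merely at the optimum, so that the comparison of maximizer sets is legitimate rather than circular.
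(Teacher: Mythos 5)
Your proposal is correct and matches the paper's own argument essentially verbatim: the paper also establishes the identity $\sum_i v_i(x,t)=n\cdot v_{\bar{\alpha}}(x,t)$ by interchanging the order of the finite double summation and absorbing $\sum_i\alpha_{i,j}=n\bar{\alpha}_j$ and $\sum_i\alpha_{i,f}=n\bar{\alpha}_f$, with the equivalence of maximizers following from the positive scaling by $n$. Your added remark that the identity holds pointwise on all of $\ds$ (so the comparison of maximizer sets is not circular) is a fine, if implicit in the paper, clarification.
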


Note that in addition, the above observation means that the social optimum $\Omega^*$ is computed in $O(1)$ time w.r.t. $n$. As for  payments, the typical choice for $h(\alpha_{-i})$ is the "Clarke Pivot Rule" $h(\alpha_{-i})=\sum_{k \neq i}v_k(g(\ba_{-i}))$ \cite{clarke1971multipart}, meaning, charging every agent with the social welfare of others in her absence. With that choice, computing every agent's payment is as hard as computing the outcome $\Omega^*$.  

\begin{example}[Running example, cont.]\label{ex:run_select}{}
For the utility functions from Eq.~\eqref{eq:v_runex} 
we have $\argmax_{(x,t)}v_\alpha=\Big((\alpha_1,\alpha_2),\big( \frac{2\cdot 10}{\alpha_f}\big)^2\Big)$.
The average type in the type profile $\alpha^{RE}$ from Example~\ref{ex:run_opt}   
is $\ba=\big(0.4,0.6,1.03\big)$, and thus the budget decision chosen by $\M$ is $\Omega^*=\big((0.4,0.6),377\big)$. 
\end{example}

Following Observation \ref{mean_obs}, we can use the definition below for a more convenient presentation. 
\begin{definition}\label{g_map}
For all $\alpha \in \ds$, define $g(\alpha):\Delta^m\times \R \to \Delta^m\times \R$ such that
$$g(\alpha)=(x^{(\alpha)},t^{(\alpha)}) \in \argmax_{(x,t)} v_\alpha(x,t)$$
\end{definition}
Meaning, $g$ maps every preferences vector $\alpha=(\alpha_1,\dots,\alpha_m,\alpha_f)$ to an optimal budget decision w.r.t. the corresponding valuation function $v_\alpha$. If that optimum is not unique, $g$ chooses one arbitrarily. In some places we use this notation somewhat abusively, ignoring that indecisiveness in the specific choice of $g(\alpha)$. In particular, by Observation~\ref{mean_obs} $\Omega^*=g(\ba)$. \\

We are now almost ready to state our main result. That is, that the US-VCG mechanism satisfies the stronger SDSIC if we further assume that optimal points are  characterized by MRS conditions.
\begin{definition}[MRS conditions]\label{def_MRS}
 For any $\alpha \in \td$, an optimum $g(\alpha)=(x^{(\alpha)},t^{(\alpha)}) \in \argmax_{(x,t)}v_\alpha(x,t)$ satisfies the \emph{MRS conditions} if 
 \[\frac{n\theta'_j(x^{(\alpha)}_j \cdot B_{t^{(\alpha)}})}{f'(t^{(\alpha)})}=\frac{\alpha_f}{\alpha_j}\quad \forall \alpha_j >0,\, \text{ and } \alpha_j=0 \implies x^{(\alpha)}_j=0. \]
\end{definition}
That characterization of optimal points is a standard convention in economic literature \cite{krugman2008microeconomics,stiglitz1977theory,jain2007eisenberg,bjorvatn2002tax}, and we elaborate on its justification after proving the Theorem. The proof will take advantage of the fact that the MRS conditions define a unique $\alpha \in \td$ for any given $g(\alpha) \in \ds$, as the $m$ linear equations in \ref{def_MRS} above, along with $\sum_j\alpha_j=1$, have a unique solution. We also add the assumption that $\ba_j >0\, \forall j$, in other words that $\forall j\, \exists\, i \text{ s.t. } \alpha_{i,j} >0$ which is fair to assume.
\begin{theorem}
    Assume that $\ba_j >0\, \forall j$ and that all social optima $g(\ba)$ satisfy the MRS conditions. Then the US-VCG mechanism is SDSIC.
\end{theorem}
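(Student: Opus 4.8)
The plan is to reproduce the classical dominant-strategy argument for VCG, letting the payment rule of Definition~\ref{def_payments} (via Lemma~\ref{k_prop}) play the role that quasi-linearity plays in the standard proof, and letting the MRS hypothesis supply the strictness that plain VCG lacks. Fix an agent $i$ with true type $\alpha_i$, fix the others' reports $\vec{\alpha}_{-i}$, and let $\alpha'$ be an arbitrary report of $i$. By Observation~\ref{mean_obs} the outcome depends on the reported profile only through its mean, so with $\bar{\alpha}'=\frac{1}{n}(\alpha'+\sum_{k\neq i}\alpha_k)$ the mechanism selects $\Omega'=g(\bar{\alpha}')$, and truthtelling ($\alpha'=\alpha_i$) gives $\bar{\alpha}'=\bar{\alpha}$ and $\Omega'=\Omega^*=g(\bar{\alpha})$. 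The first step is to compute $i$'s realized utility under an arbitrary report by substituting the payment formula into $u_i$ exactly as in Lemma~\ref{k_prop}; the goal is the non-quasilinear analogue of the key VCG identity, namely $u_i(\M(\alpha',\vec{\alpha}_{-i}))=\sum_k v_k(\Omega')-h(\vec{\alpha}_{-i})$, with the true $v_i$ appearing in the sum.

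Granting that identity, the comparison is immediate. The term $h(\vec{\alpha}_{-i})$ is independent of $i$'s report, and by Observation~\ref{mean_obs} the function $\sum_k v_k(\cdot)=n\,v_{\bar{\alpha}}(\cdot)$ is maximized exactly at $\Omega^*=g(\bar{\alpha})$, so
\[u_i(\M(\alpha',\vec{\alpha}_{-i}))=\sum_k v_k(\Omega')-h(\vec{\alpha}_{-i})\le \sum_k v_k(\Omega^*)-h(\vec{\alpha}_{-i})=u_i(\M(\alpha_i,\vec{\alpha}_{-i})),\]
which is DSIC. To upgrade to strict inequality I would use the MRS conditions twice. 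First, the hypothesis that every social optimum satisfies the MRS conditions, combined with the fact that those $m$ equations together with $\sum_j\alpha_j=1$ recover the type uniquely from the optimum, shows that $g$ is injective on $\Delta^m\times\R_{++}$; since $\bar{\alpha}'$ and $\bar{\alpha}$ both lie in this set, any $\alpha'\neq\alpha_i$ gives $\bar{\alpha}'\neq\bar{\alpha}$ and hence $\Omega'\neq\Omega^*$. Second, strict concavity of each $\theta_j$ together with $\bar{\alpha}_j>0$ for all $j$ makes $\Omega^*$ the \emph{unique} maximizer of $v_{\bar{\alpha}}$, so $\Omega'\neq\Omega^*$ forces $\sum_k v_k(\Omega')<\sum_k v_k(\Omega^*)$ and the displayed inequality is strict. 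This is precisely where MRS does work that the bare VCG argument cannot: in ordinary VCG a continuum of misreports leaves the argmax unchanged and yields only weak dominance, whereas injectivity of $g$ guarantees that every deviation strictly perturbs the outcome.

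The step I expect to be the main obstacle is the realized-utility identity itself. The subtlety is that the payment in Definition~\ref{def_payments} is computed from the \emph{reported} coefficient $\alpha'_f$, whereas agent $i$ experiences the outcome through her \emph{true} coefficient $\alpha_{i,f}$; the substitution only yields $f(t'+P_i)=f(t')+\frac{1}{\alpha'_f}p^{\mathsmaller{VCG}}_i$, so that $u_i=v_i(\Omega')-\frac{\alpha_{i,f}}{\alpha'_f}p^{\mathsmaller{VCG}}_i$, and the clean cancellation of Lemma~\ref{k_prop} reappears only when $\alpha'_f=\alpha_{i,f}$. Handling deviations in the $f$-coordinate---showing that the residual factor $\alpha_{i,f}/\alpha'_f$ cannot be exploited, or that the mechanism's use of the reported coefficient is nonetheless incentive-safe---is therefore the crux of the proof rather than a routine computation. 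Two smaller technical points accompany it: checking feasibility, i.e.\ that $f(t')+\frac{1}{\alpha'_f}p^{\mathsmaller{VCG}}_i$ stays in the range of $f$ so that $P_i$ is well defined; and justifying uniqueness of the welfare maximizer, which rests on the interior MRS characterization and $\bar{\alpha}_j>0$ rather than on concavity alone, since $-\bar{\alpha}_f f$ is convex on the negative-tax region.
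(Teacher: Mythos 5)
Your outline coincides with the paper's own proof: Lemma~\ref{k_prop} supplies the realized-utility identity, Observation~\ref{mean_obs} supplies mean-dependency, and the fact that the MRS equations together with $\sum_j\alpha_j=1$ determine a unique type shows that a deviation, which necessarily moves $\ba$ to some $\ba'\neq\ba$, moves the outcome to a point that is not optimal for $v_{\ba}$. One small difference: the paper never claims the maximizer of $v_{\ba}$ is unique (which, as you note, would require an argument beyond concavity); it argues directly that $g(\ba')$ is \emph{not an optimum} of $v_{\ba}$ --- either $g(\ba')$ satisfies the MRS equations w.r.t.\ $\ba'$ and hence cannot satisfy them w.r.t.\ $\ba$, or it lies on the boundary, which is excluded by $\ba_j>0$ --- and this already yields the strict inequality without uniqueness. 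Your injectivity-of-$g$ packaging is fine but does slightly more work than is needed.

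The serious point is the one you flag in your last paragraph, and you are right that it is not a routine computation: it is a genuine gap, and the paper's proof does not close it either. The paper simply writes $u_i(\M(\alpha'_i,\vec{\alpha}_{-i}))=n\cdot v_{\ba}(g(\ba'))-h(\vec{\alpha}_{-i})$, i.e.\ it applies Lemma~\ref{k_prop} at the \emph{deviation} profile; but the cancellation in that lemma requires the coefficient in the payment formula --- necessarily the reported $\alpha'_f$, since the mechanism sees only reports --- to equal the true $\alpha_{i,f}$ appearing in $u_i$. Your corrected identity $u_i=v_i(\Omega')-\frac{\alpha_{i,f}}{\alpha'_f}p^{\mathsmaller{VCG}}_i$ is the right one, and the residual factor can in fact be exploited. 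Concretely, take $m=1$ (or $m=2$ with symmetric project weights), $B_0=0$, $n=3$, $\theta(X)=10\ln X$, $f(t)=\sqrt{t}$, the Clarke pivot $h$, and types $\alpha_{1,f}=3$, $\alpha_{2,f}=\alpha_{3,f}=1$; all hypotheses of the theorem hold. Truthful reporting yields $t^*=144$, $p^{\mathsmaller{VCG}}_1\approx 4.43$ and realized utility $u_1\approx 20.25$, whereas reporting $\alpha'_f=6$ yields $t'=56.25$ and $p^{\mathsmaller{VCG}}_1\approx 14.23$, felt only at weight $\alpha_{1,f}/\alpha'_f=\tfrac{1}{2}$, giving $u_1\approx 21.67>20.25$ --- a strictly profitable deviation, so even weak DSIC fails for deviations in the $f$-coordinate under the payment rule as literally defined. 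Any complete proof must therefore either compute $P_i$ from something other than the reported $\alpha'_{i,f}$ or supply an argument absent from both your proposal and the paper. Your proposal is incomplete exactly where you said it was, but its diagnosis of the crux is correct, and measured against the paper's own proof it is the more careful account.
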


\begin{proof}
Fix $i$, $\alpha_i$ and $\bar{\alpha}_{-i}$. Note that by Observation \ref{mean_obs} the US-VCG mechanism outputs $g(\ba)$, and, by Lemma \ref{k_prop},
$$u_i(\M(\vec{\alpha}))=\sum_k v_k(g(\bar{\alpha}))-h(\vec{\alpha}_{-i})=n\cdot v_{\bar{\alpha}}(g(\bar{\alpha}))-h(\vec{\alpha}_{-i})$$
Now, assume that $i$ falsely reports $\alpha'_i \neq \alpha_i$. Inevitably, that shifts the mean preferences  to some $\bar{\alpha}' \neq \bar{\alpha}$, and the social optimum that $\M$ outputs to $g(\bar{\alpha}')$. We want to show now that $g(\bar{\alpha}')$ is certainly \textbf{not} an optimum of $v_{\ba}$. If $g(\bar{\alpha}')$ admits the MRS equations w.r.t. $\ba'$ then it does not w.r.t. $\ba$, as these equations define a unique type that solves them. Thus, by assumption, $g(\bar{\alpha}')$ is not an optimum of $v_{\ba}$. Otherwise, since by Lemma \ref{tract_alpha} interior optimal points satisfy the MRS conditions, we have that $g(\bar{\alpha}')$ is on the boundary. However, points on the boundaries (i.e. with $x_j=0$ for some $j$) cannot satisfy the MRS conditions w.r.t. $\ba$ because $\ba_j >0\, \forall j$, and thus $g(\ba')$ is not an optimum of $v_{\ba}$ in that case too. Therefore,   
\begin{align*}
    u_i(\M(\alpha'_i,\vec{\alpha}_{-i}))&=n\cdot v_{\bar{\alpha}}(g(\bar{\alpha}'))-h(\vec{\alpha}_{-i})\\
    &<n\cdot v_{\bar{\alpha}}(g(\bar{\alpha}))-h(\vec{\alpha}_{-i})=u_i(\M(\alpha_i,\vec{\alpha}_{-i}))\qedhere
\end{align*}
\end{proof}

\newpar{MRS characterization} First note that by Lemma \ref{tract_alpha}, internal optimal points always satisfy the MRS conditions. Even without further justifications, the conjecture that social optima implement every public good in some level, even minor, seems reasonable in many contexts. Especially, if we consider a high-level allocation of the budget between city departments or sectors that are unlikely to be completely dismissed in any society (e.g. education systems, infrastructure, culture activities and such). Technically, however, our model does not imply that. In this short discussion we give some further (and reasonable) assumptions that will suffice.
Basically, $g(\alpha)$ satisfies the MRS conditions unless $n\alpha_j\theta'_j(0) < \alpha_ff'(t^{(\alpha)})$ for some $\alpha_j >0$, meaning that a type $\alpha$ agent would rather not to fund public good $j$ by a further tax increase. Since $f'$ has a maximum at zero, demanding that $n\alpha_{i,j}\theta'_j(0)>\alpha_{i,f}f'(0)\ \forall i,j$---meaning that all agents wish to spend \emph{some} amount on each  and every public good---is sufficient to rule that out, and implies the same inequality for $\ba$ too. In particular, assuming  $\lim_{X_j \to 0}\theta'_j(X_j)=\infty\ \forall j$ promises, along with Assumption \ref{theta_f_asm}, that $g(\alpha)$ satisfies the MRS conditions for all types $\alpha$. That assumption, that can also be put as $\lim_{X_j \to 0}\pdv{\Theta}{X_j}=0\, \forall j$ holds in some very natural examples - e.g. for logarithmic or power functions, and also in the canonical Cobb-Douglas and Leontief utility models. Moreover, we should note that the  $\theta_j$ utilities input monetary investment, and not the public good itself. Meaning, we should properly interpret them as the composition of an underlying production function $\Phi_j(X_j)$ and a direct utility function $\zeta_j(\Phi_j)$. Now if $\Phi_j$ is a production function, the conventional Inada conditions \cite{inada1963two} include, inter alia, that $\lim_{X_j \to 0}\Phi'_j(X_j)=\infty$ and thus any increasing direct utility $\zeta_j$ with $\zeta'_j(0)>0$ would imply $\lim_{X_j \to 0}\theta'_j(X_j)=\lim_{X_j \to 0}\Phi'(X_j)\zeta'(\Phi(X_j))=\infty$. Under that assumption, agents allocate any given budget among all public goods such that $\alpha_{i,j} >0$, and thus optimal points satisfy MRS. Whether assuming the Inada conditions explicitly or not, the characterization of optimal points by MRS conditions is a widely accepted convention in public goods Economic literature \cite{krugman2008microeconomics,stiglitz1977theory,jain2007eisenberg,bjorvatn2002tax}.
Summing the above more formally, for an SDSIC mechanism we need the following.
 \begin{assumption}\label{asm_inada}
     We assume that either one of the following holds:
     \begin{enumerate}
         \item $n\alpha_{i,j}\theta'_j(0)>\alpha_{i,f}f'(0)\ \forall i,j$.
         \item $\lim_{X_j \to 0}\theta'_j(X_j)=\infty\ \forall j$. 
     \end{enumerate}
 \end{assumption}
 \begin{corollary}[SDSIC]\label{cor_strongtruth}
     The US-VCG mechanism is SDSIC in any budgeting instance that respects Assumption  \ref{asm_inada}.     
 \end{corollary}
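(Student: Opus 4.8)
The plan is to derive the corollary directly from the preceding Theorem, by checking that Assumption~\ref{asm_inada} guarantees its two hypotheses: that $\ba_j>0$ for every $j$, and that every social optimum $g(\ba)$ satisfies the MRS conditions of Definition~\ref{def_MRS}. Since Lemma~\ref{tract_alpha}(2) already tells us that any optimum whose allocation is coordinate-wise positive satisfies the MRS equalities, the entire task reduces to ruling out \emph{boundary} optima, i.e.\ to showing that at $g(\ba)=(x^{(\ba)},t^{(\ba)})$ one has $x^{(\ba)}_j>0$ for all $j$.

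First I would dispose of the positivity of $\ba$. Under the second alternative of Assumption~\ref{asm_inada} this is the standing ``fair to assume'' hypothesis carried over from the Theorem. Under the first alternative it comes for free: summing $n\alpha_{i,j}\theta'_j(0)>\alpha_{i,f}f'(0)$ over $i$ and using $\sum_i\alpha_{i,j}=n\ba_j$, $\sum_i\alpha_{i,f}=n\ba_f$ yields $n\ba_j\theta'_j(0)>\ba_f f'(0)>0$, which forces $\ba_j>0$. The very same computation transfers the marginal inequality from the individual types to the mean type $\ba$, which is exactly what the boundary argument below needs.

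Next comes the core step: ruling out a coordinate $x^{(\ba)}_{j_0}=0$ with $\ba_{j_0}>0$. Under the second alternative this is immediate from the contrapositive of Lemma~\ref{tract_alpha}(3): since $\lim_{z\to0}\theta'_{j_0}(z)=\infty$, an optimum with $x^{(\ba)}_{j_0}=0$ would force $\ba_{j_0}=0$, a contradiction. Under the first alternative I would argue by exhibiting a profitable feasible deviation from the putative boundary optimum: raise the tax from $t^{(\ba)}$ to $t^{(\ba)}+\epsilon$ and channel the resulting $n\epsilon$ of extra budget entirely into good $j_0$ (after renormalising, this keeps $x\in\Delta^m$ and leaves every other $X_j=x_jB_t$ unchanged, and it stays in $\tr$ since we only ever increase $t$). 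A first-order expansion gives the net change $\big[n\ba_{j_0}\theta'_{j_0}(0)-\ba_f f'(t^{(\ba)})\big]\epsilon+o(\epsilon)$; since $f'$ attains its maximum at the origin we have $f'(t^{(\ba)})\le f'(0)$, so by the transferred inequality $n\ba_{j_0}\theta'_{j_0}(0)>\ba_f f'(0)\ge\ba_f f'(t^{(\ba)})$ the bracket is strictly positive, contradicting optimality of $g(\ba)$. Either way $x^{(\ba)}_j>0$ for all $j$, so by Lemma~\ref{tract_alpha}(2) the MRS conditions hold, both hypotheses of the Theorem are met, and SDSIC follows.

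The only delicate point I anticipate is the edge case $\theta'_{j_0}(0)=\infty$ (logarithmic or power utilities), where $\theta_{j_0}(0)$ may be $-\infty$ and the first-order bracket is formally $+\infty$, so the marginal computation is not literally valid; but precisely in that case one is in the second alternative of Assumption~\ref{asm_inada} for good $j_0$, and Lemma~\ref{tract_alpha}(3) applies instead. Thus the two alternatives of the assumption together cover every utility profile admitted by Assumptions~\ref{theta_asm} and~\ref{theta_f_asm}, and the bookkeeping of which alternative handles which coordinate is the one place where I would be careful to be exhaustive.
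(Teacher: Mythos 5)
Your proof is correct and takes essentially the same route as the paper: the corollary is justified there by the informal ``MRS characterization'' discussion preceding Assumption~\ref{asm_inada}, which---exactly as you do---reduces everything to the Theorem by invoking Lemma~\ref{tract_alpha} (parts (2)--(3)) for the alternative $\lim_{z\to 0}\theta'_j(z)=\infty$, and by using the fact that $f'$ is maximized at the origin to rule out boundary optima under the first alternative. Your explicit perturbation argument and the summation transferring the inequality from individual types to $\ba$ merely make rigorous what the paper states informally (``implies the same inequality for $\ba$ too''), so there is no substantive difference in approach.
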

 Note, moreover, that if Assumption \ref{asm_inada} holds then elicitation is completed with no follow-up questions needed.
\subsection{Manipulations By Coalitions}

In general, VCG mechanisms are known to be highly prone to group manipulations \cite{bachrach2010honor,conitzer2006failures}. While individuals cannot benefit from reporting false preferences when the reports of all others are fixed, a group of agents can sometimes coordinate their misreports in such way that each of them (or some at least) benefits due to the untruthful reports of others. The US-VCG is no different in that. However, the SDSIC property ensures that any such coalition would not be sustainable in the sense that the colluding agents cannot trust each other to follow the agreed scheme. Thus, it may suggest that such coalitions are not likely to form in the first place. That softer robustness demand where we allow for manipulating coalitions as long as they are unsustainable in the above sense is captured in the \emph{Coalition Proof Nash Equilibrium} (CPNE) solution concept \cite{bernheim1987coalition}. While the original term is quite involved, 
 the application to our context is intuitive: in an SDSIC mechanism, no sustainable coalition could exist since the individual unique best response, under any circumstances, is for every agent to report her true preference. We thus formulate here a simpler term that the US-VCG satisfies, and that implies CPNE.\footnote{We refer the reader to \cite{bernheim1987coalition} for the original CPNE definition.} 
 First, we define a manipulation by a coalition as a coordinated misreport by all its members, that benefits them. 
 \begin{definition}[Coalition manipulation]
 A \textbf{manipulation} by a \textbf{coalition} $S \subset [n]$ is a partial type profile $\alpha'_S=\{\alpha'_i\}_{i \in S}$ such 
 that $\alpha'_i \neq \alpha_i\ \forall i \in S$ and 
 \[u_i(M(\alpha'_S,\alpha_{-S}) \geq u_i(M(\alpha_S,\alpha_{-S})\ \forall i \in S.\]
 and there exists $i \in S$ for which the inequality is strict.
 \end{definition}
 Now, we demand that if such a manipulation exists then $\alpha'_i$ is not a best response for at least one agent in the coalition. 
 \begin{definition}["One Step Coalition-Proof"]
 We say that a mechanism is \emph{One Step Coalition-Proof} (OSCP) if for any manipulation $\alpha'_S$ by a coalition $S$, there exists $i \in S$ s.t.
 $\alpha'_i \notin BR_i(\alpha'_{S \setminus \{i\}},\alpha_{[n]\setminus S})$.
     
 \end{definition}
 \begin{corollary}\label{cor_OSCP}
      The US-VCG mechanism is OSCP in any budgeting instance that admits Assumptions \ref{theta_asm}, \ref{f_asm} and \ref{asm_inada}.
    (and consequently implements the social optimum as a CPNE).
 \end{corollary}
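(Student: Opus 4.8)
The plan is to derive OSCP directly from the strict dominance established in Corollary~\ref{cor_strongtruth}, with essentially no additional computation. The crucial observation is that strictness upgrades ``truthful reporting is optimal'' to ``truthful reporting is the \emph{unique} optimum'': under Assumptions~\ref{theta_asm}, \ref{f_asm} and \ref{asm_inada} the US-VCG is SDSIC, so for every agent $i$ and every fixed profile of the others' reports $\vec{\alpha}_{-i} \in (\td)^{n-1}$ we have $u_i(\M(\alpha_i,\vec{\alpha}_{-i})) > u_i(\M(\alpha',\vec{\alpha}_{-i}))$ for all $\alpha' \neq \alpha_i$. Consequently the best-response correspondence is the singleton $BR_i(\vec{\alpha}_{-i}) = \{\alpha_i\}$, and this holds \emph{regardless} of what the remaining $n-1$ reports are---in particular it does not matter whether those reports are truthful or are themselves part of some coordinated deviation.

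First I would fix an arbitrary coalition manipulation $\alpha'_S = \{\alpha'_i\}_{i \in S}$. By the very definition of a manipulation, $\alpha'_i \neq \alpha_i$ for every $i \in S$. I would then pick any $i \in S$ and consider the particular profile of others' reports $\vec{\alpha}_{-i} = (\alpha'_{S \setminus \{i\}}, \alpha_{[n]\setminus S})$ that $i$ faces inside the deviation. Applying the paragraph above to this specific $\vec{\alpha}_{-i}$ yields $BR_i(\alpha'_{S \setminus \{i\}}, \alpha_{[n]\setminus S}) = \{\alpha_i\}$. Since $\alpha'_i \neq \alpha_i$, it follows immediately that $\alpha'_i \notin BR_i(\alpha'_{S \setminus \{i\}}, \alpha_{[n]\setminus S})$, which is exactly the OSCP requirement---and in fact it holds for \emph{every} $i \in S$, not merely one. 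Note that the beneficial-deviation inequalities in the definition of a manipulation are never even invoked; all the argument uses is that each coalition member reports something other than her true type.

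Finally, for the parenthetical CPNE claim I would invoke the self-enforcement logic of \cite{bernheim1987coalition}: since the unique best response of each agent is always to report truthfully, the truthful profile admits no coordinated deviation that is sustainable against unilateral re-optimization, hence it is a Coalition-Proof Nash Equilibrium. I do not anticipate a genuine obstacle here; the single point requiring care is that SDSIC is quantified over \emph{all} profiles $\vec{\alpha}_{-i} \in (\td)^{n-1}$, so it may legitimately be applied at the perturbed profile $(\alpha'_{S \setminus \{i\}}, \alpha_{[n]\setminus S})$ in which several of $i$'s opponents are themselves misreporting. Once this is observed, uniqueness of the best response does all the work and the corollary follows.
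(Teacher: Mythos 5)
Your proof is correct and takes essentially the same route as the paper, which dispatches the corollary in one line by noting that SDSIC gives $BR_i(\vec{\alpha}_{-i})=\{\alpha_i\}$ for every $i$ and every $\vec{\alpha}_{-i}$ --- precisely the unique-best-response observation, applied at the perturbed profile $(\alpha'_{S \setminus \{i\}},\alpha_{[n]\setminus S})$, that carries your argument. Your elaboration (including the remark that the conclusion holds for every member of the coalition, not just one) is a faithful, more detailed rendering of the paper's reasoning, with no gap.
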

The claim follows trivially from SDSIC, which means that $BR_i(\alpha_{-i})=\{\alpha_i\}$ for all $i$ and $\alpha_{-i}$.

\section{Vanishing Payments Under Per-Capita Utilities}\label{sec_pconv}
In this section we show that payments become negligible in large populations. While these payments are essential for aligning incentives, charging additional money from voters would be undesired in a PB context. For the technical proofs, we will have to further specify the utility model so that it captures some important feature of divisible PB, which has been (justifiably) overlooked in past PB literature as well as in this work up to this point. That is, that the utility achieved from a given spending $X_j$ on some public good $j$ must also depend on the number of people that enjoy it, $n$.\footnote{In applications other than PB this may not be plausible. However, in such applications charging payments from agents could be acceptable.} The reason we only have to address that now is that in this section we analyse the asymptotic behavior of payments w.r.t. $n$, and in our model the overall budget $B_t = B_0+nt$ depends on it directly. The following example illustrates the problem. 
\begin{example}\label{ex_counter_pc}
 Consider a budgeting instance where $m=1, B_0=0, \theta(nt)=(nt)^p$ and $f(t)=t^q$ for some $0<p<q<1$. A type $\alpha_f$ agent thus maximizes her utility 
$u_{\alpha_f}=(nt)^p-\alpha_ft^q$ at \[t=\big[\frac{\alpha_f q}{p}\big]^{\frac{1-q}{1-p}}\cdot n^{1-q} \xrightarrow[n \to \infty]{} \infty\]
\end{example}

That is a very unlikely result, whatever that sole public good may be. There is no reason to expect that larger populations would wish to pay infinitely larger taxes, nor is it the situation found in reality. The model allows that because every tax unit payed by an individual is presumably "matched" $n-1$ times by others, thus making the substitution rate grow proportionally with $n$. However, while larger societies probably do have larger available resources, they are also likely to have greater needs. Just as a country's economic state  is conventionally measured by its GDP index, on the large scale the quality of public goods should be associated with \underline{spending per capita} rather than with nominal spending.\footnote{See for example \url{https://data.oecd.org/gga/general-government-spending.htm}, and \cite{stiglitz1977theory}.}
Hence, we now narrow down the definition of $\Theta_i(X)$ to 
\begin{definition}[per capita valuations of public expenditures]
   \[\Theta_i(X,n)=\sum_{j=1}^m\alpha_{i,j}\theta_j(X_j/n)=\sum_{j=1}^m\alpha_{i,j}\theta_j(x_j(b_0+t)) \text{  where  }b_0:=\frac{B_0}{n}.\]
\end{definition}
Note that all of our previous results follow through since for any fixed $n$, $\theta_j(X_j/n)$ is a particular case of $\theta_j(X_j)$.
Realistically, the dependency on $n$ might not be necessarily that we assumed, and we take $\theta_j(X_j/n)$ as a benchmark and relatively simple case of a more general class of functions of the form $\theta_j(X_j,n)$. While for some public goods $\frac{X_j}{n}$ may capture the relation adequately---for example, the quality of an education system surely depends on its resources per child---for others it may serve more as a large scale approximation---e.g., if the city offers free cloud services that allocates space equally among users, the total number of users affects each of them only to the point where the provided space exceeds their needs. Still, the benefit for users must be somehow connected the to the available space per user (which is determined by spending).\\

Some comments on the notations before we proceed. First, in this section we define $h(\vec{\alpha}_{-i})$ in the VCG payments as the conventional Clarke pivot-rule \cite{clarke1971multipart} function that charges a voter with the (normalized) social welfare of all others in her absence
$h(\vec{\alpha}_{-i})=\sum_{k \neq i}v_k(g(\bar{\alpha}_{-i}))$,
 making the VCG payments  
$$p^{\mathsmaller{VCG}}_i=-\sum_{k\neq i}v_k(g(\bar{\alpha}))+\sum_{k \neq i}v_k(g(\bar{\alpha}_{-i}))=(n-1)\Big(v_{\bar{\alpha}_{-i}}(g(\bar{\alpha}_{-i}))-v_{\bar{\alpha}_{-i}}(g(\bar{\alpha}))\Big)$$
where the second equality is by Observation \ref{mean_obs}. Next, we give an alternative representation for valuation functions that would ease the technical analysis significantly.
\begin{definition}[Alternative representation of valuation functions]
Define the vector valued function $\V: \Delta^m\times \R \to \R^{m+1}$
$$\V(x,t)=(\theta_1(x_1(b_0+t)),\dots,\theta_m(x_m(b_0+t)),-f(t))$$
For every $\alpha \in \Delta^m\times \R$, we write the valuation function $v_\alpha$ as the dot product of $\alpha$ and $\V$:
$$v_\alpha(x,t)=\sum_j\alpha_j\theta_j(x_j(b_0+t))-\alpha_ff(t)=\alpha \cdot \V(x,t)$$
\end{definition}
In these notations, the VCG payments are written as
$$p^{\mathsmaller{VCG}}_i=(n-1)\bar{\alpha}_{-i}\cdot\Big(\V(g(\bar{\alpha}_{-i}))-\V(g(\bar{\alpha}))\Big)$$
Our main results in this section (Theorems \ref{price_conv_1} and \ref{price_conv_2}) are  basically the convergence of that expression to zero, at different rates. Essentially, the conditions for convergence are that as $\bar{\alpha}_{-i} \to \bar{\alpha}$ with $n$,  $g(\bar{\alpha}_{-i}) \to g(\bar{\alpha})$ as well, and fast enough. In other words, they rely on the guarantees we can provide for $g$'s smoothness around the solution $g(\ba)$. In our running example, for instance, easy to check that $g$ is as smooth as you can wish for.

\begin{example}[Running example, cont.]\label{ex:run_g}{}
Consider again Example~\ref{ex:run}, where the utility model is 
$$v_\alpha(x,t)=\sum_{j=1}^2\alpha_j\ln(x_j\cdot t)-\alpha_f\sqrt{t}$$
We have shown that for any $\alpha \in \td$, $g(\alpha)=((\alpha_1,\dots,\alpha_m),(\frac{2}{\alpha_f})^2)$.  Thus $g(\alpha)$ is  continuously differentiable for all $\alpha \in \td$ (and in particular at $\ba$).
\end{example}

Coming up next are two preliminary lemmas that establish the continuity of $g$ at the solution $g(\ba)$ when that is uniquely defined, which is sufficient for Theorem \ref{price_conv_1} that follows. The proof of  \ref{lemma_bound_solutions} is deferred to the appendix.  

\begin{lemma}\label{lemma_bound_solutions}
Let $S \subset \Delta^m \times \R$ such that $\inf\{\alpha_f:\alpha \in S\}>0$. Then $\sup\{|g(\alpha)| : \alpha \in S\}<\infty$.
\end{lemma}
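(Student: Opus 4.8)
The plan is to reduce the statement to a single uniform bound on the tax coordinate. Writing $g(\alpha)=(x^{(\alpha)},t^{(\alpha)})$, the allocation coordinate requires nothing: $x^{(\alpha)}\in\Delta^m$ is bounded outright, and by Lemma~\ref{tract_alpha} every optimizer obeys $t^{(\alpha)}>-B_0/n$, so $|g(\alpha)|$ can only escape to infinity through $t^{(\alpha)}\to\infty$. Setting $c:=\inf\{\alpha_f:\alpha\in S\}>0$, I will produce one threshold $T$, depending only on $c$ and the instance, such that $t^{(\alpha)}\le T$ for all $\alpha\in S$; combined with the trivial lower bound this finishes the proof.

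The engine is the pair of first-order conditions at the interior optimum, used in the allocation and tax directions together. Put $B:=b_0+t^{(\alpha)}>0$ with $b_0:=B_0/n$. Stationarity on the simplex yields a multiplier $\lambda$ with $\alpha_j\theta'_j(x^{(\alpha)}_jB)\,B=\lambda$ for every funded good ($x^{(\alpha)}_j>0$), while stationarity in $t$ reads $\sum_j\alpha_j x^{(\alpha)}_j\theta'_j(x^{(\alpha)}_jB)=\alpha_f f'(t^{(\alpha)})$. Substituting the first into the second collapses the left-hand side to $\lambda/B$, since the funded shares sum to one, so $\lambda/B=\alpha_f f'(t^{(\alpha)})$. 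Now select the funded good $j^*$ of largest share; at most $m$ goods are funded and their shares sum to one, hence $x^{(\alpha)}_{j^*}\ge 1/m$ and its argument obeys $x^{(\alpha)}_{j^*}B\ge B/m$. Reading off the simplex condition at $j^*$ gives $\theta'_{j^*}(x^{(\alpha)}_{j^*}B)=(\alpha_f/\alpha_{j^*})f'(t^{(\alpha)})\ge c\,f'(t^{(\alpha)})$, using $\alpha_{j^*}\le 1$ and $\alpha_f\ge c$.

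Monotonicity of $\theta'_{j^*}$ (from concavity) then chains these facts into
\[
c\,f'(t^{(\alpha)})\;\le\;\theta'_{j^*}\!\big(x^{(\alpha)}_{j^*}B\big)\;\le\;\theta'_{j^*}(B/m)\;\le\;\theta'_{j^*}\!\big(t^{(\alpha)}/m\big),
\]
the last step because $B/m\ge t^{(\alpha)}/m$. Dividing by $f'(t^{(\alpha)})>0$ gives $\theta'_{j^*}(t^{(\alpha)}/m)/f'(t^{(\alpha)})\ge c$, which contradicts Assumption~\ref{theta_f_asm} (second bullet)---forcing this ratio to $0$ as its argument grows---once $t^{(\alpha)}$ passes a threshold $T$ taken uniform over the finitely many $j\in[m]$. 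Hence $t^{(\alpha)}\le T$ on $S$, as required.

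I expect the main trap to be the obvious but doomed shortcut of bounding the tax-direction marginal benefit $\sum_j\alpha_j x_j\theta'_j(x_jB)$ uniformly over \emph{all} $x\in\Delta^m$: for utilities whose marginal value explodes at zero (e.g. $\theta_j(w)=1-w^{-a}$, which Assumption~\ref{theta_asm} permits), putting a vanishing share on $j$ sends $x_j\theta'_j(x_jB)$ to infinity, so the supremum over the whole simplex is infinite and no such bound exists. The decisive point---and the reason the argument above works---is that $x^{(\alpha)}$ is itself optimal, so the allocation first-order condition excludes these degenerate shares and lets me discard the whole sum in favor of the single dominant term at $j^*$, whose argument is provably of order $B$. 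That reduction is exactly what turns Assumption~\ref{theta_f_asm} into a uniform cap on $t^{(\alpha)}$.
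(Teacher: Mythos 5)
Your proof is correct and takes essentially the same route as the paper's: reduce everything to an upper bound on $t^{(\alpha)}$, invoke the MRS conditions at the optimum (which the paper simply cites from Lemma~\ref{tract_alpha} rather than re-deriving via a Lagrange multiplier as you do), select the funded good with share at least $1/m$, and chain the monotonicity of $\theta'_{j}$ into the vanishing ratio of Assumption~\ref{theta_f_asm} against the lower bound $\alpha_f/\alpha_j \ge c$ to obtain a uniform cap. The only difference is presentational---your first-order-condition derivation reproduces what Lemma~\ref{tract_alpha} already supplies.
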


\begin{lemma}\label{sol_cont}
 For any given $\alpha \in \Delta^m\times\R$, if $v_\alpha$ has a unique global maximum then
 $$\lim_{\beta \to \alpha}g(\beta)=g(\alpha).$$
\end{lemma}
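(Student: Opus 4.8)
The plan is to prove this as a sequential-continuity statement of Berge maximum-theorem type. First I would reduce to sequences: it suffices to show that for every sequence $\beta^{(k)} \to \alpha$ one has $g(\beta^{(k)}) \to g(\alpha)$. Observe that the hypothesis that $v_\alpha$ attains a (unique) \emph{global} maximum already forces $\alpha_f > 0$: if $\alpha_f \le 0$, then since $f$ is increasing and every $\theta_j$ is increasing, $v_\alpha(x,t)$ would be increasing in $t$ with no maximizer, contradicting the hypothesis. Hence I may pass to a neighborhood $N$ of $\alpha$ on which $\beta_f \ge \alpha_f/2 > 0$, and apply Lemma~\ref{lemma_bound_solutions} to $S=N$ to obtain a radius $R$ with $|g(\beta)| \le R$ for all $\beta \in N$. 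Together with the domain constraint $t \ge -b_0$, this confines every maximizer $g(\beta^{(k)})$ (for large $k$) to the compact set $K := \Delta^m \times [-b_0, R]$.

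The core of the argument is then standard. Given any subsequence of $(g(\beta^{(k)}))_k$, compactness of $K$ yields a further subsequence $g(\beta^{(k_j)}) \to y^\ast = (x^\ast,t^\ast) \in K$. I would pass the optimality inequality $v_{\beta^{(k_j)}}(g(\beta^{(k_j)})) \ge v_{\beta^{(k_j)}}(x,t)$ to the limit for each fixed $(x,t)$. Writing $v_\beta(y) = \beta \cdot \V(y)$ and using that $\V$ is continuous at points of finite value together with continuity of the dot product in $\beta$, the limit gives $v_\alpha(y^\ast) \ge v_\alpha(x,t)$ for all $(x,t)$; thus $y^\ast$ is a global maximizer of $v_\alpha$, and uniqueness forces $y^\ast = g(\alpha)$. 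Since every convergent subsequence of the bounded sequence $(g(\beta^{(k)}))$ then shares the limit $g(\alpha)$, the whole sequence converges to $g(\alpha)$, as required.

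The one delicate point, and the step I expect to be the main obstacle, is that $\V$ need not be continuous on all of $K$: when $\lim_{z\to 0}\theta_j(z) = -\infty$ (e.g.\ logarithmic $\theta_j$), $v_\beta$ blows down to $-\infty$ along the face $\{x_j = 0\}$, so I cannot blindly invoke joint continuity at a boundary limit $y^\ast$. I would rule this out using a uniform \emph{lower} bound on the optimal values: taking the fixed competitor $(x,t)=g(\alpha)$ gives $v_{\beta^{(k_j)}}(g(\beta^{(k_j)})) \ge v_{\beta^{(k_j)}}(g(\alpha)) \to v_\alpha(g(\alpha))$, which is finite. If $y^\ast$ lay on a face $\{x_j=0\}$ with $\alpha_j>0$ and $\theta_j(0)=-\infty$, then $\limsup_j v_{\beta^{(k_j)}}(g(\beta^{(k_j)})) = -\infty$, contradicting this lower bound. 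Hence $y^\ast$ lies in the region where $\V$ is finite and continuous, which validates the limit passage above; where $\theta_j(0)$ is finite the boundary causes no trouble, since $\V$ extends continuously there.
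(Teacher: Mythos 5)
Your proof is correct, but it takes a genuinely different route from the paper's. The paper argues directly on \emph{values}: writing $\epsilon=\alpha-\beta$, it sandwiches the optimality gap via
$0<\alpha\cdot\big(\V(g(\alpha))-\V(g(\beta))\big)\leq \epsilon\cdot\big(\V(g(\alpha))-\V(g(\beta))\big)\leq|\epsilon|\,\big|\V(g(\alpha))-\V(g(\beta))\big|$,
where the middle inequality uses that $g(\beta)$ maximizes $v_\beta$ and the last is Cauchy--Schwarz; boundedness of $g(\beta)$ (Lemma~\ref{lemma_bound_solutions}) then forces $v_\alpha(g(\beta))\to v_\alpha(g(\alpha))$, and only at the very end does it invoke uniqueness, boundedness and continuity of $v_\alpha$ to convert value convergence into argmax convergence. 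You instead run the classical Berge-maximum-theorem scheme: confine maximizers to a compact set, extract convergent subsequences, pass the optimality inequality $v_{\beta^{(k)}}(g(\beta^{(k)}))\geq v_{\beta^{(k)}}(x,t)$ to the limit against fixed competitors, and conclude by uniqueness. The trade-off is instructive. The paper's hedging inequality is \emph{quantitative}---it bounds the welfare gap linearly in $|\alpha-\beta|$, and this exact chain is recycled verbatim as the payment bound \eqref{eq:p_bound} in the proof of Theorem~\ref{price_conv_1}---whereas your subsequence argument gives qualitative continuity with no rate. On the other hand, you are more careful on two points the paper glosses over: you note that a unique global maximum forces $\alpha_f>0$, which is what legitimizes applying Lemma~\ref{lemma_bound_solutions} on a neighborhood of $\alpha$; and you explicitly handle the discontinuity of $\V$ on the faces $\{x_j=0\}$ when $\theta_j(0^+)=-\infty$ (e.g.\ logarithmic $\theta_j$), ruling out boundary limit points via the uniform lower bound $v_{\beta^{(k)}}(g(\beta^{(k)}))\geq v_{\beta^{(k)}}(g(\alpha))$. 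The paper's step ``$g(\beta)$ is bounded, therefore $|\epsilon|\cdot|\V(g(\alpha))-\V(g(\beta))|\to 0$'' tacitly needs $\V(g(\beta))$ bounded, which does not follow from boundedness of $g(\beta)$ alone in the logarithmic case; your competitor-based lower bound is precisely the patch that repairs this (and it transfers to the paper's argument as well). Both proofs coincide in the final uniqueness-plus-compactness step.
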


Before presenting the proof, note that (a) this statement is not obvious because we did not assume that $g$ is continuous, and (b) it holds for any function $g$ that follows Definition \ref{g_map}, i.e. the specific arbitrary choice of $g(\beta)$ in case $v_\beta$ has multiple optima is irrelevant.
\begin{proof}
By assumption, $0 < \alpha\cdot(\V(g(\alpha))-\V(g(\beta)))$. On the other hand, putting $\epsilon:=(\alpha-\beta)$,
\begin{align*}
   0 < \alpha\cdot(\V(g(\alpha))-\V(g(\beta)))&=\beta\cdot(\V(g(\alpha))-\V(g(\beta)))+\epsilon\cdot(\V(g(\alpha))-\V(g(\beta))\\
   & \leq \epsilon\cdot(\V(g(\alpha))-\V(g(\beta)))\leq |\epsilon|\cdot|\V(g(\alpha))-\V(g(\beta))| .
\end{align*}
where the second inequality is by definition of $g(\beta)$. By Lemma \ref{lemma_bound_solutions} $g(\beta)$ is bounded, and therefore $|\epsilon|\cdot |\V(g(\alpha))-\V(g(\beta))| \xrightarrow[\epsilon \to 0]{}0$. Ergo, 
$$\lim_{\beta \to \alpha}\alpha\cdot (\V(g(\alpha))-\V(g(\beta)))=\lim_{\beta \to \alpha}v_\alpha(g(\alpha))-v_\alpha(g(\beta))=0$$
Since $v_\alpha$ is continuous, $g(\alpha)$ is unique and $g(\beta)$ is bounded, it must be that $g(\beta) \to g(\alpha)$. 
\end{proof}

\subsection{Bounding individual payments} 
We need one more definition before stating our main result in this section.
\begin{definition}
   A \emph{characteristic triplet} in a budgeting instance $\I$ is $\sigma =(b_0,\mu,\bar{\alpha})$ where 
   \begin{itemize}
     \item $b_0:=\frac{B_0}{n}\geq 0$ is the non tax funded  budget source per capita.   
     \item $\bar{\alpha}:=\frac{1}{n}\sum_{k}\alpha_k \in \Delta^m \times \R$ is the mean preferences vector of all agents.
     \item $1/\mu<\alpha_{i,f}<\mu\ \forall i \in [n]$. 
 \end{itemize}
\end{definition}
\begin{theorem}\label{price_conv_1}
 Let $\sigma =(b_0,\mu,\bar{\alpha})$ such that $v_{\bar{\alpha}}$ has a unique global maximum at $g(\bar{\alpha})$. Then for every $\epsilon >0$ there exists $n_\epsilon(\sigma)$ such that in every budgeting instance with characteristic triplet $\sigma$ and $n > n_\epsilon(\sigma)$,
$$|P_i|<\epsilon\quad \forall i \in [n].$$ 
\end{theorem}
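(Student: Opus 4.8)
The plan is to first show that the raw VCG payments $p^{\mathsmaller{VCG}}_i$ vanish \emph{uniformly} in $i$ as $n\to\infty$, and then transfer this to the utility-sensitive payments $P_i$ through the formula in Definition~\ref{def_payments}. Writing the payments in the alternative representation,
$$p^{\mathsmaller{VCG}}_i=(n-1)\,\bar{\alpha}_{-i}\cdot\big(\V(g(\bar{\alpha}_{-i}))-\V(g(\bar{\alpha}))\big),$$
the main obstacle is the factor $n-1$: by Lemmas~\ref{lemma_bound_solutions} and \ref{sol_cont} the difference $\V(g(\bar{\alpha}_{-i}))-\V(g(\bar{\alpha}))$ tends to $0$, but multiplied by $n-1$ this alone does not obviously vanish. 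The resolution is a second-order cancellation coming from the fact that $g(\bar{\alpha})$ maximizes $v_{\bar{\alpha}}$ while $g(\bar{\alpha}_{-i})$ maximizes $v_{\bar{\alpha}_{-i}}$.

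Concretely, I would set $\Delta\V_i:=\V(g(\bar{\alpha}_{-i}))-\V(g(\bar{\alpha}))$ and record the two optimality inequalities $\bar{\alpha}_{-i}\cdot\Delta\V_i\ge 0$ and $\bar{\alpha}\cdot\Delta\V_i\le 0$. Combined with the identity $\bar{\alpha}_{-i}-\bar{\alpha}=\tfrac{\bar{\alpha}-\alpha_i}{n-1}$ (immediate from $n\bar{\alpha}=(n-1)\bar{\alpha}_{-i}+\alpha_i$), these give
$$0\le p^{\mathsmaller{VCG}}_i=(n-1)\,\bar{\alpha}_{-i}\cdot\Delta\V_i\le (n-1)(\bar{\alpha}_{-i}-\bar{\alpha})\cdot\Delta\V_i=(\bar{\alpha}-\alpha_i)\cdot\Delta\V_i\le|\bar{\alpha}-\alpha_i|\,|\Delta\V_i|,$$
where the first $\le$ uses $(n-1)\bar{\alpha}\cdot\Delta\V_i\le 0$. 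This is precisely where the dangerous factor $n-1$ is absorbed by the $1/(n-1)$ hidden in $\bar{\alpha}_{-i}-\bar{\alpha}$.

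It then remains to bound the right-hand side uniformly. Since every $\alpha_i$ lies in $\Delta^m\times(1/\mu,\mu)$, the quantity $|\bar{\alpha}-\alpha_i|$ is bounded by a constant $C=C(m,\mu)$ independent of $i$ and $n$. The same identity yields $|\bar{\alpha}_{-i}-\bar{\alpha}|\le C/(n-1)$, so $\bar{\alpha}_{-i}\to\bar{\alpha}$ uniformly in $i$; since $v_{\bar{\alpha}}$ has a unique maximizer, Lemma~\ref{sol_cont} gives $g(\bar{\alpha}_{-i})\to g(\bar{\alpha})$, and by Lemma~\ref{lemma_bound_solutions} all these optima remain in a fixed compact set on which $\V$ is uniformly continuous. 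Hence $\sup_i|\Delta\V_i|\to 0$, and therefore $\sup_i p^{\mathsmaller{VCG}}_i\to 0$.

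Finally I would transfer this to $P_i$. The optimal tax $t^*=t^{(\bar{\alpha})}$ is determined by $\sigma$ (together with the fixed functions) and is finite by Lemma~\ref{lemma_bound_solutions}. Using $P_i=-t^*+f^{-1}\!\big(f(t^*)+\tfrac{1}{\alpha_{i,f}}p^{\mathsmaller{VCG}}_i\big)$ and $\tfrac{1}{\alpha_{i,f}}<\mu$, the argument satisfies $\big|f(t^*)+\tfrac{1}{\alpha_{i,f}}p^{\mathsmaller{VCG}}_i-f(t^*)\big|\le\mu\sup_i p^{\mathsmaller{VCG}}_i\to 0$ uniformly in $i$. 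Since $f$ is continuous and strictly increasing, $f^{-1}$ is (uniformly) continuous on a neighborhood of $f(t^*)$, so $f^{-1}(\cdot)\to t^*$ and hence $P_i\to 0$ uniformly. Choosing $n_\epsilon(\sigma)$ large enough that this uniform bound drops below $\epsilon$ completes the argument.
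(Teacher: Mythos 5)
Your proof is correct and follows essentially the same route as the paper's: the same decomposition of $p^{\mathsmaller{VCG}}_i$ into a nonpositive term $(n-1)\bar{\alpha}\cdot\Delta\V_i$ and a remainder in which the mean identity absorbs the factor $n-1$, the same appeal to Lemmas~\ref{lemma_bound_solutions} and~\ref{sol_cont} to make $|\Delta\V_i|$ vanish, and the same transfer to $P_i$ via continuity of $f^{-1}$. The only differences are cosmetic (you use $\bar{\alpha}_{-i}-\bar{\alpha}=\frac{\bar{\alpha}-\alpha_i}{n-1}$ where the paper writes $\frac{1}{n}(\bar{\alpha}_{-i}-\alpha_i)$, and you are slightly more explicit about uniformity in $i$, which is a welcome touch).
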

As stated, Theorem \ref{price_conv_1} means that prices vanish if the population is sufficiently large while not taking into account the likely possibility that in reality, new members that join a community might change it's characteristic parameters $b_0,\mu$ and $\bar{\alpha}$. That is, we are saying that in any given community with known parameters $(b_0,\mu,\bar{\alpha})$, prices will be arbitrarily small if the population is large enough. Thus, as there is no reason to assume some correlation between these parameters and the population's size, the theorem essentially implies that prices are likely to be small, even negligible, in larger societies.
\begin{proof}
The $p^{\mathsmaller{VCG}}$ payments are defined as the loss an agent imposes on all others by participating, and are therefore always non-negative. Now,  
\begin{align}
0\leq p^{\mathsmaller{VCG}}_i&=(n-1)\bar{\alpha}_{-i}\Big(\V(g(\bar{\alpha}_{-i}))-\V(g(\bar{\alpha}))\Big)\notag\\
&=(n-1)\bar{\alpha}\Big(\V(g(\bar{\alpha}_{-i}))-\V(g(\bar{\alpha}))\Big)+(n-1)(\bar{\alpha}_{-i}-\bar{\alpha})\Big(\V(g(\bar{\alpha}_{-i}))-\V(g(\bar{\alpha}))\Big)\notag\\
& \leq (n-1)(\bar{\alpha}_{-i}-\bar{\alpha})\Big(\V(g(\bar{\alpha}_{-i}))-\V(g(\bar{\alpha}))\Big)\notag\\
&\leq \frac{n-1}{n}|\bar{\alpha}_{-i}-\alpha_i||\V(g(\bar{\alpha}_{-i}))-\V(g(\bar{\alpha}))| \xrightarrow[\ba_{-i} \to \ba ]{}0\label{eq:p_bound}
\end{align}
Where the first inequality is by definition of $g(\bar{\alpha})$,  and in the second we used $\bar{\alpha}=\frac{n-1}{n}\bar{\alpha}_{-i}+\frac{1}{n}\alpha_i$ and Cauchy–Schwarz. Since $|\bar{\alpha}_{-i}-\alpha_i|$ is bounded, and by Lemma \ref{sol_cont} $\V\circ g$ is continuous, we get the convergence at the end. Now, as $\ba_{-i} \to \ba$ with $n \to \infty$, 
\begin{align*}
  P_i=-t(\bar{\alpha}) +f^{-1}\Big(f(t(\bar{\alpha}))+\frac{1}{\alpha_{i,f}}p^{\mathsmaller{VCG}}_i\Big) \xrightarrow[n \to \infty]{}  -t(\bar{\alpha})+t(\bar{\alpha}) =0
\end{align*}
(note that $f^{-1}$ is continuous). Thus for any arbitrary small $\epsilon$ we can find $n_\epsilon(\sigma)$ that yields the result.
\end{proof}
Note that the mean-dependency is crucial for that result too. It yields that $\ba_{-i}$ approaches $\ba$ at a $1/n$ rate, precisely canceling the increase in the number of agents $n$.
\subsection{Non-Positive Payments}\label{sec_conv2}
The theorem above shows that individual payments vanish with $n$. Our next goal is to formulate them as non positive for all agents. Meaning, we want no agent to add any payment on top of the tax $t$, even negligible. Instead, they might be paid a "negative payment" that we can view as a bonus or a "tax discount" for their participation. We defer the complete discussion and formalities to the \hyperlink{app_sec:np_payments}{appendix}, and only outline here the main ideas.

First, we formulate the condition needed for a stronger convergence result.
\begin{definition}
  Define  $F:(\Delta^m\times\R)^2 \to \R^{m+1}$ as:
\begin{align*}
   F_j(\alpha,(x,t))&=\alpha_j\theta'_j(x_j(b_0+t))-\alpha_ff'(t)\quad  \forall 1\leq j \leq m \\ 
    F_{m+1}(\alpha,(x,t))&=\sum_jx_j-1
\end{align*} 
 We say that $g(\alpha)=(x^{(\alpha)},t^{(\alpha)})$ is a \emph{\textbf{"regular maximum"}} of $v_\alpha$ if $F(\alpha,(x^{(\alpha)},t^{(\alpha)}))=0$ and
$$\det\left[\pdv{F}{(x,t)} \big(\alpha,g(\alpha)\big)\right]\neq 0 $$
where $\pdv{F}{(x,t)}$ is the $(m+1)\times(m+1)$ Jacobi matrix of $F$ with respect to the variables $(x,t)$.
\end{definition}
Next, in Appendix~\ref{apx:np_payments} 
we show that a unique regular solution $g(\ba)$ implies that $g$ is differentiable in its surrounding. That will enable us, by linear approximation, to evaluate the convergence rate of the difference $\big(\V(g(\bar{\alpha}_{-i}))-\V(g(\bar{\alpha}))\big)$, which essentially determines the payment of agent $i$. Theorem \ref{price_conv_2} then shows that this rate is $O(1/n)$. 

\begin{theorem}\label{price_conv_2}
 Let $\sigma =(b_0,\mu,\bar{\alpha})$ such that $v_{\bar{\alpha}}$ has a regular unique global maximum $g(\bar{\alpha})$. Then there exist some $\mathcal{B} \in \R$ and $n(\sigma)$ such that 
 in every population with characteristic triplet $\sigma$ and size $n > n(\sigma)$,
$|P_i| \leq\frac{\mathcal{B}}{n}$  for all $i \in [n]$. 
\end{theorem}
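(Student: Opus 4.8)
The plan is to upgrade the qualitative convergence of Theorem~\ref{price_conv_1} to a quantitative $O(1/n)$ rate by controlling the \emph{modulus of continuity} of $\V\circ g$ near $\ba$ through differentiability rather than mere continuity. I would start from the chain of inequalities~\eqref{eq:p_bound} already established in the proof of Theorem~\ref{price_conv_1}, namely
$$0\le p^{\mathsmaller{VCG}}_i\le \tfrac{n-1}{n}\,|\ba_{-i}-\alpha_i|\cdot|\V(g(\ba_{-i}))-\V(g(\ba))|,$$
which extracts one factor of $1/n$ from the optimality of $g(\ba)$. Since $\ba$, $b_0$ and $\mu$ are fixed by $\sigma$, and all type coordinates lie in $\Delta^m\times(1/\mu,\mu)$, the factor $|\ba_{-i}-\alpha_i|$ is bounded by a constant $C=C(\mu,m)$ uniformly over $i$ and over all populations sharing $\sigma$. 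The whole statement thus reduces to showing $|\V(g(\ba_{-i}))-\V(g(\ba))|=O(1/n)$ uniformly in $i$.

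For this second factor I would combine the mean-dependency identity $\ba_{-i}-\ba=\tfrac1n(\ba_{-i}-\alpha_i)$, which gives $|\ba_{-i}-\ba|\le C/n$, with the extra regularity supplied by the appendix. Specifically, Appendix~\ref{apx:np_payments} shows that a regular unique maximum forces $g$ to be continuously differentiable on a neighborhood $U$ of $\ba$, via the implicit function theorem applied to $F(\alpha,g(\alpha))=0$ (the nonvanishing Jacobian $\det\big[\pdv{F}{(x,t)}\big]\neq 0$ is exactly the IFT hypothesis). Hence $\V\circ g$ is $C^1$ on $U$ and therefore Lipschitz with some constant $L$ on a closed ball $\bar B(\ba,\rho)\subset U$. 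Choosing $n(\sigma)$ so that $C/n<\rho$ for all $n>n(\sigma)$ forces every perturbed mean $\ba_{-i}$ into that ball, giving $|\V(g(\ba_{-i}))-\V(g(\ba))|\le L\,|\ba_{-i}-\ba|\le LC/n$. Plugging back yields $p^{\mathsmaller{VCG}}_i\le \tfrac{n-1}{n}\cdot C\cdot \tfrac{LC}{n}\le LC^2/n$.

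Finally I would transfer this bound to $P_i$. Writing $s:=f(t(\ba))$ and $\Delta_i:=\tfrac{1}{\alpha_{i,f}}p^{\mathsmaller{VCG}}_i$, the payment is $P_i=f^{-1}(s+\Delta_i)-f^{-1}(s)$, and since $\alpha_{i,f}>1/\mu$ we have $0\le\Delta_i\le \mu\, p^{\mathsmaller{VCG}}_i=O(1/n)$. As $f$ is increasing and, by Assumption~\ref{f_asm}, continuously differentiable with positive derivative at the interior optimum $t(\ba)$, the inverse $f^{-1}$ is locally Lipschitz near $s$, so $|P_i|\le L'|\Delta_i|$ once $n$ is large enough that $\Delta_i$ stays in the Lipschitz window. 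Collecting $\mathcal B:=L'\mu L C^2$, which depends only on $\sigma$ and the fixed $\theta_j,f$, gives $|P_i|\le\mathcal B/n$ for all $i$.

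The main obstacle is \emph{uniformity}: the bound must hold simultaneously for every agent $i$ and every admissible population sharing $\sigma$, with a single constant $\mathcal B$. The step that secures this is the passage from pointwise differentiability of $g$ at $\ba$ to a genuine Lipschitz estimate on a \emph{fixed} ball $\bar B(\ba,\rho)$, which replaces the unquantified $o(|\ba_{-i}-\ba|)$ remainder of a first-order expansion by the clean bound $L\,|\ba_{-i}-\ba|$. I must therefore verify that all perturbed means $\ba_{-i}$ (and all the $\Delta_i$) remain inside the respective neighborhoods once $n>n(\sigma)$, which follows from $|\ba_{-i}-\ba|\le C/n$ together with the uniform bound $C$ on $|\ba_{-i}-\alpha_i|$ guaranteed by the characteristic triplet.
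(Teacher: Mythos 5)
Your proposal is correct and takes essentially the same route as the paper's own proof: the same starting bound $0\le p^{\mathsmaller{VCG}}_i\le\frac{n-1}{n}|\ba_{-i}-\alpha_i|\,|\V(g(\ba_{-i}))-\V(g(\ba))|$ from Theorem~\ref{price_conv_1}, the same mean-dependency identity $\ba_{-i}-\ba=\frac{1}{n}(\ba_{-i}-\alpha_i)$ extracting the second factor of $1/n$, the same appeal to the IFT-based smoothness of $g$ (Corollary~\ref{cor_sol_diff}), and the same final transfer through $f^{-1}$ using $\alpha_{i,f}>1/\mu$ and $t(\ba)$ lying in the differentiability domain of $f$. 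Your one deviation---replacing the paper's first-order expansion with its $o(|\ba_{-i}-\ba|)$ remainder by a Lipschitz constant on a fixed closed ball $\bar{B}(\ba,\rho)$---is a harmless, slightly cleaner repackaging of the same estimate that makes the uniformity over $i$ explicit.
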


Now we construct non-positive payments the following way. We add to the VCG payments an amount that we pay back to every agent and  equals (or is greater than) the maximum payment she could have been charged with, given the partial type profile of her peers $\alpha_{-i}$. Theorem \ref{price_conv_2} will not only provide that bound, but also implies that the total amount paid to agents will not diverge as $n \to \infty$. Corollary \ref{cor_npp}  states the final result. 
\begin{corollary}
    In any budgeting instance with characteristic triplet $\sigma$ of size $n>n(\sigma)$ such that $v_{\bar{\alpha}}$ has a regular unique global maximum, there exist a payment assignment $P$ that satisfies: 
 \begin{enumerate}\label{cor_npp}
     \item $P_i \leq 0\ \forall i \in [n]$
     \item $\sum_{i \in [n]}P_i \geq -\mathcal{\tilde{B}}$ for some $\mathcal{\tilde{B}} \in \R_+$.
 \end{enumerate}
\end{corollary}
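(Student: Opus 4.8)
The plan is to exploit the standard freedom in the choice of the pivot term. By Lemma~\ref{k_prop} we have $u_i(\M(\vec{\alpha}))=\sum_k v_k(\Omega^*)-h(\vec{\alpha}_{-i})$, so adding to agent $i$'s pivot \emph{any} function of $\vec{\alpha}_{-i}$ merely shifts her utility by a constant independent of her own report and therefore leaves the mechanism SDSIC (this is what makes the claim non-trivial: we are not allowed to just take $P\equiv 0$, we must stay within the incentive-compatible family). Concretely, I would replace the Clarke payment by $p'_i:=p^{\mathsmaller{VCG}}_i-C_i(\vec{\alpha}_{-i})$, where $C_i(\vec{\alpha}_{-i}):=\sup_{\alpha_i}\, p^{\mathsmaller{VCG}}_i(\alpha_i,\vec{\alpha}_{-i})$ is the largest Clarke payment agent $i$ could incur over all true types consistent with $\sigma$ (i.e.\ $\alpha_{i,f}\in(1/\mu,\mu)$, $\alpha_i\in\td$). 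Since the Clarke payment is non-negative and $C_i$ upper-bounds it, $-C_i\le p'_i\le 0$; as $f^{-1}$ is increasing, the recomputed payment $P_i=-t^*+f^{-1}\!\big(f(t^*)+\tfrac{1}{\alpha_{i,f}}p'_i\big)$ then satisfies $P_i\le 0$, establishing condition~(1).

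For condition~(2) the task reduces to bounding the total rebate, i.e.\ showing $C_i=O(1/n)$ \emph{uniformly over the feasible reports $\alpha_i$}. Here I would reuse the inequality chain from the proof of Theorem~\ref{price_conv_1}: for every feasible $\alpha_i$, writing $\bar{\alpha}=\tfrac{n-1}{n}\bar{\alpha}_{-i}+\tfrac1n\alpha_i$,
\[
0\le p^{\mathsmaller{VCG}}_i(\alpha_i,\vec{\alpha}_{-i})\le \tfrac{n-1}{n}\,|\bar{\alpha}_{-i}-\alpha_i|\,\big|\V(g(\bar{\alpha}_{-i}))-\V(g(\bar{\alpha}))\big|.
\]
The first factor is bounded by a constant $D$ because the feasible types lie in the bounded set $\Delta^m\times(1/\mu,\mu)$. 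For the second factor I invoke the differentiability of $g$ near the regular maximum $g(\ba)$ established in Appendix~\ref{apx:np_payments}: $\V\circ g$ is Lipschitz with some constant $L$ on a fixed neighborhood of $g(\ba)$, and since $|\bar{\alpha}-\bar{\alpha}_{-i}|=\tfrac1n|\alpha_i-\bar{\alpha}_{-i}|=O(1/n)$, for $n$ past a threshold $n(\sigma)$ every perturbed mean lands in that neighborhood, giving $\big|\V(g(\bar{\alpha}_{-i}))-\V(g(\bar{\alpha}))\big|\le LD/n$. The two $O(1/n)$ factors against the $(n-1)$ prefactor collapse to $C_i\le \mathcal{B}'/n$ for a constant $\mathcal{B}'$ depending only on $\sigma$.

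It remains to push this $O(1/n)$ bound through the formula for $P_i$. The per-capita formulation fixes $g(\ba)=(x^*,t^*)$ independently of $n$ given $\sigma$, so $t^*$ and $f'(t^*)>0$ are constants. Using $p'_i\in[-C_i,0]$, $1/\alpha_{i,f}<\mu$, and a first-order expansion of $f^{-1}$ around $f(t^*)$, I obtain
\[
0\ge P_i\ge -t^*+f^{-1}\!\big(f(t^*)-\mu\mathcal{B}'/n\big)\ge -K/n
\]
for a constant $K=K(\sigma)$. Summing over the $n$ agents yields $\sum_{i}P_i\ge -K=:-\tilde{\mathcal{B}}$, which is condition~(2), with $\tilde{\mathcal{B}}\in\R_+$ as required.

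The hard part is the \emph{uniformity} invoked in the second paragraph. Theorem~\ref{price_conv_2} only bounds the payment at the truthful profile, whereas $C_i$ demands a supremum over every feasible report $\alpha_i$. The resolution rests on the observation that each such report displaces the mean by only $O(1/n)$, so for $n>n(\sigma)$ all the relevant optima $g(\bar{\alpha}(\alpha_i))$ lie in a single neighborhood of $g(\ba)$ on which the Jacobian $\partial F/\partial(x,t)$ remains non-degenerate and $g$ (hence $\V\circ g$) is Lipschitz with a \emph{common} constant. The delicate point is to confirm that this neighborhood, the Lipschitz constant $L$, and the threshold $n(\sigma)$ can all be chosen uniformly in $\alpha_i$; once that is secured, the rest is the routine smoothness estimate above.
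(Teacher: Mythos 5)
Your proposal is correct and takes essentially the same route as the paper's own construction in Appendix~\ref{apx:np_payments}: rebate each agent an amount depending only on $\vec{\alpha}_{-i}$ (hence preserving SDSIC) that upper-bounds her worst-case Clarke payment, then push the $O(1/n)$ estimate from Theorem~\ref{price_conv_2} --- linearizing $\V\circ g$ at the regular maximum, with the mean displaced by only $O(1/n)$ --- through $f^{-1}$ to bound the total rebate by a constant. The single point of divergence is cosmetic: where you take the abstract supremum $C_i(\vec{\alpha}_{-i})=\sup_{\alpha_i}p^{\mathsmaller{VCG}}_i(\alpha_i,\vec{\alpha}_{-i})$, the paper explicitly notes this maximum may not be efficiently computable and instead substitutes the closed-form bound $\frac{\gamma^2}{n}\bigl(\norm{\mathcal{D}_{\V\circ g}(\bar{\alpha}_{-i})}+1\bigr)+\frac{r}{n}$ with the Jacobian evaluated at $\bar{\alpha}_{-i}$, resolving the uniformity issue you flag exactly as you suggest, via continuity of the Jacobian and $\bar{\alpha}_{-i}\to\bar{\alpha}$.
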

 
\section{Biased Mechanisms}
In this section, we expand the US-VCG definition to a class of mechanisms that insert a bias towards an arbitrary desired outcome or a set of outcomes. Indeed, the designer may have some goal in mind that she may want to balance with welfare, for example a particular project she wants to promote, a legacy allocation from previous years, or an allocation she sees as fair. 


The general form of mechanisms in that class follows the familiar \emph{affine-maximizer} generalization of a VCG mechanism. (See \cite{nisan2007algorithmic}, p. 228). We start with choosing a \emph{bias function} $\ff(x,t): \Delta^m\times \R \mapsto \R$, that in one way or another favours---that is, assigns higher values to---outcomes we see as desirable. Note that $\ff(x,t)$ must be independent of the realization of preferences $\vec{\alpha}=(\alpha_1,\dots,\alpha_n)$. Then, we generalize the US-VCG definition as follows.

\begin{definition}[\textbf{Biased Utility-Sensitive VCG}]\label{def_busvcg}
For any bias function $\ff(x,t): \Delta^m\times \R \mapsto \R$, we define the \emph{Biased Utility-Sensitive VCG} (BUS-VCG)
mechanism:
$$\hM(\vec{\alpha}):= (\hg(\ba),\hP)$$
where
\begin{itemize}
    \item $\forall \alpha \in \td,\ \hg(\alpha):= (x,t) \in \argmax_{(x,t)} v_\alpha(x,t)+\ff(x,t)\ $ \footnote{For the time being, let us just assume that indeed such a maximum exists. Note that $\hg$ Obviously depends on our choice for $\ff$, but we do not refer to that explicitly as it should be clear in the context.}
    \item $\hP$ is the payment assignment 
\[\hP_i:=-\th+f^{-1}\Big(f(\th)+\frac{1}{\alpha_{i,f}}(p^{\mathsmaller{VCG}}_i+n\ff(\hg(\ba_{-i}))-n\ff(\hg(\ba))\Big)\]
\end{itemize}
\end{definition}

\subsection{Equitable/Egalitarian Allocations}\label{sec_equit_all}
We now give an example for a specific choice of allocations that promotes two well-known notions of distributive justice and that a social planner might favor. We focus here only on the choice of the allocation $x \in \Delta^m$ and not on the tax decision $t$, assuming the designer only cares for the division of the budget $B_t$ once it is determined. Conceptually, we desire a "fair" use of public resources, whereas t is a private resource. There are also technical reasons for refering to the allocation alone, that will be clarified shortly.

\begin{definition}\label{def_equitall}
For every $t \in (-\frac{B_0}{n},\infty)$, let
$\he^t := \argmin_{x \in \Delta^m} \max_{j,k \in [m]} |\theta_j(x_jB_t)-\theta_k(x_kB_t)|$
\end{definition}
Thanks to the fact that $\Theta_i(xB_t)$ is a convex combination of $\theta_j(x_jB_t),\ j \in [m]$ for every agent $i$, $\he^t$ minimizes the possible difference in satisfaction between any two agents, up to a point of a complete \emph{equitable} allocation where $\theta_j(\he^t_jB_t)=\theta_k(\he^t_kB_t)\ \forall j,k$ and consequently $\Theta_i(\he^tB_t)=\Theta_h(\he^tB_t)$ for every two agents $i,h$, whenever that is feasible under $B_t$. (Such cases are not too scarce, for example if $\theta_j(0)$ is the same value for all $j \in [m]$ then an equitable allocation exists for all $t$.) We state that formally in the following proposition, given without a formal proof.  
\begin{proposition}\label{prop_equit}
For all $t \in (\frac{B_0}{n},\infty),\ \he^t= \argmin_{x \Delta^m} \max_{\alpha,\beta \in \Delta^m}\Big|\Theta_{\alpha}(xB_t)-\Theta_{\beta}(xB_t)\Big|$
\end{proposition}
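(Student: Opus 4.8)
The plan is to show that the two objective functions being minimized are in fact \emph{pointwise identical} as functions of $x$ for every fixed $t$, so their minimizers must coincide and the proposition follows with no optimization argument at all. First I would fix an arbitrary $x \in \Delta^m$ and $t \in (\frac{B_0}{n},\infty)$, and introduce the shorthand $a_j := \theta_j(x_jB_t)$ for $j \in [m]$. With this notation the inner objective of Definition~\ref{def_equitall} is just the range of the $a_j$, namely $\max_{j,k\in[m]}|a_j-a_k| = \max_j a_j - \min_j a_j$.

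The key step is the observation flagged before the statement: for $\alpha \in \Delta^m$ we have $\Theta_\alpha(xB_t)=\sum_j \alpha_j a_j$, a convex combination of the scalars $a_1,\dots,a_m$. Since a linear functional on the simplex attains its extrema at vertices, $\max_{\alpha \in \Delta^m}\Theta_\alpha(xB_t)=\max_j a_j$ (put all weight on an index achieving the maximum) and symmetrically $\min_{\beta \in \Delta^m}\Theta_\beta(xB_t)=\min_j a_j$. Because $\alpha$ and $\beta$ range independently, $\max_{\alpha,\beta}|\Theta_\alpha-\Theta_\beta| = \max_\alpha \Theta_\alpha - \min_\beta \Theta_\beta$, which yields
\[
\max_{\alpha,\beta \in \Delta^m}\big|\Theta_\alpha(xB_t)-\Theta_\beta(xB_t)\big| \;=\; \max_j a_j - \min_j a_j \;=\; \max_{j,k \in [m]}\big|\theta_j(x_jB_t)-\theta_k(x_kB_t)\big|.
\]

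As this identity holds for every $x \in \Delta^m$, the two objectives agree everywhere on the simplex and therefore share the same set of minimizers; in particular the minimizer $\he^t$ of Definition~\ref{def_equitall} also minimizes the right-hand objective, which is exactly the claim. I expect essentially no obstacle beyond the elementary fact that extrema of a linear functional over a simplex occur at its vertices. The only point deserving a word of care is that the maximizing vertex for $\alpha$ and the minimizing vertex for $\beta$ may correspond to different indices, but since $\alpha$ and $\beta$ are optimized separately over $\Delta^m$ this is harmless and does not affect the computation above.
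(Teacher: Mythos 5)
Your proof is correct, and it is precisely a formalization of the argument the paper only sketches (the paper states Proposition~\ref{prop_equit} ``without a formal proof,'' remarking that $\Theta_\alpha(xB_t)$ is a convex combination of the $\theta_j(x_jB_t)$, so extremal types sit at vertices of $\Delta^m$). The pointwise identity between the two objectives, via extrema of a linear functional over the simplex, is exactly the intended reasoning, so no gap remains.
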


Note that $\he^t$ optimizes with respect to all hypothetical types, as our bias function $\ff$ cannot depend on realized votes. Thus, diverting the outcome towards $\he^t$ brings an improvement in terms of the \underline{"worst case"} result only. For that reason, searching for an equitable or egalitarian budget decisions, i.e. w.r.t. the tax decision as well, also does not make too much sense on the technical level, as in every budget decision $(x,t)$, the worst-case maximum gap in $-\alpha_{i,f}f(t)$ is between two hypothetical agents that have the two extreme values of $\alpha_{i,f}$. Next, our choice of $\he^t$ implements (in the worst case) another celebrated principle in Social Choice literature, namely the Egalitarian Rule.

\begin{proposition}\label{prop:eg_all}
For all $t \in (-\frac{B_0}{n},\infty),\ \he^t= \argmax_{x \in \Delta^m} \min_{\alpha \in \Delta^m}\Theta_{\alpha}(xB_t)$.
\end{proposition}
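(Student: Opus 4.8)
The plan is to prove the identity in two stages: first collapse the inner minimization on the right-hand side to a single coordinate, and then argue that the resulting egalitarian (maximin) allocation is exactly the gap-minimizer $\he^t$. Throughout, fix $t$ and write $y_j(x):=\theta_j(x_jB_t)$ for $j\in[m]$, and abbreviate the pairwise gap by $\psi(x):=\max_j y_j(x)-\min_j y_j(x)$, so that by Definition~\ref{def_equitall} we have $\he^t=\argmin_{x\in\Delta^m}\psi(x)$.

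First I would reduce the right-hand side. For a fixed $x$, the map $\alpha\mapsto\Theta_\alpha(xB_t)=\sum_j\alpha_j y_j(x)$ is a linear functional of $\alpha$ over the simplex $\Delta^m$, so its minimum is attained at a vertex, giving $\min_{\alpha\in\Delta^m}\Theta_\alpha(xB_t)=\min_{j\in[m]}y_j(x)$. The proposition therefore reduces to the purely allocational claim $\he^t=\argmax_{x\in\Delta^m}\min_j y_j(x)$, namely that minimizing the maximal pairwise gap coincides with maximizing the worst-off coordinate. Next I would characterize the maximin allocation. Let $v^*:=\max_x\min_j y_j(x)$, attained at some $x^*$. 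Using that each $\theta_j$ is strictly increasing (a consequence of Assumption~\ref{theta_asm}, since a strictly concave weakly increasing function is strictly increasing) together with the budget constraint $\sum_j x_j=1$, a standard exchange argument shows that any coordinate $k$ with $x^*_k>0$ and $y_k(x^*)>v^*$ could surrender a little of its budget to the minimizing coordinates, strictly raising $\min_j y_j$ and contradicting optimality. Hence every coordinate that receives positive budget satisfies $y_j(x^*)=v^*$, while each coordinate with $x^*_j=0$ sits at its floor $y_j=\theta_j(0)\ge v^*$. This pins down $x^*$ via the budget equation $\sum_{j:\theta_j(0)\le v^*}\theta_j^{-1}(v^*)=B_t$, whose left-hand side is continuous and strictly increasing in $v^*$, so $v^*$ and thus $x^*$ are unique.

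It then remains to show that this unique maximin allocation $x^*$ is the unique minimizer of $\psi$. I would split into two cases. If full equitability is feasible then $y_j(x^*)\equiv v^*$ and $\psi(x^*)=0$ is the global minimum; any other $x$ with $\psi(x)=0$ equalizes all $y_j$, and the budget equation forces it to equal $x^*$. Otherwise let $k^*\in\argmax_{k:\,x^*_k=0}\theta_k(0)$; since $x_{k^*}\ge 0$ and $\theta_{k^*}$ is increasing, $\max_j y_j(x)\ge y_{k^*}(x)\ge\theta_{k^*}(0)$ holds for every $x$, while $\min_j y_j(x)\le v^*$ holds by definition of $v^*$. Consequently $\psi(x)\ge\theta_{k^*}(0)-v^*$ uniformly in $x$, with equality at $x^*$ (whose maximum is exactly $\theta_{k^*}(0)$ and whose minimum is exactly $v^*$); any $x$ achieving equality must have $\min_j y_j(x)=v^*$, i.e. be maximin, hence equal $x^*$. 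In both cases $\he^t=x^*=\argmax_x\min_j\theta_j(x_jB_t)$, which is the desired statement.

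The main obstacle is the non-equitable corner case, where no allocation equalizes all coordinates and the two objectives $\max_j y_j-\min_j y_j$ and $\min_j y_j$ could a priori decouple. The insight that resolves it is that a coordinate receiving zero budget is frozen at its floor value $\theta_j(0)$, which provides an $x$-independent lower bound on the attainable maximum; this lets the maximin allocation simultaneously minimize the maximum and maximize the minimum, so it necessarily minimizes their difference. A secondary point requiring care is the uniqueness of the maximin allocation, which I obtain from the strict monotonicity of the $\theta_j^{-1}$ in the budget equation rather than from any unjustified appeal to continuity of $g$.
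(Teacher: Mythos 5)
Your proof is correct, and it follows the same two-stage skeleton as the paper's: first collapse $\min_{\alpha\in\Delta^m}\Theta_\alpha(xB_t)$ to $\min_j\theta_j(x_jB_t)$ via linearity in $\alpha$ (the paper phrases this as the minimum being attained by a unit-vector type), then identify the gap-minimizer with the maximin allocation. The difference is in the second stage, and it is substantive. The paper disposes of it in one sentence---``one cannot increase both the minimum and maximum of $\theta_{j'}(x_{j'}B_t)$ while keeping the budget fixed, and thereby cannot increase the minimum without reducing the difference''---but that claim is literally false for $m\geq 3$: a middle coordinate whose $\theta_j$ is nearly flat at its current allocation can surrender a large amount of budget at negligible cost to its own value, funding simultaneous strict increases of both the bottom and the top coordinate (e.g.\ $\theta_2(z)=5-5e^{-100z}$ donating to $\theta_1(z)=\sqrt{z}$ and $\theta_3(z)=10\sqrt{z}$). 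Your argument supplies exactly what is missing: the exchange argument pins down the maximin allocation $x^*$ (positive-budget coordinates equalized at $v^*$, zero-budget coordinates frozen at floors $\theta_j(0)\geq v^*$), and in the non-equitable case the observation that $\max_j\theta_j(x_jB_t)\geq\theta_{k^*}(0)$ holds \emph{uniformly in $x$} gives the lower bound $\psi(x)\geq\theta_{k^*}(0)-v^*$, attained at $x^*$---the floor of an unfunded coordinate, not a local budget-exchange, is the true reason the two objectives coincide. You also get uniqueness of $x^*$ from the budget equation, which neither the proposition nor the paper's proof addresses (both tacitly treat the argmin/argmax as single-valued).

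One small caveat: Assumption \ref{theta_asm} permits $\D=\R_{++}$ (e.g.\ logarithmic $\theta_j$), in which case $\theta_j(0)$ is undefined ($\theta_j\to-\infty$ at $0$) and your non-equitable branch does not parse as written. This is harmless---in that case every gap-minimizer and every maximin allocation is interior and full equalization is feasible by your water-filling equation, so only your first case arises---but the proof should say so explicitly rather than write $\theta_j(0)$ unconditionally.
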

\begin{proof}
 The minimum level of satisfaction in $(x,t)$ is attained by a unit-vector type $\ha$ such that 
 $\Theta_{\ha}(xB_t)=\theta_j(x_jB_t)=\min_{j'}\theta_{j'}(x_{j'}B_t)$. Now, it is clear that 
 \begin{align*}
    \he^t :&= \argmin_{x \in \Delta^m} \max_{j,k \in [m]} |\theta_j(x_jB_t)-\theta_k(x_kB_t)| 
           = \argmax_{x \in \Delta^m} \min_{j \in [m]} \theta_j(x_jB_t)
 \end{align*}
 \sloppy as one cannot increase both the minimum and maximum of \( \theta_{j'}(x_{j'}B_t),\ j' \in [m]\) while keeping the budget $B_t$ fixed, and thereby cannot increase the minimum without reducing the difference. 
\end{proof}

\subsection{Properties inherited by $\hM$}
We would obviously like to preserve useful properties of the US-VCG mechanism when generalizing to BUS-VCG. Some of them carry over quite easily. First, note that mean-dependency (Observation \ref{mean_obs}) holds for $\hg$ too. (We in fact rely on that in the definition of $\hM$). It is not difficult to see now that DSIC extends under any choice for $\ff$, as the definition of $\hP$ imitates the situation where $v_i(x,t) \to v_i(x,t)+\ff(x,t)$  and $h(\alpha_{-i}) \to h(\alpha_{-i})+n\ff(\hg(\ba_{-i}))$ for every agent $i$.\footnote{See \cite{nisan2007algorithmic} for a rigorous proof.} One also easily checks that payments vanish if $\hg(\ba)$ satisfies the same demands we needed earlier for $g(\ba)$, that is, continuity for Theorem \ref{price_conv_1} and differentiability for \ref{price_conv_2} (Albeit, for $\hg$ we will have to explicitly demand that while for $g$ we had other terms that implied its smoothness).
\begin{corollary}\label{cor_biasDSIC}
The BUS-VCG mechanism is DSIC. Moreover, Theorem \ref{price_conv_1} (\ref{price_conv_2}) hold if $\hg$ is continuous (differentiable) near the solution $\hg(\ba)$.  
\end{corollary}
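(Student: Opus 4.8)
The plan is to prove all three claims by the \emph{affine-maximizer} reduction already hinted at, i.e.\ by passing to the \emph{augmented valuations} $\tilde{v}_i := v_i + \ff$ and invoking the corresponding unbiased results. First I would record that mean-dependency (Observation~\ref{mean_obs}) survives the bias: since $\ff$ is common to all agents, $\sum_i(v_i+\ff)(x,t)=n\,v_{\ba}(x,t)+n\ff(x,t)=n(v_{\ba}+\ff)(x,t)$, so the BUS-VCG outcome $\hg(\ba)$ is exactly the maximizer of $\sum_i\tilde{v}_i$. Next I would establish the analog of Lemma~\ref{k_prop}: plugging the definition of $\hP_i$ into $u_i(\hg(\ba),\hP_i)=\sum_j\alpha_{i,j}\theta_j(\hx_jB_{\th})-\alpha_{i,f}f(\th+\hP_i)$ and using, by construction,
\[ \alpha_{i,f}f(\th+\hP_i)=\alpha_{i,f}f(\th)+p^{\mathsmaller{VCG}}_i+n\ff(\hg(\ba_{-i}))-n\ff(\hg(\ba)), \]
the same cancellation as in Lemma~\ref{k_prop} yields
\[ u_i(\hg(\ba),\hP_i)=\sum_k\tilde{v}_k(\hg(\ba))-\tilde{h}(\vec{\alpha}_{-i}),\qquad \tilde{h}(\vec{\alpha}_{-i}):=h(\vec{\alpha}_{-i})+n\ff(\hg(\ba_{-i})). \]
Here $\tilde{h}$ is independent of $i$'s report, so DSIC follows verbatim from the US-VCG (affine-maximizer) argument: any misreport only changes the realized augmented welfare from $n(v_{\ba}+\ff)(\hg(\ba))$ to $n(v_{\ba}+\ff)(\hg(\ba'))\le n(v_{\ba}+\ff)(\hg(\ba))$ (cf.~\cite{nisan2007algorithmic}).

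For the two convergence claims I would rerun the proofs of Theorems~\ref{price_conv_1} and~\ref{price_conv_2} with $g$ replaced by $\hg$ and $v$ by $v+\ff$. Writing $\tilde{p}^{\mathsmaller{VCG}}_i=(n-1)\big[(v_{\ba_{-i}}+\ff)(\hg(\ba_{-i}))-(v_{\ba_{-i}}+\ff)(\hg(\ba))\big]\ge 0$ for the Clarke payment of the augmented problem, the key estimate of Theorem~\ref{price_conv_1} carries over because the bias cancels in the type difference, $(v_{\ba_{-i}}+\ff)-(v_{\ba}+\ff)=(\ba_{-i}-\ba)\cdot\V$, giving
\[ 0\le \tilde{p}^{\mathsmaller{VCG}}_i\le \tfrac{n-1}{n}\,|\ba_{-i}-\alpha_i|\,\big|\V(\hg(\ba_{-i}))-\V(\hg(\ba))\big|\xrightarrow[\ba_{-i}\to\ba]{}0, \]
where continuity of $\V\circ\hg$ is supplied by the hypothesis that $\hg$ is continuous near $\hg(\ba)$, in place of Lemma~\ref{sol_cont}. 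The bias term inside $f^{-1}$ rearranges as $p^{\mathsmaller{VCG}}_i+n(\ff(\hg(\ba_{-i}))-\ff(\hg(\ba)))=\tilde{p}^{\mathsmaller{VCG}}_i+(\ff(\hg(\ba_{-i}))-\ff(\hg(\ba)))$, whose residual summand also vanishes by continuity of $\ff\circ\hg$; hence the argument of $f^{-1}$ tends to $f(\th)$ and $\hP_i\to 0$. For Theorem~\ref{price_conv_2}, differentiability of $\hg$ near $\hg(\ba)$ gives $\hg(\ba_{-i})-\hg(\ba)=O(1/n)$ because $\ba_{-i}-\ba=\tfrac1n(\ba_{-i}-\alpha_i)=O(1/n)$, so every difference above is $O(1/n)$ and, $f^{-1}$ being Lipschitz near $f(\th)$, we obtain $|\hP_i|\le\mathcal{B}/n$.

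The main obstacle is isolating the effect of the bias so that it breaks neither the payment identity nor the rate. For DSIC the delicate point is that the identity relies on $f^{-1}$ exactly undoing the nonlinearity of $f$ (the same trick as Lemma~\ref{k_prop}); what makes it go through despite the bias is precisely that $\ff$ is common to all agents and report-symmetric, which preserves both mean-dependency and the report-independence of $\tilde{h}$. For the convergence results the subtlety is the extra term $n(\ff(\hg(\ba_{-i}))-\ff(\hg(\ba)))$ inside $f^{-1}$: one must check (a)~that its $\ff$-part cancels in the difference bounding $\tilde{p}^{\mathsmaller{VCG}}_i$, and (b)~that its residual scales like $\hg(\ba_{-i})-\hg(\ba)$. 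Both are controlled by the assumed continuity (resp.\ differentiability) of $\hg$ together with the corresponding regularity of $\ff$. Crucially, $\hg$'s smoothness must now be \emph{assumed} rather than derived as for $g$: since $\ff$ is arbitrary it may destroy the MRS/Inada structure (Assumption~\ref{asm_inada}) that forced smoothness of the unbiased $g$, which is exactly why the corollary states these hypotheses explicitly.
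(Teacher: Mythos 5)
Your proposal is correct and takes essentially the same route as the paper, which proves this corollary only in sketch form: the affine-maximizer reduction $\tilde{v}_i=v_i+\ff$ with $\tilde{h}(\vec{\alpha}_{-i})=h(\vec{\alpha}_{-i})+n\ff(\hg(\ba_{-i}))$, mean-dependency of the augmented welfare, and rerunning Theorems~\ref{price_conv_1} and~\ref{price_conv_2} with $\hg$ in place of $g$ (with $\hg$'s smoothness now assumed rather than derived, exactly as the paper's parenthetical remark notes). Your extra care about the residual term $\ff(\hg(\ba_{-i}))-\ff(\hg(\ba))$ inside $f^{-1}$ is a detail the paper glosses over, though you do not strictly need separate regularity of $\ff$: writing it as $\tilde{p}^{\mathsmaller{VCG}}_i/(n-1)-\ba_{-i}\cdot\big(\V(\hg(\ba_{-i}))-\V(\hg(\ba))\big)$ shows it already vanishes (at rate $O(1/n)$, respectively) from the assumed continuity (differentiability) of $\hg$ alone.
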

SDSIC, however, will not carry over that easily. The condition we need for that is $\alpha \neq \alpha' \implies \hg(\alpha) \neq \hg(\alpha')$, which is not satisfied by any arbitrary choice of $\ff(x,t)$. For example,

\[\ff(x,t)=
\begin{cases}
K \ &(x,t)\in W\\
0 \  &(x,t)\notin W
\end{cases}\]
where $K \in R$ is some large number and $W \subset \ds$ is the set of favoured outcomes, does not satisfy that. In 
Appendix \ref{appsec_biasmech} we describe a class of bias functions that do preserve SDSIC.

\section{Concluding Remarks}\label{sec_conclude}
We presented a collective decision budgeting mechanism, the US-VCG mechanism, that concerns both the allocation and total volume of expenses. It is  essentially a VCG mechanism adjusted to our setting, in which we had to tackle a few issues. Mainly, we had to reformulate the payments to suit our preference model of non quasi-linear utilities. The US-VCG mechanism is welfare-maximizing and DSIC in the most general setup, and we specified the conditions in which it further satisfies  strict DSIC and consequently also resistance against coalition manipulations. In Section \ref{sec_pconv}, we showed that the modified VCG payments the mechanism charges become negligible in large populations, which is especially relevant in the Participatory Budgeting application that stood in the center of our focus. Finally, we showed a generalization of the US-VCG mechanism that inserts a bias towards any set of outcomes of one's choice.   

\newpar{Future Directions} In the introduction, we discussed the theoretic advantages of an additive concave utility model over other examples from the literature. The obvious downside is, when considering a mechanism that aggregates preferences, is the difficulty in assessing the concrete functions we should assume. While we can nevertheless argue that any such functions are probably a better approximation for the true underlying preferences than previous suggestions, future experimental research attempting to evaluate, similarly to those performed in relation to the disutility monetary function $f$ \cite{booij2010parametric}, could make a valuable contribution to the field.

\bibliographystyle{ACM-Reference-Format.bst}
\bibliography{Ref_EC23.bib}


\begin{thebibliography}{38}


\ifx \showCODEN    \undefined \def \showCODEN     #1{\unskip}     \fi
\ifx \showDOI      \undefined \def \showDOI       #1{#1}\fi
\ifx \showISBNx    \undefined \def \showISBNx     #1{\unskip}     \fi
\ifx \showISBNxiii \undefined \def \showISBNxiii  #1{\unskip}     \fi
\ifx \showISSN     \undefined \def \showISSN      #1{\unskip}     \fi
\ifx \showLCCN     \undefined \def \showLCCN      #1{\unskip}     \fi
\ifx \shownote     \undefined \def \shownote      #1{#1}          \fi
\ifx \showarticletitle \undefined \def \showarticletitle #1{#1}   \fi
\ifx \showURL      \undefined \def \showURL       {\relax}        \fi
\providecommand\bibfield[2]{#2}
\providecommand\bibinfo[2]{#2}
\providecommand\natexlab[1]{#1}
\providecommand\showeprint[2][]{arXiv:#2}

\bibitem[Aziz et~al\mbox{.}(2019)]%
        {aziz2019fair}
\bibfield{author}{\bibinfo{person}{Haris Aziz}, \bibinfo{person}{Anna
  Bogomolnaia}, {and} \bibinfo{person}{Herv{\'e} Moulin}.}
  \bibinfo{year}{2019}\natexlab{}.
\newblock \showarticletitle{Fair mixing: the case of dichotomous preferences}.
  In \bibinfo{booktitle}{\emph{ACM-EC 2019}}. \bibinfo{pages}{753--781}.
\newblock


\bibitem[Aziz and Ganguly(2021)]%
        {aziz2021participatory}
\bibfield{author}{\bibinfo{person}{Haris Aziz} {and} \bibinfo{person}{Aditya
  Ganguly}.} \bibinfo{year}{2021}\natexlab{}.
\newblock \showarticletitle{Participatory Funding Coordination: Model, Axioms
  and Rules}. In \bibinfo{booktitle}{\emph{ADT'21}}. Springer,
  \bibinfo{pages}{409--423}.
\newblock


\bibitem[Aziz et~al\mbox{.}(2017)]%
        {aziz2017proportionally}
\bibfield{author}{\bibinfo{person}{Haris Aziz}, \bibinfo{person}{Barton Lee},
  {and} \bibinfo{person}{Nimrod Talmon}.} \bibinfo{year}{2017}\natexlab{}.
\newblock \showarticletitle{Proportionally representative participatory
  budgeting: Axioms and algorithms}.
\newblock \bibinfo{journal}{\emph{arXiv preprint arXiv:1711.08226}}
  (\bibinfo{year}{2017}).
\newblock


\bibitem[Bachrach(2010)]%
        {bachrach2010honor}
\bibfield{author}{\bibinfo{person}{Yoram Bachrach}.}
  \bibinfo{year}{2010}\natexlab{}.
\newblock \showarticletitle{Honor among thieves: collusion in multi-unit
  auctions}. In \bibinfo{booktitle}{\emph{AAMAS'10}}.
  \bibinfo{pages}{617--624}.
\newblock


\bibitem[Barber{\`a} and Jackson(1994)]%
        {barbera1994characterization}
\bibfield{author}{\bibinfo{person}{Salvador Barber{\`a}} {and}
  \bibinfo{person}{Matthew Jackson}.} \bibinfo{year}{1994}\natexlab{}.
\newblock \showarticletitle{A characterization of strategy-proof social choice
  functions for economies with pure public goods}.
\newblock \bibinfo{journal}{\emph{Social Choice and Welfare}}
  \bibinfo{volume}{11}, \bibinfo{number}{3} (\bibinfo{year}{1994}),
  \bibinfo{pages}{241--252}.
\newblock


\bibitem[Benade et~al\mbox{.}(2021)]%
        {benade2021preference}
\bibfield{author}{\bibinfo{person}{Gerdus Benade}, \bibinfo{person}{Swaprava
  Nath}, \bibinfo{person}{Ariel~D Procaccia}, {and} \bibinfo{person}{Nisarg
  Shah}.} \bibinfo{year}{2021}\natexlab{}.
\newblock \showarticletitle{Preference elicitation for participatory
  budgeting}.
\newblock \bibinfo{journal}{\emph{Management Science}} \bibinfo{volume}{67},
  \bibinfo{number}{5} (\bibinfo{year}{2021}), \bibinfo{pages}{2813--2827}.
\newblock


\bibitem[Bernard(1977)]%
        {bernard1977optimal}
\bibfield{author}{\bibinfo{person}{Alain Bernard}.}
  \bibinfo{year}{1977}\natexlab{}.
\newblock \showarticletitle{Optimal taxation and public production with budget
  constraints}. In \bibinfo{booktitle}{\emph{The Economics of Public Services:
  Proceedings of a Conference held by the International Economic Association at
  Turin, Italy}}. Springer, \bibinfo{pages}{361--389}.
\newblock


\bibitem[Bernheim et~al\mbox{.}(1987)]%
        {bernheim1987coalition}
\bibfield{author}{\bibinfo{person}{B~Douglas Bernheim},
  \bibinfo{person}{Bezalel Peleg}, {and} \bibinfo{person}{Michael~D Whinston}.}
  \bibinfo{year}{1987}\natexlab{}.
\newblock \showarticletitle{Coalition-proof nash equilibria i. concepts}.
\newblock \bibinfo{journal}{\emph{Journal of economic theory}}
  \bibinfo{volume}{42}, \bibinfo{number}{1} (\bibinfo{year}{1987}),
  \bibinfo{pages}{1--12}.
\newblock


\bibitem[Bhaskar et~al\mbox{.}(2018)]%
        {bhaskar2018truthful}
\bibfield{author}{\bibinfo{person}{Umang Bhaskar}, \bibinfo{person}{Varsha
  Dani}, {and} \bibinfo{person}{Abheek Ghosh}.}
  \bibinfo{year}{2018}\natexlab{}.
\newblock \showarticletitle{Truthful and near-optimal mechanisms for welfare
  maximization in multi-winner elections}. In
  \bibinfo{booktitle}{\emph{AAAI'18}}, Vol.~\bibinfo{volume}{32}.
\newblock


\bibitem[Bjorvatn and Schjelderup(2002)]%
        {bjorvatn2002tax}
\bibfield{author}{\bibinfo{person}{Kjetil Bjorvatn} {and}
  \bibinfo{person}{Guttorm Schjelderup}.} \bibinfo{year}{2002}\natexlab{}.
\newblock \showarticletitle{Tax competition and international public goods}.
\newblock \bibinfo{journal}{\emph{International tax and public finance}}
  \bibinfo{volume}{9} (\bibinfo{year}{2002}), \bibinfo{pages}{111--120}.
\newblock


\bibitem[Booij et~al\mbox{.}(2010)]%
        {booij2010parametric}
\bibfield{author}{\bibinfo{person}{Adam~S Booij}, \bibinfo{person}{Bernard~MS
  Van~Praag}, {and} \bibinfo{person}{Gijs Van De~Kuilen}.}
  \bibinfo{year}{2010}\natexlab{}.
\newblock \showarticletitle{A parametric analysis of prospect theory’s
  functionals for the general population}.
\newblock \bibinfo{journal}{\emph{Theory and Decision}} \bibinfo{volume}{68},
  \bibinfo{number}{1-2} (\bibinfo{year}{2010}), \bibinfo{pages}{115--148}.
\newblock


\bibitem[Brandl et~al\mbox{.}(2020)]%
        {brandl2020funding}
\bibfield{author}{\bibinfo{person}{Florian Brandl}, \bibinfo{person}{Felix
  Brandt}, \bibinfo{person}{Dominik Peters}, \bibinfo{person}{Christian
  Stricker}, {and} \bibinfo{person}{Warut Suksompong}.}
  \bibinfo{year}{2020}\natexlab{}.
\newblock \showarticletitle{Funding public projects: A case for the Nash
  product rule}.
\newblock \bibinfo{journal}{\emph{arXiv preprint arXiv:2005.07997}}
  (\bibinfo{year}{2020}).
\newblock


\bibitem[Cavallo(2006)]%
        {cavallo2006optimal}
\bibfield{author}{\bibinfo{person}{Ruggiero Cavallo}.}
  \bibinfo{year}{2006}\natexlab{}.
\newblock \showarticletitle{Optimal decision-making with minimal waste:
  Strategyproof redistribution of VCG payments}. In
  \bibinfo{booktitle}{\emph{Proceedings of the fifth international joint
  conference on Autonomous agents and multiagent systems}}.
  \bibinfo{pages}{882--889}.
\newblock


\bibitem[Clarke(1971)]%
        {clarke1971multipart}
\bibfield{author}{\bibinfo{person}{Edward~H Clarke}.}
  \bibinfo{year}{1971}\natexlab{}.
\newblock \showarticletitle{Multipart pricing of public goods}.
\newblock \bibinfo{journal}{\emph{Public choice}} (\bibinfo{year}{1971}),
  \bibinfo{pages}{17--33}.
\newblock


\bibitem[Conitzer and Sandholm(2006)]%
        {conitzer2006failures}
\bibfield{author}{\bibinfo{person}{Vincent Conitzer} {and}
  \bibinfo{person}{Tuomas Sandholm}.} \bibinfo{year}{2006}\natexlab{}.
\newblock \showarticletitle{Failures of the VCG mechanism in combinatorial
  auctions and exchanges}. In \bibinfo{booktitle}{\emph{AAMAS'06}}.
  \bibinfo{pages}{521--528}.
\newblock


\bibitem[Fain et~al\mbox{.}(2016)]%
        {fain2016core}
\bibfield{author}{\bibinfo{person}{Brandon Fain}, \bibinfo{person}{Ashish
  Goel}, {and} \bibinfo{person}{Kamesh Munagala}.}
  \bibinfo{year}{2016}\natexlab{}.
\newblock \showarticletitle{The core of the participatory budgeting problem}.
  In \bibinfo{booktitle}{\emph{WINE'16}}. Springer, \bibinfo{pages}{384--399}.
\newblock


\bibitem[Fairstein et~al\mbox{.}(2019)]%
        {fairstein2019modeling}
\bibfield{author}{\bibinfo{person}{Roy Fairstein}, \bibinfo{person}{Adam Lauz},
  \bibinfo{person}{Kobi Gal}, {and} \bibinfo{person}{Reshef Meir}.}
  \bibinfo{year}{2019}\natexlab{}.
\newblock \showarticletitle{Modeling peoples voting behavior with poll
  information}.
\newblock \bibinfo{journal}{\emph{arXiv preprint arXiv:1909.10492}}
  (\bibinfo{year}{2019}).
\newblock


\bibitem[Freeman et~al\mbox{.}(2019)]%
        {freeman2019truthful}
\bibfield{author}{\bibinfo{person}{Rupert Freeman}, \bibinfo{person}{David~M
  Pennock}, \bibinfo{person}{Dominik Peters}, {and} \bibinfo{person}{Jennifer
  Wortman~Vaughan}.} \bibinfo{year}{2019}\natexlab{}.
\newblock \showarticletitle{Truthful aggregation of budget proposals}. In
  \bibinfo{booktitle}{\emph{ACM-EC'19}}. \bibinfo{pages}{751--752}.
\newblock


\bibitem[Garg et~al\mbox{.}(2019)]%
        {garg2019iterative}
\bibfield{author}{\bibinfo{person}{Nikhil Garg}, \bibinfo{person}{Vijay
  Kamble}, \bibinfo{person}{Ashish Goel}, \bibinfo{person}{David Marn}, {and}
  \bibinfo{person}{Kamesh Munagala}.} \bibinfo{year}{2019}\natexlab{}.
\newblock \showarticletitle{Iterative local voting for collective
  decision-making in continuous spaces}.
\newblock \bibinfo{journal}{\emph{Journal of Artificial Intelligence Research}}
   \bibinfo{volume}{64} (\bibinfo{year}{2019}), \bibinfo{pages}{315--355}.
\newblock


\bibitem[Gibbard(1973)]%
        {gibbard1973manipulation}
\bibfield{author}{\bibinfo{person}{Allan Gibbard}.}
  \bibinfo{year}{1973}\natexlab{}.
\newblock \showarticletitle{Manipulation of voting schemes: a general result}.
\newblock \bibinfo{journal}{\emph{Econometrica: journal of the Econometric
  Society}} (\bibinfo{year}{1973}), \bibinfo{pages}{587--601}.
\newblock


\bibitem[Goel et~al\mbox{.}(2019)]%
        {goel2019knapsack}
\bibfield{author}{\bibinfo{person}{Ashish Goel}, \bibinfo{person}{Anilesh~K
  Krishnaswamy}, \bibinfo{person}{Sukolsak Sakshuwong}, {and}
  \bibinfo{person}{Tanja Aitamurto}.} \bibinfo{year}{2019}\natexlab{}.
\newblock \showarticletitle{Knapsack voting for participatory budgeting}.
\newblock \bibinfo{journal}{\emph{ACM Transactions on Economics and Computation
  (TEAC)}} \bibinfo{volume}{7}, \bibinfo{number}{2} (\bibinfo{year}{2019}),
  \bibinfo{pages}{1--27}.
\newblock


\bibitem[Guo and Conitzer(2009)]%
        {guo2009worst}
\bibfield{author}{\bibinfo{person}{Mingyu Guo} {and} \bibinfo{person}{Vincent
  Conitzer}.} \bibinfo{year}{2009}\natexlab{}.
\newblock \showarticletitle{Worst-case optimal redistribution of VCG payments
  in multi-unit auctions}.
\newblock \bibinfo{journal}{\emph{Games and Economic Behavior}}
  \bibinfo{volume}{67}, \bibinfo{number}{1} (\bibinfo{year}{2009}),
  \bibinfo{pages}{69--98}.
\newblock


\bibitem[Inada(1963)]%
        {inada1963two}
\bibfield{author}{\bibinfo{person}{Ken-Ichi Inada}.}
  \bibinfo{year}{1963}\natexlab{}.
\newblock \showarticletitle{On a two-sector model of economic growth: Comments
  and a generalization}.
\newblock \bibinfo{journal}{\emph{The Review of Economic Studies}}
  \bibinfo{volume}{30}, \bibinfo{number}{2} (\bibinfo{year}{1963}),
  \bibinfo{pages}{119--127}.
\newblock


\bibitem[Jain and Vazirani(2007)]%
        {jain2007eisenberg}
\bibfield{author}{\bibinfo{person}{Kamal Jain} {and} \bibinfo{person}{Vijay~V
  Vazirani}.} \bibinfo{year}{2007}\natexlab{}.
\newblock \showarticletitle{Eisenberg-gale markets: Algorithms and structural
  properties}. In \bibinfo{booktitle}{\emph{STOC'07}}.
  \bibinfo{pages}{364--373}.
\newblock


\bibitem[Krugman and Wells(2008)]%
        {krugman2008microeconomics}
\bibfield{author}{\bibinfo{person}{Paul Krugman} {and} \bibinfo{person}{Robin
  Wells}.} \bibinfo{year}{2008}\natexlab{}.
\newblock \bibinfo{booktitle}{\emph{Microeconomics}}.
\newblock \bibinfo{publisher}{Macmillan}.
\newblock


\bibitem[Lahkar and Mukherjee(2020)]%
        {lahkar2020dominant}
\bibfield{author}{\bibinfo{person}{Ratul Lahkar} {and}
  \bibinfo{person}{Saptarshi Mukherjee}.} \bibinfo{year}{2020}\natexlab{}.
\newblock \showarticletitle{Dominant strategy implementation in a large
  population public goods game}.
\newblock \bibinfo{journal}{\emph{Economics Letters}}  \bibinfo{volume}{197}
  (\bibinfo{year}{2020}), \bibinfo{pages}{109616}.
\newblock


\bibitem[Moulin(1980)]%
        {moulin1980strategy}
\bibfield{author}{\bibinfo{person}{Herv{\'e} Moulin}.}
  \bibinfo{year}{1980}\natexlab{}.
\newblock \showarticletitle{On strategy-proofness and single peakedness}.
\newblock \bibinfo{journal}{\emph{Public Choice}} \bibinfo{volume}{35},
  \bibinfo{number}{4} (\bibinfo{year}{1980}), \bibinfo{pages}{437--455}.
\newblock


\bibitem[Nehring and Puppe(2007)]%
        {nehring2007structure}
\bibfield{author}{\bibinfo{person}{Klaus Nehring} {and}
  \bibinfo{person}{Clemens Puppe}.} \bibinfo{year}{2007}\natexlab{}.
\newblock \showarticletitle{The structure of strategy-proof social
  choice—Part I: General characterization and possibility results on median
  spaces}.
\newblock \bibinfo{journal}{\emph{Journal of Economic Theory}}
  \bibinfo{volume}{135}, \bibinfo{number}{1} (\bibinfo{year}{2007}),
  \bibinfo{pages}{269--305}.
\newblock


\bibitem[Nisan et~al\mbox{.}(2007)]%
        {nisan2007algorithmic}
\bibfield{author}{\bibinfo{person}{Noam Nisan}, \bibinfo{person}{Tim
  Roughgarden}, \bibinfo{person}{Eva Tardos}, {and} \bibinfo{person}{Vijay~V
  Vazirani}.} \bibinfo{year}{2007}\natexlab{}.
\newblock \showarticletitle{Algorithmic game theory, 2007}.
\newblock \bibinfo{journal}{\emph{Book available for free online}}
  (\bibinfo{year}{2007}).
\newblock


\bibitem[Peters et~al\mbox{.}(2020)]%
        {peters2020proportional}
\bibfield{author}{\bibinfo{person}{Dominik Peters}, \bibinfo{person}{Grzegorz
  Pierczy{\'n}ski}, {and} \bibinfo{person}{Piotr Skowron}.}
  \bibinfo{year}{2020}\natexlab{}.
\newblock \showarticletitle{Proportional Participatory Budgeting with Cardinal
  Utilities}.
\newblock \bibinfo{journal}{\emph{arXiv preprint arXiv:2008.13276}}
  (\bibinfo{year}{2020}).
\newblock


\bibitem[Roberts(1979)]%
        {roberts1979characterization}
\bibfield{author}{\bibinfo{person}{Kevin Roberts}.}
  \bibinfo{year}{1979}\natexlab{}.
\newblock \showarticletitle{The characterization of implementable choice
  rules}.
\newblock \bibinfo{journal}{\emph{Aggregation and revelation of preferences}}
  \bibinfo{volume}{12}, \bibinfo{number}{2} (\bibinfo{year}{1979}),
  \bibinfo{pages}{321--348}.
\newblock


\bibitem[Satterthwaite(1975)]%
        {satterthwaite1975strategy}
\bibfield{author}{\bibinfo{person}{Mark~Allen Satterthwaite}.}
  \bibinfo{year}{1975}\natexlab{}.
\newblock \showarticletitle{Strategy-proofness and Arrow's conditions:
  Existence and correspondence theorems for voting procedures and social
  welfare functions}.
\newblock \bibinfo{journal}{\emph{Journal of economic theory}}
  \bibinfo{volume}{10}, \bibinfo{number}{2} (\bibinfo{year}{1975}),
  \bibinfo{pages}{187--217}.
\newblock


\bibitem[Stiglitz(1977)]%
        {stiglitz1977theory}
\bibfield{author}{\bibinfo{person}{Joseph~E Stiglitz}.}
  \bibinfo{year}{1977}\natexlab{}.
\newblock \showarticletitle{The theory of local public goods}. In
  \bibinfo{booktitle}{\emph{The Economics of Public Services: Proceedings of a
  Conference Held by the International Economic Association at Turin, Italy}}.
  Springer, \bibinfo{pages}{274--333}.
\newblock


\bibitem[Talmon and Faliszewski(2019)]%
        {talmon2019framework}
\bibfield{author}{\bibinfo{person}{Nimrod Talmon} {and} \bibinfo{person}{Piotr
  Faliszewski}.} \bibinfo{year}{2019}\natexlab{}.
\newblock \showarticletitle{A framework for approval-based budgeting methods}.
  In \bibinfo{booktitle}{\emph{AAAI'19}}, Vol.~\bibinfo{volume}{33}.
  \bibinfo{pages}{2181--2188}.
\newblock


\bibitem[Tversky and Kahneman(1992)]%
        {tversky1992advances}
\bibfield{author}{\bibinfo{person}{Amos Tversky} {and} \bibinfo{person}{Daniel
  Kahneman}.} \bibinfo{year}{1992}\natexlab{}.
\newblock \showarticletitle{Advances in prospect theory: Cumulative
  representation of uncertainty}.
\newblock \bibinfo{journal}{\emph{Journal of Risk and uncertainty}}
  \bibinfo{volume}{5}, \bibinfo{number}{4} (\bibinfo{year}{1992}),
  \bibinfo{pages}{297--323}.
\newblock


\bibitem[Vazirani and Yannakakis(2011)]%
        {vazirani2011market}
\bibfield{author}{\bibinfo{person}{Vijay~V Vazirani} {and}
  \bibinfo{person}{Mihalis Yannakakis}.} \bibinfo{year}{2011}\natexlab{}.
\newblock \showarticletitle{Market equilibrium under separable,
  piecewise-linear, concave utilities}.
\newblock \bibinfo{journal}{\emph{Journal of the ACM (JACM)}}
  \bibinfo{volume}{58}, \bibinfo{number}{3} (\bibinfo{year}{2011}),
  \bibinfo{pages}{1--25}.
\newblock


\bibitem[Wainwright et~al\mbox{.}(2005)]%
        {wainwright2005fundamental}
\bibfield{author}{\bibinfo{person}{Kevin Wainwright} {et~al\mbox{.}}}
  \bibinfo{year}{2005}\natexlab{}.
\newblock \bibinfo{booktitle}{\emph{Fundamental methods of mathematical
  economics}}.
\newblock \bibinfo{publisher}{Boston, Mass. McGraw-Hill/Irwin}.
\newblock


\bibitem[Wildasin(1988)]%
        {wildasin1988nash}
\bibfield{author}{\bibinfo{person}{David~E Wildasin}.}
  \bibinfo{year}{1988}\natexlab{}.
\newblock \showarticletitle{Nash equilibria in models of fiscal competition}.
\newblock \bibinfo{journal}{\emph{Journal of public economics}}
  \bibinfo{volume}{35}, \bibinfo{number}{2} (\bibinfo{year}{1988}),
  \bibinfo{pages}{229--240}.
\newblock


\end{thebibliography}

\clearpage

\appendix

\section{Missing Proofs}\label{ap_sec:miss_prf}
\begin{rlemma}{tract_alpha}
   \hypertarget{tract_alpha_proof}{In} every budgeting instance $\I$, for every type $\alpha \in \td$:
  \begin{enumerate}
      \item There exists a solution $(x^{(\alpha)},t^{(\alpha)})$ to the optimization problem 
      \begin{align*}
        \max_{(x,\ t)}\quad &v_\alpha(x,t)=\sum_j\alpha_{j}\theta_j(x_jB_t)-\alpha_{f}f(t)\\
      s.t. \quad &x \in \Delta^m,\ t \in [-\frac{B_0}{n},\infty) 
      \end{align*}
     \item $t^{(\alpha)} > \frac{B_0}{n}$ and for all $j$ s.t. $x^{(\alpha)}_j>0$,  
  $$ \frac{n\theta'_j(x^{(\alpha)}_jB_{t^{(\alpha)}})}{f'(t^{(\alpha)})}=\frac{\alpha_f}{\alpha_j}$$  
   \item If $\lim_{z \to 0}\theta'_j(z)=\infty$ and  $x^{(\alpha)}_j=0$ then $\alpha_{i,j}=0$.
  \end{enumerate}
\end{rlemma}
\begin{proof}
For any preferences vector $\alpha$, consider first the (convex) optimization w.r.t. some fixed $t \in \R$, that is
 $$\max_{x \in \Delta^m}\sum_j\alpha_{j}\theta_j(x_jB_t)-\alpha_{f}f(t)$$
and let $x^{(\alpha)}(t)$ be the solution (that surely exists because $\Delta^m$ is a compact set) for any given $t \in \tr$. Then $x^{(\alpha)}(t)$ must satisfy the first order conditions: 
 \begin{align*}\label{x_foc}
    \alpha_{j}\theta'_j(x^{(\alpha)}_j(t)B_t)&= \alpha_{k}\theta'_k(x^{(\alpha)}_k(t)B_t)      
 \end{align*}
for every $x^{(\alpha)}_j(t),x^{(\alpha)}_k(t) >0$ and if $x_j>x_k=0$ then \[\alpha_{j}\theta'_j(x^{(\alpha)}_j(t)B_t) \geq \alpha_{k}\theta'_k(0)\]
because otherwise we can increase utility by some infinitesimal change of $x^{(\alpha)}(t)$. Note that (3) follows immediately from the above inequality if $\lim_{z \to 0}\theta'_k(z)=\infty$. We assume w.l.o.g. that $x^{(\alpha)}(t): \R \to \Delta^m$ is differential, because if not there exists a differentiable mapping arbitrarily close to $x^{(\alpha)}(t)$. 
 Now, consider
 \begin{align*}
     \dv{}{t}v_\alpha(x^{(\alpha)}(t),t)&=\dv{}{t}\left[\sum_j\alpha_{j}\theta_j(x^{(\alpha)}_j(t)B_t)-\alpha_ff(t)\right]\\
     &=\sum_j\alpha_j\theta'_j(x^{(\alpha)}_j(t)B_t)\Big[\dv{x^{(\alpha)}_j(t)}{t}B_t+nx^{(\alpha)}_j(t)\Big]-\alpha_ff(t)
 \end{align*}
 Now, there exists some $l \in [m]$ s.t. $x^{(\alpha)}_l(t) \geq 1/m$, and  by the first order conditions we derived above, 
 \begin{align*}
     \dv{}{t}v_\alpha(x^{(\alpha)}(t),t)&=\alpha_l\theta'_l(x^{(\alpha)}_l(t)B_t)\sum_j\Big[\dv{x^{(\alpha)}_j(t)}{t}B_t+nx^{(\alpha)}_j(t)\Big]-\alpha_ff(t)\\
     &\leq n\theta'_l(t/m)-\alpha_ff(t)
 \end{align*}
 where we used $\sum_j\dv{x^{(\alpha)}_j(t)}{t}=0$, $\sum_jx^{(\alpha)}_j(t)=1$ and $\theta''_l<0$.
 Now by Assumption \ref{theta_f_asm}, there exists some $t_0 \in \R$ s.t. $\forall j \in [m]$ 
 \[\theta'_j(B_t/m)\leq \theta'_j(t/m) < \frac{\alpha_f}{n}f'(t)\quad \forall t >t_0 \]
 and therefore $\dv{}{t}v_\alpha(x^{(\alpha)}(t),t)<0$ for $t >t_0$. Thus $v_\alpha(x^{(\alpha)}(t),t)$ has a global maximum at some $t^{(\alpha)} \leq t_0$. By Assumption \ref{theta_f_asm}, there exists some $j$ such that $\lim_{t \to 0}n\alpha_{j}\theta'(nt) > \alpha_{f}\lim_{t \to -\frac{B_0}{n}}f'(t)$. $n\theta'_j(0)$, and thus $t=-\frac{B_0}{n}$ must be suboptimal. In other words $t^{(\alpha)}>-\frac{B_0}{n}$.
 We abbreviate the notation now to $x^{(\alpha)}:= x^{(\alpha)}(t^{(\alpha)})$. We proceed while first assuming that $f$ is differentiable at $t^{(\alpha)}$. 
 \[\alpha_j\theta'_j(x^{(\alpha)}_j\cdot B_{t^{(\alpha)}})=\alpha_ff'(t^{(\alpha)})\quad \forall x^{(\alpha)}_j>0\]
because otherwise $v_\alpha(x^{(\alpha)},t^{(\alpha)}+\epsilon)>v_\alpha(x^{(\alpha)},t^{(\alpha)})$ for some infinitesimal (smaller or greater than zero) $\epsilon$.
To complete the proof, we now show that in case $\lim_{t \to 0^+_-}f'(t)=\infty$, $t^{(\alpha)} \neq 0$ and thus $f$ has a derivative at $t^{(\alpha)}$ in any case. Let $v_\alpha(x^{(\alpha)},t):=\sum_j\alpha_j\theta_j(x^{(\alpha)}_j\cdot B_t)-\alpha_ff(t)$, meaning, we fix $x^{(\alpha)}$ and define $v_\alpha(x^{(\alpha)},t)\ \forall t> \frac{B_0}{n}$. Then if $\lim_{t \to 0_-}f'(t)=\infty$, there exists some $\epsilon>0$ such that $\dv{}{t}v_\alpha(x^{(\alpha)},t)<0$ for all $t\in [-\epsilon,0)$ and because $v_\alpha(x^{(\alpha)},t)$ is continuous, $v_\alpha(x^{(\alpha)},0)<v_\alpha(x^{(\alpha)},-\epsilon)$. Thus $(x^{(\alpha)},0)$ cannot be a maximum. (Note that if $B_0=0$ then $t^{(\alpha)}=0$ is not possible by Assumption \ref{theta_f_asm}).
 
\end{proof}

\begin{rlemma}{lemma_bound_solutions}
Let $S \subset \Delta^m \times \R$ such that $\inf\{\alpha_f:\alpha \in S\}>0$. Then $\sup\{|g(\alpha)| : \alpha \in S\}<\infty$.
\end{rlemma}

\begin{proof}
 Note that $x^{(\alpha)} \in \Delta^m$ so we really only have to prove that $t^{(\alpha)}$ is bounded. By Lemma \ref{tract_alpha}, $$\frac{\theta'_j(x^{(\alpha)}_j(b_0+t^{(\alpha)}))}{f'(t^{(\alpha)})}=\frac{\alpha_f}{\alpha_j} \quad \forall j$$
For all $\alpha \in S$, there exists $j \in [m]$ such that $x^{(\alpha)}_j \geq 1/m$ and thus, since $\theta'_j$ is decreasing, 
$$\frac{\theta'_j(t^{(\alpha)}/m)}{f'(t^{(\alpha)})}\geq\frac{\theta'_j(x^{(\alpha)}_j \cdot t^{(\alpha)})}{f'(t^{(\alpha)})}\geq \frac{\theta'_j(x^{(\alpha)}_j(b_0+t^{(\alpha)}))}{f'(t^{(\alpha)})}=\frac{\alpha_f}{\alpha_j}$$
Therefore , by Assumption~\ref{theta_f_asm}, the LHS of that equality vanishes as $t^{(\alpha)} \to \infty$ and since the ratio on the RHS is bounded away from zero, $t^{(\alpha)}$ is bounded for all $\alpha \in S$.
\end{proof}


\section{Non-Positive Payments}\label{apx:np_payments}
We give here the full analysis that leads to Corollary \ref{cor_npp}. We start at formulating the condition needed for Theorem \ref{price_conv_2}.

\begin{definition}
  Define  $F:(\Delta^m\times\R)^2 \to \R^{m+1}$ as:
\begin{align*}
   F_j(\alpha,(x,t))&=\alpha_j\theta'_j(x_j(b_0+t))-\alpha_ff'(t)\quad  \forall 1\leq j \leq m \\ 
    F_{m+1}(\alpha,(x,t))&=\sum_jx_j-1
\end{align*} 
By Lemma \ref{tract_alpha}, $F(\alpha,(x^{(\alpha)},t^{(\alpha)}))=0$ for all $\alpha \in \Delta^m\times R$. We say that $g(\alpha)=(x^{(\alpha)},t^{(\alpha)})$ is a \emph{\textbf{"regular maximum"}} of $v_\alpha$ if, moreover,
$$\det\left[\pdv{F}{(x,t)} \big(\alpha,g(\alpha)\big)\right]\neq 0 $$
where $\pdv{F}{(x,t)}$ is the $(m+1)\times(m+1)$ Jacobi matrix of $F$ with respect to the variables $(x,t)$.
\end{definition}

The next Lemma is a direct application of the Implicit Function Theorem~\cite{wainwright2005fundamental}.
\begin{lemma}\label{lemma_IFT}
 Let $g(\alpha)$ be a regular maximum of $v_\alpha$. Then there exist an open  neighborhood of $\alpha$ $S \subset \Delta^m\times R$  and a unique mapping $s:S \to \ds$ such that :
\begin{itemize}
    \item $s(\alpha)=g(\alpha)$. 
    \item $s$ is continuously differentiable in $S$.
    \item $\forall \beta \in S,\ F(\beta,s(\beta))=0$  
\end{itemize}
\end{lemma}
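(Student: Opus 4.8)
The plan is to read Lemma~\ref{lemma_IFT} as a verbatim application of the Implicit Function Theorem, with $\alpha$ (the first argument of $F$) playing the role of the parameter and $y:=(x,t)\in\R^{m+1}$ the unknown we solve for. Concretely, I would fix the regular maximum $g(\alpha)=(x^{(\alpha)},t^{(\alpha)})$ and aim to produce, near $\alpha$, a local $C^1$ branch $y=s(\beta)$ of the equation $F(\beta,y)=0$ passing through $g(\alpha)$. So the whole proof amounts to checking the three standard hypotheses of the IFT at the base point $(\alpha,g(\alpha))$.

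First I would record that $F$ is continuously differentiable in a neighborhood of $(\alpha,g(\alpha))$. Since $F_{m+1}$ is affine, this reduces to the smoothness of the maps $(\beta,(x,t))\mapsto \beta_j\theta'_j(x_j(b_0+t))-\beta_f f'(t)$, which is $C^1$ as soon as $\theta_j$ and $f$ are twice continuously differentiable near the relevant arguments—this is where I rely on the differentiability recorded in Assumptions~\ref{theta_asm} and~\ref{f_asm}. The key point enabling this is that a regular maximum is interior: all binding $\alpha_j$ are positive, so $F_j=0$ forces $\theta'_j(x^{(\alpha)}_j(b_0+t^{(\alpha)}))=\alpha_f f'(t^{(\alpha)})/\alpha_j$ to be a finite positive number, whence $x^{(\alpha)}_j>0$, and by Lemma~\ref{tract_alpha} we also have $t^{(\alpha)}>-B_0/n$. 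Thus the arguments of $\theta'_j$ and $f'$ lie strictly inside the region where the $C^2$ assumptions on the utilities apply. Second, $F(\alpha,g(\alpha))=0$ is immediate from the definition of a regular maximum (equivalently, the MRS conditions of Lemma~\ref{tract_alpha}). Third, the regular-maximum hypothesis asserts exactly that the $(m+1)\times(m+1)$ block $\pdv{F}{(x,t)}$ evaluated at $(\alpha,g(\alpha))$ has nonzero determinant, hence is invertible.

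With these three facts in hand, the IFT delivers an open neighborhood $S$ of $\alpha$ together with a unique continuously differentiable map $s:S\to\R^{m+1}$ satisfying $s(\alpha)=g(\alpha)$ and $F(\beta,s(\beta))=0$ for all $\beta\in S$, which are precisely the three asserted bullets. The only point demanding genuine care—and the place I expect the main (mild) obstacle—is the constraint geometry rather than any analytic difficulty: the parameter $\beta$ is confined to the manifold $\Delta^m\times\R$ (so $\sum_j\beta_j=1$), so I would either invoke the manifold formulation of the IFT or pass to a local chart eliminating one simplex coordinate; and I would shrink $S$ if necessary so that $s(S)\subseteq\ds$ with the same coordinates strictly positive, which is legitimate by continuity since $g(\alpha)$ is interior. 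All the analytic content is supplied by the regular-maximum assumption, and the remainder is bookkeeping to keep the implicit solution inside the admissible domain.
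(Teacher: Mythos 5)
Your proposal is correct and matches the paper exactly: the paper gives no written proof beyond the remark that the lemma ``is a direct application of the Implicit Function Theorem,'' and your write-up simply verifies the three standard IFT hypotheses (smoothness of $F$ near the interior point, $F(\alpha,g(\alpha))=0$, and invertibility of $\pdv{F}{(x,t)}$, the last being precisely the regular-maximum condition). Your extra care about the simplex constraint on $\beta$ and shrinking $S$ to keep $s(S)$ in the admissible domain is sound bookkeeping that the paper leaves implicit.
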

The Lemma does not yet provide the smoothness we want for $g$, because it only promises that $\{s(\beta)\}_{\beta \in S}$ are critical points of $v_\beta$ that satisfy $F(\beta,s(\beta))=0$, not necessarily maxima. In other words $s$ and $g$ are not the same function by definition. However, an immediate corollary is that if $g(\alpha)$ is also unique they indeed must coincide.

\begin{corollary}\label{cor_sol_diff}
Assume that $g(\alpha)$ is a regular unique global maximum of $v_\alpha$. Then there exist a neighborhood of $\alpha$, $S \subset \Delta^m\times R$, such that $g$ is uniquely defined and continuously differentiable in $S$.
\end{corollary}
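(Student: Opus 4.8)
The plan is to build entirely on Lemma~\ref{lemma_IFT}, which already supplies an open neighborhood $S$ of $\alpha$ and a continuously differentiable map $s : S \to \ds$ with $s(\alpha) = g(\alpha)$ and $F(\beta, s(\beta)) = 0$ for every $\beta \in S$. The only gap is that $s(\beta)$ is a priori just a critical point of $v_\beta$, whereas $g(\beta)$ is the actual global maximizer; so the whole task reduces to showing $g \equiv s$ on a (possibly shrunk) neighborhood of $\alpha$. Once that identification is made, both assertions of the corollary follow for free, since $s$ is single-valued and $C^1$ by construction.

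First I would record that $g(\alpha)$ is an interior point of $\ds$. Because $g(\alpha)$ is a regular maximum we have $F(\alpha, g(\alpha)) = 0$, i.e.\ $\alpha_j \theta'_j\big(x^{(\alpha)}_j(b_0 + t^{(\alpha)})\big) = \alpha_f f'(t^{(\alpha)})$ for every $j$. Since $\alpha_f > 0$ and $f' > 0$, this forces $\alpha_j > 0$ and a finite positive value of $\theta'_j$, hence $x^{(\alpha)}_j(b_0+t^{(\alpha)}) > 0$; as $b_0 + t^{(\alpha)} > 0$ this gives $x^{(\alpha)}_j > 0$ for all $j$, and $t^{(\alpha)} > -b_0$. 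Thus $g(\alpha)$ lies in the relative interior.

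Next I would invoke Lemma~\ref{sol_cont}: since $v_\alpha$ has a \emph{unique} global maximum, every global maximizer of $v_\beta$ converges to $g(\alpha)$ as $\beta \to \alpha$, regardless of the arbitrary tie-breaking in $g$. Combined with the interiority just established, there is a neighborhood of $\alpha$ on which every global maximizer of $v_\beta$ stays interior and stays inside the neighborhood $V$ of $g(\alpha)$ in which the Implicit Function Theorem makes $s(\beta)$ the \emph{unique} zero of $F(\beta, \cdot)$. For such $\beta$, the global maximizer $g(\beta)$ (which exists by Lemma~\ref{tract_alpha}(1)) is interior, hence a critical point, hence satisfies $F(\beta, g(\beta)) = 0$; lying in $V$, it must equal $s(\beta)$ by that local uniqueness. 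The same argument applied to any global maximizer of $v_\beta$ shows all of them coincide with $s(\beta)$, so the maximizer is in fact unique — which is exactly the ``uniquely defined'' clause — and $g = s$ there, giving continuous differentiability.

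The main obstacle, and the conceptual heart of the argument, is precisely this bridge from the implicitly-defined branch of critical points to the genuine maximizers: Lemma~\ref{lemma_IFT} only tracks one smooth branch of solutions of $F = 0$, and the regularity hypothesis alone cannot rule out that $s(\beta)$ degenerates into a non-global critical point for nearby $\beta$, nor that $v_\beta$ develops a competing global maximizer far from $g(\alpha)$. What rescues the proof is the uniqueness-of-the-global-maximum assumption funneled through Lemma~\ref{sol_cont}, which confines every maximizer of $v_\beta$ to a shrinking ball around $g(\alpha)$, leaving no room for them to be anything other than the single IFT zero $s(\beta)$. Shrinking $S$ to enforce the required containments completes the identification $g = s$.
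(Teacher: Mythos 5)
Your proposal is correct and follows essentially the same route as the paper's own proof: invoke Lemma~\ref{lemma_IFT} for the smooth branch $s$ of zeros of $F$, use Lemma~\ref{sol_cont} to confine every global maximizer of $v_\beta$ to the IFT uniqueness neighborhood for $\beta$ near $\alpha$, and conclude $g=s$ there. In fact you fill in a step the paper leaves implicit---explicitly arguing interiority of $g(\alpha)$ (and hence of nearby maximizers) so that the first-order conditions $F(\beta,g(\beta))=0$ genuinely hold, where the paper simply asserts this via Lemma~\ref{tract_alpha}.
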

\begin{proof}
Denote by $\Tilde{S}$ the neighborhood of $\alpha$ from the statement of Lemma \ref{lemma_IFT} and let $g(\Tilde{S})$ be the image of $\Tilde{S}$ under $g$. By Lemma~\ref{sol_cont}, $g(\beta)$ approaches $g(\alpha)=s(\alpha)$ as $\beta \to \alpha$, therefore there exists a neighborhood $S \subseteq \Tilde{S}$ of $\alpha$ in which $g(\beta) \in g(\Tilde{S})\ \forall \beta \in S$. Now, every solution $g(\beta)$ must satisfy $F(\beta,g(\beta))=0$, but, by Lemma \ref{lemma_IFT} $s$ is the \emph{unique} function that maps $\beta \in \Tilde{S}$ to $s(\beta)$ such that $F(\beta,s(\beta))=0$, therefore $g\Big|_S=s$. Thus by Lemma \ref{lemma_IFT}, $g$ is uniquely defined and continuously differentiable in $S$. 
\end{proof}

We now have the necessary for the stronger version of payments convergence stated in Theorem~\ref{price_conv_2}. Basically, this further result is due to the differentiabilty of $g$ that allows for a linear approximation of the difference $\big(\V(g(\bar{\alpha}_{-i}))-\V(g(\bar{\alpha}))\big)$, which essentially determines the VCG payment of an agent $i$. Thus, we can not only say that this difference vanishes as we did in Theorem ~\ref{price_conv_1}, but also specify the convergence rate of $O(1/n)$. 

\begin{rtheorem}{price_conv_2}
 Let $\sigma =(b_0,\mu,\bar{\alpha})$ such that $v_{\bar{\alpha}}$ has a regular unique global maximum $g(\bar{\alpha})$. Then there exist some $\mathcal{B} \in \R$ and $n(\sigma)$ such that 
 in every population with characteristic triplet $\sigma$ and size $n > n(\sigma)$,
$$|P_i| \leq\frac{\mathcal{B}}{n}\quad \forall i \in [n]$$ 
\end{rtheorem}

\begin{proof}
 In the proof of Theorem~\ref{price_conv_1} we showed that:
 $$0 \leq p^{\mathsmaller{VCG}}_i\leq \frac{n-1}{n}|\bar{\alpha}_{-i}-\alpha_i||\V(g(\bar{\alpha}_{-i}))-\V(g(\bar{\alpha}))|$$
 By Corollary \ref{cor_sol_diff}, $g$ is differentiable in some neighborhood of $\bar{\alpha}$ $S$, and so $\V\circ g$ is too. Thus, if we take $n_0$  sufficiently large so that $\bar{\alpha}_{-i} \in S$, we have for all $n>n_0$ that
\begin{align*}
   p^{\mathsmaller{VCG}}_i&\leq \frac{n-1}{n}|\bar{\alpha}_{-i}-\alpha_i|\left|\mathcal{D}_{\V\circ g}(\bar{\alpha})\big(\bar{\alpha}_{-i}-\bar{\alpha})\big)+o|\bar{\alpha}_{-i}-\bar{\alpha}|\right| \\
   &\leq \frac{n-1}{n}|\bar{\alpha}_{-i}-\alpha_i| \Big(\norm{\mathcal{D}_{\V\circ g}(\bar{\alpha})}|\bar{\alpha}_{-i}-\bar{\alpha}|+o|\bar{\alpha}_{-i}-\bar{\alpha}|\Big)\\
   &=\frac{n-1}{n}|\bar{\alpha}_{-i}-\alpha_i| \Big(\norm{\mathcal{D}_{\V\circ g}(\bar{\alpha})}|\bar{\alpha}_{-i}-\alpha_i|\frac{1}{n}+o\big(|\bar{\alpha}_{-i}-\alpha_i|\frac{1}{n}\big)\Big)
\end{align*}
Where $\mathcal{D}_{\V\circ g}$ is the Jacobian matrix of $\V\circ g$ and $\norm{\mathcal{D}_{\V\circ g}}$ is its matrix-norm, and in the equality at the end we put $\bar{\alpha}=\frac{n-1}{n}\bar{\alpha}_{-i}+\frac{1}{n}\alpha_i$. Since the $\{\alpha_{i,f}\}$ are bounded there exists some $\gamma \in \R$ such that $|\bar{\alpha}_{-i}-\alpha_i|<\gamma$ for all $i$, thus
$$p^{\mathsmaller{VCG}}_i\leq\frac{n-1}{n^2}\gamma^2\Big(\norm{\mathcal{D}_{\V\circ g}(\bar{\alpha})}+o(1)\Big)$$
Now taking $n_1 \geq n_0$ such that the $o(1)$ term above is less than $1$ and ${\mathcal{B}_0=\gamma^2\Big(\norm{\mathcal{D}_{\V\circ g}(\bar{\alpha})}+1\Big)}$ gives 
\begin{align*}
  P_i=-t(\bar{\alpha}) +f^{-1}\Big(f(t(\bar{\alpha}))+\frac{1}{\alpha_{i,f}}p^{\mathsmaller{VCG}}_i\Big)<(f^{-1})'(t(\bar{\alpha}))\frac{\mu \mathcal{B}_0}{n}+o(\frac{\mu \mathcal{B}_0}{n})  
\end{align*}
(Note that by Lemma \ref{tract_alpha}, in case $f$ is not differentiable at zero then $t(\bar{\alpha}) \neq 0$ thus $(f^{-1})'$ alwyas exists.) And now taking $\mathcal{B}=\mu(f^{-1})'(t(\bar{\alpha}))\mathcal{C}_0+1$ with sufficiently large $n(\sigma) \geq n_1$ completes the proof.
\end{proof}

\begin{corollary}\label{pay_sum}
In any budgeting instance with characteristic triplet $\sigma$ and $n>n(\sigma)$, and such that $v_{\bar{\alpha}}$ has a regular unique global maximum,
$$\sum_{i \in [n]}P_i \leq \mathcal{C}$$
\end{corollary}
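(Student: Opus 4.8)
The plan is to read this off directly from the uniform per-agent bound established in Theorem~\ref{price_conv_2}. That theorem supplies, for every $n > n(\sigma)$, a single constant $\mathcal{B}$ with $|P_i| \le \mathcal{B}/n$ holding simultaneously for all $i \in [n]$. The decisive feature is that this estimate is uniform across agents and decays at rate $1/n$, so that summing it over the whole population yields a quantity independent of $n$. This makes the corollary essentially a one-line consequence of the preceding theorem rather than a result requiring fresh analysis.

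Concretely, I would first bound the signed sum by the sum of magnitudes, $\sum_{i \in [n]} P_i \le \sum_{i \in [n]} |P_i|$. Then I would insert the per-agent estimate of Theorem~\ref{price_conv_2} into the right-hand side, giving $\sum_{i \in [n]} |P_i| \le n \cdot (\mathcal{B}/n) = \mathcal{B}$. Taking $\mathcal{C} := \mathcal{B}$ completes the argument; note that $\mathcal{C} \ge 0$ automatically, since it dominates a sum of absolute values.

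There is effectively no obstacle to overcome: all the work lives in the $O(1/n)$ decay rate of Theorem~\ref{price_conv_2}, and the corollary merely records that this rate is exactly fast enough to survive summation over $n$ terms. The only conceptual point worth emphasizing is \emph{why} the cancellation is exact rather than merely bounded. This traces back to the mean-dependency structure (Observation~\ref{mean_obs}), which forces $\bar{\alpha}_{-i}$ to approach $\bar{\alpha}$ at precisely the $1/n$ rate that matches the number of summands—the same mechanism responsible for the individual convergence in Theorem~\ref{price_conv_1} and for the sharp rate in Theorem~\ref{price_conv_2}. One could also remark that the bound $\mathcal{C} = \mathcal{B}$ is crude (it ignores the sign structure of the $P_i$), but crudeness is harmless here since only boundedness of the total is claimed.
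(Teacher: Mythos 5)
Your proof is correct and matches the paper's intent exactly: the paper states Corollary~\ref{pay_sum} as an immediate consequence of the uniform per-agent bound $|P_i|\leq \mathcal{B}/n$ from Theorem~\ref{price_conv_2}, with the sum over $n$ agents absorbing the $1/n$ decay precisely as you compute. Your additional remark tracing the exact $1/n$ rate back to mean-dependency (Observation~\ref{mean_obs}) is consistent with the paper's own discussion following Theorem~\ref{price_conv_1}.
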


Besides telling us how fast payments converge, Theorem \ref{price_conv_2} and the corollary that follows allow for the definition of payments that not only vanish asymptotically, but are also non-positive for all agents. That is, instead of charging a "fee" for participating in the vote (and moreover, one that is not identical for everyone), agents will be paid by the mechanism, which you might see as a reward for their participation. The idea is quite simple---after charging an agent with her VCG payment,  we ``pay her back" no less than the maximum possible payment she could have been charged with given the preferences of her peers $\vec{\alpha}_{-i}$. This way no one is charged with a strictly positive payment. Thanks to Corollary \ref{pay_sum}, we know that the overall sum needed to implement that will not diverge with the number of agents.  

\begin{definition}[Non-positive utility-sensitive VCG payments]
  Define the \emph{non-positive VCG payment assignment} $\hat{p}^{\mathsmaller{VCG}}$ as
  $$\hat{p}^{\mathsmaller{VCG}}_i:=p^{\mathsmaller{VCG}}_i-\frac{\gamma^2}{n}\Big(\norm{\mathcal{D}_{\V\circ g}(\bar{\alpha}_{-i})}+1\Big)-\frac{r}{n}$$
  where $\gamma$ is the bound defined in the proof of Theorem \ref{price_conv_2} and $r \geq 0$ is any constant factor.
\end{definition}
  Note that this payment assignment does not violate SDSIC as it does not involve $\alpha_i$. \\
  $\frac{\gamma^2}{n}\Big(\norm{\mathcal{D}_{\V\circ g}(\bar{\alpha}_{-i})}+1\Big)$ is a bound we take for $p^{\mathsmaller{VCG}}_i$ (See clarification below) and the purpose of the $\frac{r}{n}$ factor is to allow for even higher payments to the agents, as much as the social planner wishes and can afford. We thus conclude that:

\begin{rcorollary}{cor_npp}
 In any budgeting instance with characteristic triplet $\sigma$ and $n>n(\sigma)$ such that $v_{\bar{\alpha}}$ has a regular unique global maximum, the payment assignment $\hat{P}_i:=-t(\bar{\alpha}) +f^{-1}\Big(f(t(\bar{\alpha}))+\frac{1}{\alpha_{i,f}}\hat{p}^{\mathsmaller{VCG}}_i\Big)$ satisfies: 
 \begin{enumerate}
     \item $\hat{P}_i \leq 0\ \forall i \in [n]$
     \item $\sum_{i \in [n]}\hat{P}_i \geq -\mathcal{\tilde{B}}$ for some $\mathcal{\tilde{B}} \in \R_+$.
 \end{enumerate}
\end{rcorollary}

Let us clear out these statements. First, $P_i \leq 0$ because $\hat{p}^{\mathsmaller{VCG}}_i \leq 0$ and $f^{-1}$ is increasing. Theoretically, we could have just put
    $\Big[\max_{\alpha_i \in \Delta^m \times \R}p_i^{\mathsmaller{VCG}}(\alpha_{-i},\alpha_i)\Big]$ instead of $\frac{\gamma^2}{n}\Big(\norm{\mathcal{D}_{\V\circ g}(\bar{\alpha}_{-i})}+1\Big)$ in the above definition, however we cannot guarantee that this maximum can be found efficiently. Note that $\frac{\gamma^2}{n}\Big(\norm{\mathcal{D}_{\V\circ g}(\bar{\alpha}_{-i})}+1\Big)$ is almost exactly the upper bound we put on $p^{\mathsmaller{VCG}}_i$ in the proof of Theorem \ref{price_conv_2}, the only difference is that we now use the derivative at $\bar{\alpha}_{-i}$ to evaluate the difference $V(g(\bar{\alpha}_{-i}))-V(g(\bar{\alpha}))$, instead of the derivative at $\bar{\alpha}$. Obviously, that is just as valid and we could have similarly reached that bound. The reason for this substitution is that the payments formula must be independent of $\alpha_i$ to maintain Incentive Compatibility. Now, While $\mathcal{D}_{\V\circ g}(\bar{\alpha}_{-i})$ is different for every agent $i$, we know that these are bounded globally because $\bar{\alpha}_{-i} \to \bar{\alpha}$ and the Jacobian matrix is a continuous function. Thus $\hat{p}_i^{\mathsmaller{VCG}}=O(p_i^{\mathsmaller{VCG}})=O(1/n)$.

\section{Bias Functions That Preserve SDSIC} \label{appsec_biasmech}

\subsection{Phantom-Agents}\label{sec:phant_bias}

We shall define our special bias function in the form of utility functions of fictitious agents that favour our targeted outcomes. In the simplest case where we have a sole budget decision in mind we can take $\ff(x,t)=\lambda\cdot v_{\tilde{\alpha}}(x,t)$ where $\tilde{\alpha}$ is chosen so that $g(\tilde{\alpha})$ is the desired outcome and $\lambda >0$ indicates the extent of bias we want. The SDSIC then extends trivially, as the our new biased mechanism is essentially the US-VCG mechanism for the $n$ real agents plus $\lambda n$ fictitious ones with mean preference $\tilde{\alpha}$. 
However, diverting the mechanism towards a set of outcomes while maintaining SDSIC will not be as simple (Note that if we add fictitious agents of multiple types, their impact on the outcome is ultimately determined only by the mean of fictitious preferences). 
\smallskip

The bias function we introduce below (Def. \ref{def_lambda_bfunc}) is composed as the sum of two functions, one that favours certain tax decisions and another that targets specific allocations $\hx^t$ for any given $t \in \tr$. That separation corresponds to the nature of our optimization problem, that can be solved in two steps accordingly (See details in the proof of Lemma \ref{tract_alpha}). Nonetheless, as we do not demand that $\hx^t$ necessarily exists for all $t$, i.e., it is possible that under some tax decisions no allocation is favourable,  it should not harm its generality. Meaning, any arbitrary set of outcomes $W \subset \ds$ can be targeted this way. We first define the follwing once $\{x^t\}_t$ are chosen.  

\begin{definition}
For any choice of $\{\hx^t\}_t$ and for every $t \in (-b_0,\infty)$ such that $\hx^t$ exists, we define the \emph{corresponding type} $\ha^t=(\ha^t_1,\dots,\ha^t_m)$ and  \emph{valuation function} $\hT^t$ such that
$$\argmax_{x \in \Delta^m}\hT^t(xB_t)=\argmax_{x \in \Delta^m}\sum_j\ha^t_j\theta_j(x_jB_t)=\hx^t$$
and if $\hx^t$ does not exist put $\ha^t=\vec{0}$.
\end{definition}

For all $t$, $\ha^t$ is derived uniquely  from $\hx^t$ using the first order conditions that we show in the proof of Lemma \ref{tract_alpha}:\footnote{We assume that $\hx^t$ is an internal point of $\Delta^m$ for all $t$.}

\[\frac{\theta'_j(\hx_jB_t)}{\theta'_k(\hx_kB_t)}=\frac{\ha^t_k}{\ha^t_j}\ \forall j \in [m]\]
(and note that these are linear equations as $\hx^t$ is given). Now, we define our bias function as follows.
 \begin{definition}[Phantoms Bias function]\label{def_lambda_bfunc}
For any choice of $\{\hx^t\}_t$ and for every $\lambda>0$, define   
$$\ff_{\lambda}(x,t):=\lambda \big(\hT^t(xB_t)-\hT^t(\hx^tB_t)\big)+ \psi(t)$$
where $\psi:(-\frac{B_0}{n},\infty) \mapsto \R$ is continuous and $\lim_{t \to \infty}\psi(t)=0$.
 \end{definition}
That is, for any given $t$ we add the (non-positive) utility loss of $\lambda$ fictitious agents that favour $\hx^t$ over $x$ (but have no preferences regarding the tax $t$), plus $\psi(t)$ that expresses the designer's preference on tax decisions. Note that by a similar argument to that used in the proof of Lemma \ref{tract_alpha}, that choice of $\ff$ assures that $\hg$ is well defined, i.e. that  $v_\alpha(x,t)+\ff_{\lambda}(x,t)$ has a global maximum for all $\alpha \in \ds$.
\smallskip

\begin{example}[Running example with bias to equitable allocation]\label{ex:run_eq}
We show an example for executing the BUS-VCG mechanism with $\hx^t$ taken as the equitable allocation $\he^t$ defined in \ref{def_equitall}. In case that $\theta_j$ are identical for all $j$ then $\he^t$ and $\ha^t$ are trivial,  $\he^t_j =\ha^t=1/m\  \forall j$. Thus,
\[C_\lambda(x,t)=\lambda\Big(10\sum_j\frac{1}{m}\ln(x_j\cdot 3t)-10\sum_j\frac{1}{m}\ln(\frac{1}{m}\cdot 3t)\Big)\]
To compute $\hg(\ha)=(\hx,\th)$, let us first find $\hx$ w.r.t. any fixed $t$. When $t$ is fixed, solving
\[\begin{cases}
    \max\ v_{\ba}(x,t)+3\ff_\lambda(x,t)\\
    \ x \in \Delta^m
\end{cases}\]  
is just as solving the original problem, only for a modified preferences mean $\beta_j=\frac{\ba_j+\lambda/m}{1+\lambda/m} \forall j \in [m]$, and with logarithmic functions we know that $\hx_j=\beta_j\ \forall j$, independently of $t$. Now we need to solve  
\[\begin{cases}
    \max\ v_{\ba}(\hx,t)+3\ff_\lambda(\hx,t)\\
    \ t \in \tr
\end{cases}\] 
but, note that $\ff(\hx,t)$ is a constant function of $t$, thus not affecting $\th$. To conclude, the introduction of $\ff_\lambda$ shifts the allocation from $x^*_j=\ba_j \forall j$ to $\hx_j=\frac{\ba_j+\lambda/m}{1+\lambda/m} \forall j $ while not affecting the tax decision  $t^*=\th=\big( \frac{2\cdot 10}{3\ba_f}\big)^2$.
\end{example}

As mentioned above, to maintain SDSIC we also have to see that $\hg(\alpha)$ defines $\alpha$ uniquely. That requires further assumption  on the smoothness of $\ff_\lambda$.  

\begin{lemma}\label{lem_biased_unisol}
 Assume that $\ff_\lambda(x,t)$ is differentiable in $t$. Then for any two distinct preferences vectors $\alpha\neq\beta \in \Delta^m\times \R$, ${\hg(\alpha) \neq \hg(\beta)}$.
\end{lemma}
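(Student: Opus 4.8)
The plan is to prove the contrapositive: assume $\hg(\alpha)=\hg(\beta)=(x^*,t^*)$ for a common biased-optimal budget decision, and deduce $\alpha=\beta$. The whole argument rests on the observation that $(x^*,t^*)$ is simultaneously a maximizer of $v_\alpha+\ff_\lambda$ and of $v_\beta+\ff_\lambda$, and that these two objectives differ \emph{only} in the type-dependent term $v_\gamma$ while sharing the exact same bias $\ff_\lambda$. That coincidence is precisely what will let the bias contributions cancel.

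First I would treat the allocation coordinates $\alpha_j,\beta_j$ for $j\in[m]$. Holding $t=t^*$ fixed, $x^*$ must maximize the $x$-subproblem, which for a type $\gamma$ reduces to maximizing $\sum_j(\gamma_j+\lambda\ha^{t^*}_j)\theta_j(x_jB_{t^*})$ over $\Delta^m$, since every remaining term is constant in $x$. Working in the MRS / interior-optimum regime used throughout the paper, the first-order conditions then say that $(\gamma_j+\lambda\ha^{t^*}_j)\theta'_j(x^*_jB_{t^*})$ is constant across $j$. Writing $c_j:=\theta'_j(x^*_jB_{t^*})>0$ and applying this with $\gamma=\alpha$ and then $\gamma=\beta$, I would subtract the two relations to obtain $\alpha_j-\beta_j=(A-B)/c_j$ for suitable constants $A,B$; summing over $j$ and using $\sum_j\alpha_j=\sum_j\beta_j=1$ together with $1/c_j>0$ forces $A=B$, hence $\alpha_j=\beta_j$ for every $j\in[m]$.

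Next I would pin down the tax coordinate, and this is where the hypothesis that $\ff_\lambda$ is differentiable in $t$ is used. Since $(x^*,t^*)$ is interior in $t$ (by Lemma~\ref{tract_alpha}, $t^*>-B_0/n$) and the objective is differentiable there, the partial $t$-derivative of $v_\gamma+\ff_\lambda$ at $(x^*,t^*)$ vanishes for $\gamma\in\{\alpha,\beta\}$; note that perturbing $t$ alone keeps $x^*$ feasible, so this stationarity is legitimate. Subtracting the two conditions makes the $\partial_t\ff_\lambda$ terms cancel identically, leaving $\partial_t\big(v_\alpha(x^*,t)-v_\beta(x^*,t)\big)=0$ at $t=t^*$. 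Because I have already shown $\alpha_j=\beta_j$, the difference collapses to $v_\alpha-v_\beta=-(\alpha_f-\beta_f)f(t)$, so the condition becomes $(\alpha_f-\beta_f)f'(t^*)=0$. Invoking Lemma~\ref{tract_alpha} once more (so that $t^*\neq 0$ whenever $f$ fails to be differentiable at the origin, guaranteeing $f'(t^*)$ exists) and Assumption~\ref{f_asm} (an increasing, strictly convex/strictly concave $f$ satisfies $f'>0$ wherever differentiable), I conclude $\alpha_f=\beta_f$, and therefore $\alpha=\beta$.

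The main obstacle will be the boundary case in the allocation step: if $x^*$ has some $x^*_k=0$, the first-order condition for that coordinate degenerates to an inequality and I can no longer read off $\alpha_k=\beta_k$. I would dispose of this exactly as the paper handles the analogous gap in the main SDSIC theorem, namely by restricting to the MRS / interior regime; here the standing assumption that $\hx^t$ is interior guarantees $\ha^{t^*}_j>0$ for all $j$, so the modified coefficients $\gamma_j+\lambda\ha^{t^*}_j$ are strictly positive and the allocation subproblem is strictly concave, making interiority the natural regime in which to read the conditions. The differentiability of $\ff_\lambda$ in $t$, meanwhile, enters only in the second step and is exactly what secures the clean cancellation of the bias in the tax first-order condition.
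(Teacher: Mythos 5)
Your proof is correct and takes essentially the same two-step route as the paper's: at fixed $t^*$ you reduce to the allocation subproblem with shifted coefficients $\gamma_j+\lambda\ha^{t^*}_j$ to conclude $\alpha_j=\beta_j$ for all $j\in[m]$, and then use the differentiability of $\ff_\lambda$ in $t$ so that the bias cancels in the first-order condition at $t^*$, giving $(\alpha_f-\beta_f)f'(t^*)=0$ and hence $\alpha_f=\beta_f$. Your explicit subtract-and-sum argument simply fills in the injectivity claim the paper delegates to ``the proof of Lemma~\ref{tract_alpha}'', and your attention to $f'(t^*)$ existing (via $t^*\neq 0$ when $f'$ diverges at the origin) and being strictly positive tightens a point the paper glosses over (it even writes $f(t^*)$ where $f'(t^*)$ is meant).
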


\begin{proof}
  Fix any $\alpha \in \ds$, and let 
  $$\hg(\alpha)=(x^*,t^*) \in \argmax_{(x,t) \in \ds} v_\alpha(x,t)+\ff_{\lambda}(x,t).$$ 
  Then in particular, $x^*$ solves
  $\max_{x \in \Delta^m} v_\alpha(x,t^*)+\ff_\lambda(x,t^*)$.
  Since 
  \begin{align*}
    v_\alpha(x,t^*)&+\ff_{\lambda}(x,t^*)=\\
    &\sum_j\alpha_j\theta_j(x_jB_t^*)-\alpha_ff(t^*)
    +\lambda\sum_j\ha^{t^*}_j\theta_j(x_jB_t{^*})\\
    &-\lambda \sum_j\ha^{t^*}_j\theta_j(\hx^{t^*}_jB_{t^*})+\psi(t^*),
  \end{align*}
  $x^*$ also solves 
  $\max_{x \in \Delta^m} \sum_j(\alpha_j+\lambda\ha^{t^*}_j)\theta_j(x_jB_{t^*})$  because once we fix $t^*$ all the remaining terms are just constants.
  By the proof of Lemma \ref{tract_alpha}, that problem has a unique solution 
   for every $(\alpha_1,\dots,\alpha_m) \in \Delta^m$ and moreover, two distinct vectors cannot share the same solution. We thus conclude that $\hg(\alpha)=\hg(\beta) \implies \alpha_j=\beta_j\ \forall j \in [m]$.\\
   Now, let $\alpha,\beta \in \ds$ where $\alpha_j=\beta_j\ \forall j \in [m]$ and assume that $\hg(\alpha)=\hg(\beta)=(x^*,t^*)$. Then  
  \begin{align*}
      & t^* \in \argmax_t v_\alpha(x^*,t)+\ff_{\lambda}(x^*,t)\\
      \text{and }  & t^* \in \argmax_t v_\beta(x^*,t)+\ff_{\lambda}(x^*,t)
  \end{align*}
  Since $\alpha_j=\beta_j\ \forall j \in [m]$, we can write 
  \begin{align*}
     & v_\alpha(x^*,t)+\ff_{\lambda}(x^*,t)=\Gamma(t)-\alpha_ff(t) ;\\
     & v_\beta(x^*,t)+\ff_{\lambda}(x^*,t)=\Gamma(t)-\beta_ff(t) 
  \end{align*}
  for some $\Gamma: (-b_0,\infty) \mapsto \R$ that is differentiable by our  assumption on $\ff$ and the initial assumptions on $\theta_j,\ j \in [m]$.
 Thus,
  $$\Gamma'(t^*)-\alpha_ff(t^*)=\Gamma'(t^*)-\beta_ff(t^*)=0 \implies \alpha_f=\beta_f$$  
\end{proof}

\begin{corollary}
For all $\lambda \geq 0$ and for every bias function $\ff_\lambda(x,t)$  that is differentiable in $t$, the BUS-VCG mechanism is SDSIC in every budgeting instance that the US-VCG is. 
\end{corollary}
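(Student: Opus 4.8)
The plan is to reproduce the SDSIC argument given for the unbiased US-VCG mechanism, with the role played there by MRS-uniqueness of types now taken over by the injectivity result of Lemma~\ref{lem_biased_unisol}. Since Corollary~\ref{cor_biasDSIC} already establishes that $\hM$ is DSIC, it suffices to upgrade the weak inequality to a strict one whenever an agent deviates. The hypothesis is that $\ff_\lambda$ is differentiable in $t$, so Lemma~\ref{lem_biased_unisol} applies throughout.

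First I would record the analogue of Lemma~\ref{k_prop} for the biased payments. Exactly as there, applying $f$ to the definition of $\hP_i$ cancels the outer $f^{-1}$ and gives $\alpha_{i,f}f(\th+\hP_i)=\alpha_{i,f}f(\th)+p^{\mathsmaller{VCG}}_i+n\ff(\hg(\ba_{-i}))-n\ff(\hg(\ba))$. Substituting this into $u_i(\hg(\ba),\hP_i)=\sum_j\alpha_{i,j}\theta_j(\hx_jB_{\th})-\alpha_{i,f}f(\th+\hP_i)$, then using the Clarke-type form of $p^{\mathsmaller{VCG}}_i$ and mean-dependency (which, as noted, carries over to $\hg$), I obtain
$$u_i(\hM(\vec\alpha))=n\big(v_{\ba}(\hg(\ba))+\ff(\hg(\ba))\big)-h(\vec\alpha_{-i})-n\ff(\hg(\ba_{-i})).$$
This is precisely the identity foreshadowed in the remark after Corollary~\ref{cor_biasDSIC} (the construction mimics US-VCG with $v_i\to v_i+\ff$). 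The last two terms depend only on $\vec\alpha_{-i}$, so agent $i$'s report affects her utility solely through the biased welfare $v_{\ba}+\ff$ evaluated at the chosen outcome.

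Next I would fix $i,\alpha_i,\ba_{-i}$ and suppose $i$ deviates to $\alpha'_i\neq\alpha_i$, shifting the mean to $\ba'\neq\ba$ and the outcome to $\hg(\ba')$. By construction $\hg(\ba)$ maximizes $v_{\ba}+\ff$, so it remains to show $\hg(\ba')$ is \emph{strictly} suboptimal for $v_{\ba}+\ff$. Here I would invoke the core of the proof of Lemma~\ref{lem_biased_unisol}, read as the stronger statement that the location of any maximizer determines its type: the allocation first-order conditions pin down $\gamma_1,\dots,\gamma_m$, and, using differentiability of $\ff$ in $t$, the tax first-order condition pins down $\gamma_f$. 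Hence if $\hg(\ba')$ were also a maximizer of $v_{\ba}+\ff$, it would be a common maximizer of $v_{\ba}+\ff$ and $v_{\ba'}+\ff$, forcing $\ba=\ba'$, a contradiction. Therefore $v_{\ba}(\hg(\ba'))+\ff(\hg(\ba'))<v_{\ba}(\hg(\ba))+\ff(\hg(\ba))$, and substituting into the identity above yields $u_i(\hM(\alpha'_i,\vec\alpha_{-i}))<u_i(\hM(\alpha_i,\vec\alpha_{-i}))$, which is SDSIC.

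The main obstacle I anticipate is exactly this last step: Lemma~\ref{lem_biased_unisol} as literally stated only guarantees $\hg(\ba')\neq\hg(\ba)$, which is too weak, since if $v_{\ba}+\ff$ admitted several maximizers a deviation could in principle land on a different one of equal value and defeat strictness. The resolution is to notice that the proof of that lemma never uses that $(x^*,t^*)$ is the arbitrarily-selected value of $\hg$; it starts from ``$(x^*,t^*)$ is \emph{a} maximizer'' and derives the type, so it genuinely establishes the location-determines-type property, ruling out a shared maximizer for distinct means. This in turn relies on the interior first-order conditions of Lemma~\ref{tract_alpha} (and the standing assumption that $\hx^t$ is interior), which hold under precisely the conditions—Assumption~\ref{asm_inada}—that render the US-VCG SDSIC. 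Thus the hypothesis ``every budgeting instance that the US-VCG is SDSIC in'' supplies exactly what is needed, and no requirement on $\ff_\lambda$ beyond differentiability in $t$ is added.
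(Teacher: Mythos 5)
Your proposal is correct and takes essentially the route the paper intends (the paper states this corollary without a separate proof): DSIC via the biased payment identity behind Corollary~\ref{cor_biasDSIC}, with strictness supplied by Lemma~\ref{lem_biased_unisol} playing exactly the role that MRS-uniqueness of types plays in the unbiased SDSIC theorem. Your further observation---that the lemma's literal conclusion $\hg(\alpha)\neq\hg(\beta)$ is by itself too weak when maximizers are non-unique, but that its proof genuinely establishes the stronger ``location determines type'' property (a common maximizer of $v_\alpha+\ff_\lambda$ and $v_\beta+\ff_\lambda$ forces $\alpha=\beta$), which is what the strictness step actually requires---is a correct and worthwhile sharpening that the paper leaves implicit.
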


\section{Heterogeneous Tax}\label{appsec_heterotax}
In some situations we may prefer a heterogeneous tax distribution, especially in applications outside PB. For example, imposing higher contributions on wealthier countries in joint environmental investments.\footnote{Heterogeneous taxation is obviously appropriate as an income tax policy too, however it is less likely that private income levels are common knowledge in that scenario} We can implement a non-uniform distribution of tax revenues via assigning a parameter $\omega_i$ to each agent $i$, such that each pays $\omega_it$ and $\sum_i\omega_i =n$. The valuation function $v_i(x,t)$ is then reformulated to 
\[v_i(x,t) = \sum_{j=1}^m \alpha_{i,j}\theta_j\big(x_j \cdot (B_0+nt)\big)-\alpha_{i,f}f(\omega_it)\]

We show here how the US-VCG mechanism can still be implemented and with similar IC guarantees. (In particular, note that if $f$ is a power function then $f(\omega_it)=f(\omega_i)f(t)$ and thus we can absorb $\omega_i$ into $\alpha_{i,f}$, effectively changing nothing in the model and thus all results will follow through). As for preferences elicitation, the introduction of $\omega_i$ imposes no limitations on Corollary \ref{cor:pref_elict}. We only need to alter the individual equations from which we derive each agent's type accordingly.
In constructing the US-VCG mechanism, we shall define the payments assignment accordingly as
\[P_i=-t^* +f^{-1}\Big(f(\omega_it^*)+\frac{1}{\alpha_{i,f}}p^{\mathsmaller{VCG}}_i\Big)\]
and then Lemma \ref{k_prop} proceeds through, which is sufficient for DSIC. As for SDSIC, the social welfare is now expressed as
\[\sum_{j=1}^m \ba_j\theta_j\big(x_j \cdot B_t \big)-\sum_i\alpha_{i,f}f(\omega_it)\]
Meaning that mean dependency (Observation \ref{mean_obs} is lost. Nevertheless, Lemma \ref{lemma_bs} is built on the fact that a misreport of any agent necessarily shifts the outcome $g(\ba)$ to a sub-optimal point, thus bringing to a loss in her overall utility. Here we still have that, if the social optimum is characterized by MRS conditions, then
\[n\ba_j\theta'_j(x^*_jB_{t^*})=\sum_i\alpha_{i,f}f'(\omega_it^*)\cdot \omega_i\ \forall j\]
and thus if agent $i$ falsely report some $\alpha'_i$, then social optimum inevitably changes, and in particular to a sub-optimal point w.r.t the true social welfare. To see that, assume that w.l.o.g. $i$ reports $\alpha'_{i,f}>\alpha_{i,f}$. Then new (manipulated) type profile can still admit the above equations only if $\ba_j$ increases for all $j$, which is impossible.

\end{document}